\documentclass[11pt]{article}

% Begin generic packages
\usepackage[left=1in, right=1in, top=1in, bottom=1in]{geometry}
\usepackage[group-separator={,}]{siunitx}
\usepackage{amsmath,amsfonts,amssymb, amsthm}
\usepackage{graphicx}
\usepackage{url}
\usepackage{mathtools}
\usepackage[textsize=tiny,textwidth=1cm,shadow]{todonotes}
\usepackage{thmtools}
\usepackage{enumitem}
\usepackage{pdflscape}
\usepackage{verbatim}
\usepackage{nccmath}
\usepackage[utf8]{inputenc}
\usepackage{array}
\definecolor{MyBlue}{rgb}{0.12, 0.12, 0.76}
\usepackage[colorlinks,allcolors=MyBlue]{hyperref}
%\usepackage{natbib}
% End generic packages

% Begin Algorithm blocks
\usepackage{algorithm}
\usepackage{algpseudocode}
\usepackage{algorithmicx}
  % switches ``for all'' to ``for each''
\let\oldReturn\Return
\renewcommand{\Return}{\State\oldReturn}
\algtext*{EndWhile}% Remove "end while" text
\algtext*{EndIf}% Remove "end if" text
\algtext*{EndForAll}% Remove "end for" text
\algtext*{EndFor}% Remove "end for" text
\algtext*{EndFunction}% Remove "end function" text
% End Algorithm blocks

% Other stuff
% for checkmark and xmarks
\usepackage{pifont}% http://ctan.org/pkg/pifont
\newcommand{\cmark}{\ding{51}}%
\newcommand{\xmark}{\ding{55}}%

% Begin tikz
\usepackage{pgf}
\usepackage{tikz}
\usetikzlibrary{shadows,arrows,decorations,decorations.shapes,backgrounds,shapes,snakes,automata,fit,petri,shapes.multipart,calc,positioning,shapes.geometric,graphs,graphs.standard}
% End tikz

% for thick table lines
\makeatletter
\newcommand{\thickhline}{%
    \noalign {\ifnum 0=`}\fi \hrule height 1.4pt
    \futurelet \reserved@a \@xhline
}
\newcolumntype{"}{@{\hskip\tabcolsep\vrule width 1.4pt\hskip\tabcolsep}}
\makeatother
% end for thick table lines

\allowdisplaybreaks[1] % let align break over multiple pages

% Theorem stuff
\newtheorem{theorem}{Theorem}[section]
\newtheorem{lemma}{Lemma}[section]

\newtheorem{definition}{Definition}[section]

% Begin other new commands
\DeclareMathOperator*{\argmax}{arg\,max}
\DeclareMathOperator*{\argmin}{arg\,min}

\newenvironment{proofsketch}{%
  \proof}{\endproof}
\newcommand{\X}{\mathcal{X}}
\newcommand{\Y}{\mathcal{Y}}
% End other new commands

% Citation new commands: I usd \citeN and \shortcite commands when writing,
% but those are apparently not defined in the SODA style class so I defined them
% to be just normal citation
\newcommand\citeN[1]{\cite{#1}}
\newcommand\shortcite[1]{\cite{#1}}
% end citation new commands

%%%%%%%%%%%%% end packages %%%%%%%%%%%%%%%

\begin{document}

\title{Almost Envy-Freeness with General Valuations}  
\author{Benjamin Plaut \and Tim Roughgarden}
\date{\{bplaut, tim\}@cs.stanford.edu\\ Stanford University}

\maketitle

\begin{abstract}
The goal of fair division is to distribute resources among competing players in a ``fair" way. Envy-freeness is the most extensively studied fairness notion in fair division. Envy-free allocations do not always exist with indivisible goods, motivating the study of relaxed versions of envy-freeness. We study the \emph{envy-freeness up to any good} (EFX) property, which states that no player prefers the bundle of another player following the removal of any single good, and prove the first general results about this property. We use the leximin solution to show existence of EFX allocations in several contexts, sometimes in conjunction with Pareto optimality. For two players with valuations obeying a mild assumption, one of these results provides stronger guarantees than the currently deployed algorithm on Spliddit, a popular fair division website. Unfortunately, finding the leximin solution can require exponential time. We show that this is necessary by proving an exponential lower bound on the number of value queries needed to identify an EFX allocation, even for two players with identical valuations. We consider both additive and more general valuations, and our work suggests that there is a rich landscape of problems to explore in the fair division of indivisible goods with different classes of player valuations.
\end{abstract}

\section{Introduction}\label{sec:intro}

Fair division has a long history, with the earliest known
mechanism for solving the problem dating back to the Bible. No, not
war; the cut-and-choose protocol. When Abraham and Lot first arrive in
the land of Canaan, Abraham suggests that they divide the land between
them. Abraham partitions the land into two parts and lets Lot choose
which part he would like to keep.

What makes this procedure fair? By dividing the land into two pieces
he values equally, Abraham can ensure that he will not envy Lot's
piece, regardless of which piece Lot takes. Since Lot presumably
chooses his favorite piece, he will not envy Abraham. This means that
the cut-and-choose protocol guarantees an \emph{envy-free} allocation,
meaning that each player likes their allocation at least as much as any
other player's allocation.

The cut-and-choose protocol is defined for two players and
\emph{divisible} goods,
meaning that each good can be divided into arbitrarily small pieces. In this paper, we consider the setting of \emph{indivisible} goods,
meaning that the resource in question is a set of discrete goods, each
of which must be wholly allocated to a single player. Unfortunately,
envy-freeness cannot be guaranteed in this setting. We see this even
with two players and a single good: one player must receive the good,
and the other will surely be envious.

Consequently, other notions of fairness are needed. Budish~\shortcite{Budish11:Combinatorial} introduced the concept of \emph{envy-freeness up to one good} (EF1). In an EF1 allocation, player $i$ may envy player $j$, but the envy could be eliminated by removing a single good from player $j$'s allocation. The good is not actually removed; this is a thought experiment used in the definition of envy-freeness up to one good. An EF1 allocation always exists, and can be computed in polynomial time ~\cite{Lipton04:Approximately}.\footnote{The algorithm of \citeN{Lipton04:Approximately} was originally published in 2004 with a different property in mind, as the EF1 property was not proposed until 2011 by \citeN{Budish11:Combinatorial}.}

Caragiannis et al.~\shortcite{Caragiannis16:Nash} proposed another
fairness criterion, one which is strictly stronger than EF1, but
strictly weaker than full envy-freeness. An allocation is
\emph{envy-free up to any good} (EFX) if for any $i,j$ where player $i$
envies player $j$, removing \emph{any} good from $j$'s allocation
would eliminate $i$'s envy. Do EFX allocations always exist? 
This paper takes the first steps toward answering this question.

\subsection{Applications}\label{sec:applications}
The non-profit website Spliddit (www.spliddit.org) is one of the most promising applications of fair division theory~\cite{Goldman15:Spliddit}. Spliddit implements mechanisms for several fair division problems: rent division~\cite{Gal16:Fairest}, taxi fare division, credit assignment (i.e., for a group project or academic paper)~\cite{deClippel08:Impartial}, task distribution~\cite{Pazner78:Egalitarian,Budish13:Designing}, and distribution of indivisible goods. These mechanisms are available for public use at no cost: users can simply log in, define what is to be divided, and enter their valuations. Since the site's launch in November 2014, there have been over 60,000 users~\cite{Caragiannis16:Nash}. The company Fair Outcomes, Inc. (http://www.fairoutcomes.com) offers fair division services in a similar vein.

Another compelling fair division application is allocating courses among students. Students have preferences regarding which courses they would like to take, but each course has a limited capacity. The Wharton School at the University of Pennsylvania now uses a theoretically-grounded mechanism titled Course Match to fairly allocate courses among MBA students, which has led to demonstrably higher satisfaction and perceived fairness among students \cite{Budish11:Combinatorial,Budish16:Course}.

A major selling point of these services is that their solutions are \emph{guaranteed} to satisfy certain fairness properties. For example, in the case of distribution of indivisible goods on Spliddit, users \emph{know} that the solution will be envy-free up to one good and Pareto optimal~\cite{Caragiannis16:Nash}. Our hope is that further work in the area of fair division of indivisible goods will allow user-facing services like Spliddit, Fair Outcomes, Inc., and Course Match to offer users even stronger fairness guarantees.

%%%%%%%%%%%%%%%%%%%%%%%%%%%%%%%%%%%%%%%%
%%%%%%%%%%%%%%%%%%%%%%%%%%%%%%%%%%%%%%%%
\subsection{Prior work}\label{sec:prior}

A detailed survey of the fair division literature is outside the scope
of this paper, and we discuss only the works most closely related to
ours.  See e.g.,~\cite{Brams96:Cake,Moulin03:Fair,Brandt16:Handbook}
for further background.

Lipton et al.~\shortcite{Lipton04:Approximately} gave an algorithm whose solution is guaranteed to be EF1 for general valuations. By a {\em valuation}, we mean a function specifying a player's value for each bundle she might receive. By {\em general}, we mean that the only assumptions imposed on valuation functions are normalization (the value of the empty set is 0), and monotonicity (adding goods to a bundle cannot make it worse).

Their algorithm allocates the goods in rounds and ensures that the partial allocation at the end of each round is EF1. At the beginning of each round, an unenvied player is identified; if no such player exists, there must be a cycle of envy, and bundles can be swapped along such cycles until no cycles of envy remain. An arbitrary good is then given to this unenvied player. This player may become envied after receiving this good, but the envy could be eliminated by removing the good she just received (since she was unenvied prior to receiving that good). This ensures that whenever player $i$ envies player $j$, the envy could be eliminated by removing the most recent good given to player $j$, so the resulting allocation is EF1.

Caragiannis et al.~\shortcite{Caragiannis16:Nash} studied the case where valuations are additive, meaning that each player's value for a set of goods is the sum of her values for the individual goods. They showed that the allocation maximizing the product of players' utilities (the maximum Nash welfare solution) is guaranteed to be both EF1 and Pareto optimal, assuming valuations are additive. In contrast, the algorithm of Lipton et al.~\shortcite{Lipton04:Approximately} does not guarantee a Pareto optimal allocation.\footnote{Suppose there are two players with additive valuations $v_1, v_2$ over three goods, $a, b, c$, where $v_1(\{a\}) = 3, v_1(\{b\}) = 2, v_1(\{c\}) = 4$ and $v_2(\{a\}) = 4, v_2(\{b\}) = 3, v_2(\{c\}) = 2$. The algorithm of Lipton et al.~\shortcite{Lipton04:Approximately} could first allocate $a$ to player 1, then $b$ to player 2, and finally $c$ also to player 2. The resulting allocation is EF1, but giving $\{c\}$ to player 1 and $\{a, b\}$ to player 2 would be better for both players.}

Caragiannis et al.~\shortcite{Caragiannis16:Nash} also proposed the fairness criterion of envy-freeness up to any good, and left the possible existence of EFX allocations as an open problem. We are not aware of any results regarding EFX allocations prior to this work.

We briefly describe several other models for fair division of indivisible goods. Brams et al.~\shortcite{Brams17:Maximin} and Aziz et al.~\shortcite{Aziz15:Fair} assumed that players express only an ordinal ranking over the goods, as opposed to exact values. Certain tasks become easier in this domain, but important information is arguably lost by only considering rankings. Randomized allocations have also been considered (e.g., \citeN{Bogomolnaia04:Random, Budish13:Designing}), but this is not suitable for the applications we are most interested in, where the outcome is only used once. Dickerson et al.~\shortcite{Dickerson14:Computational} took a probabilistic approach, and showed that envy-free allocations are likely to exist when the number of goods is at least a logarithmic factor larger than the number of players. While illuminating, this does not directly bear on our goal: determining when fair allocations are guaranteed to exist, and how they can be computed.

%%%%%%%%%%%%%%%%%%%%%%%%%%%%%%%%%%%%%%%%
%%%%%%%%%%%%%%%%%%%%%%%%%%%%%%%%%%%%%%%%
\subsection{Our contributions}

\renewcommand{\arraystretch}{1.6}
\begin{table*}
\resizebox{6.5 in}{!}{ %! means scale proportionately
\begin{tabular}{ c"c|c|c|c|c} 
  & $n=2$, add & $n=2$, gen & $n\geq2$, gen + id & $n>2$, add & $n>2$, gen \\ 
  \thickhline
 $\frac{1}{2}$ EFX & \cmark $\ $ (Thm.~\ref{thm:two-players}) 
 		& \cmark $\ $ (Thm.~\ref{thm:two-players}) 
		& \cmark $\ $ (Thm.~\ref{thm:general-identical}) 
		& \cmark $\ $(Thm.~\ref{thm:apx-upper}) & \textbf{?} \\
 \hline
 EFX & \cmark $\ $ (Thm.~\ref{thm:two-players}) & \cmark $\ $(Thm.~\ref{thm:two-players})
 	 & \cmark $\ $(Thm.~\ref{thm:general-identical}) & \textbf{?} & \textbf{?}\\
 \hline
 EFX + PO (nmu) & \cmark $\ $(Thm.~\ref{thm:two-players-po}) 
 	& \xmark $\ $(Thm.~\ref{thm:po-counterexample})
 	& \cmark $\ $(Thm.~\ref{thm:gen-id-po}) & \textbf{?} & \xmark $\ $ (Thm.~\ref{thm:po-counterexample})\\ 
\end{tabular}
%\begin{tabular}{ c"c|c|c|c|c} 
%  & $n=2$, add & $n=2$, gen & $n>2$, gen + id & $n>2$, add & $n>2$, gen \\ 
%  \thickhline
% $\frac{1}{2}$ EFX & \cmark & \cmark & \cmark & \cmark $\ $(Thm.~\ref{thm:apx-upper}) & \textbf{?} \\
% \hline
% EFX & \cmark & \cmark $\ $(Thm.~\ref{thm:two-players})
% 	 & \cmark $\ $(Thm.~\ref{thm:general-identical}) & \textbf{?} & \textbf{?}\\
% \hline
% EFX + PO (nmu) & \cmark $\ $(Thm.~\ref{thm:two-players-po}) 
% 	& \xmark $\ $(Thm.~\ref{thm:po-counterexample})
% 	& \cmark $\ $(Thm.~\ref{thm:gen-id-po}) & \textbf{?} & \xmark \\ 
%\end{tabular}
}
 \vspace{.15 in}
\caption{A summary of our existence results. Here $n$ is the number of players. ``add", ``gen", ``id", and ``nmu" refer to additive valuations, general valuations, identical valuations, and nonzero marginal utility, respectively. ``\cmark" indicates that the type of allocation specified by the row is guaranteed to exist in the setting specified by the column, while ``\xmark" indicates that we give a counterexample, and ``\textbf{?}" indicates an open question.}
\label{tbl:results-table}
\end{table*}
\renewcommand{\arraystretch}{1}

We consider the EFX property in a variety of contexts; our main existence results are given in Table~\ref{tbl:results-table}. 

\subsubsection{Exponential query complexity lower bound}

Section~\ref{sec:lower-bound} presents our most technically involved result: an exponential lower bound
on the number of value queries required by a deterministic algorithm
to find an EFX allocation.
This is done via a reduction from local
search on a class of graphs known as the Odd graphs, for which we
prove an exponential lower bound. In combination with results due to Dinh and Russell~\shortcite{Dinh10:Quantum} and
Valencia-Pabon and Vera~\shortcite{Valencia-Pabon05:Diameter}, this yields an analogous
exponential lower bound for randomized algorithms. Dobzinski et al.~\shortcite{Dobzinski15:Complexity} also use a local search reduction to prove a lower bound on the number of value queries required to find a certain type of equilibrium in a simultaneous second price auction, for bidders with XOS (i.e., fractionally subadditive) valuations. We hope that this lower bound technique will be useful in other contexts as well.

Our lower bound holds even for two players with identical submodular
valuations. In stark contrast, the algorithm of Lipton et
al.~\shortcite{Lipton04:Approximately} finds an EF1 allocation in
polynomial time for general and possibly distinct valuations, and for
any number of players. This suggests that EFX is indeed a
significantly stronger fairness guarantee than EF1, and deserves
further study.

\subsubsection{Positive EFX results}
Many of our positive results rely on the leximin solution. The leximin (a portmanteau of ``lexicographic" and ``maximin") solution selects the allocation which maximizes the minimum utility; then, if there are multiple allocations which achieve that minimum utility, it chooses among those the allocation which maximizes the second minimum utility, and so on. The leximin solution was developed as a metric of fairness in and of itself~\cite{Rawls71:Theory,Sen76:Welfare,Sen77:Social}, and has been used before in fair division, though typically for randomized allocations (e.g.~\cite{Bogomolnaia04:Random}).

In Section~\ref{sec:gen-id}, we show that when players have general but identical valuations, a modification of the leximin solution is EFX. By {\em identical valuations}, we mean that all players have the same valuation. This result also yields a cut-and-choose-based protocol for two players with general and possibly distinct valuations that is guaranteed to produce an EFX allocation. This is consistent with our exponential lower bound, as it is well known that finding the leximin solution can require exponential time for general valuations (e.g. ~\cite{Dobzinski13:Communication}).\footnote{We mention that Appendix Section~\ref{sec:same-ranking} shows that for two players with additive valuations, an EFX allocation can be computed in polynomial time by a different method.}

These positive results
contrast with the state-of-the-art for possibly distinct valuations and three or more players,
where even for additive valuations, 
the guaranteed existence of an EFX allocation remains an open question
(``despite significant effort,'' according to
\citeN{Caragiannis16:Nash}).

\subsubsection{EFX and Pareto optimality}

In Section~\ref{sec:po}, we consider Pareto optimality. In economics, an outcome is Pareto optimal (PO) if there is no way to make one player better off without making another player worse off. We show that even in simple cases, it is possible that no EFX allocation is also PO. However, these cases rely on a player having zero value for a good being added to her bundle. 

We propose the assumption that adding a good to a player's bundle strictly improves the player's value for that bundle, and refer to this as ``nonzero marginal utility". We view this as quite a weak assumption: especially in real-world settings, one might expect a player to always prefer to have a good than not.

Under this assumption, we show that for two players with additive valuations, the leximin solution is both EFX and PO.\footnote{When discussing the leximin solution for players with different valuations, we assume that each player's value for the entire set of goods is the same: were this not true, we could simply rescale the valuations as needed and use the leximin solution over the rescaled valuations.}  We also show that for any number of players with general but identical valuations, the leximin solution is EFX and PO. Finally, we give a counterexample where for two players with distinct general valuations, no EFX allocation is PO (even assuming nonzero marginal utility).

\begin{figure}
\centering
\begin{tabular}{ c|ccc} 
& a & b & c\\
\hline
player 1 & 5 & 3 & 1 \\
%\hline
player 2 & 5 & 1 & 3\\
%\hline
\end{tabular}
\caption{An instance where our algorithm provides stronger guarantees than the algorithm currently deployed on Spliddit. Here two players have additive valuations over three goods, $a,b,$ and $c$. By symmetry, assume $a$ is given to player 1. Spliddit selects the maximum Nash welfare solution, which gives $\{a,b\}$ to player 1 and $\{c\}$ to player 2. This is EF1 and PO, but not EFX, since player 2 would still envy player 1 after the removal of $b$. Our algorithm returns the unique (up to symmetry) EFX and PO allocation, which gives $\{a\}$ to player 1 and $\{b,c\}$ to player 2.} 
\label{fig:better-than-spliddit}
\end{figure}

\subsubsection{Comparison to Spliddit in the two player case}\label{sec:outperforming}
Perhaps of most practical importance is our result that for two players with additive valuations and nonzero marginal utility, the leximin solution is both EFX and PO. This provides stronger guarantees than the currently deployed algorithm on Spliddit, which selects the maximum Nash welfare solution, and only guarantees an allocation which is EF1 and PO.\footnote{Spliddit only considers additive valuations. This is because each user need only report $m$ values to specify an entire additive valuation; in contrast, an exponential number of values can be required to specify a general valuation.}

%~\cite{Nguyen13:Survey} and
%Also, we describe in Appendix Section~\ref{sec:same-ranking} how for two players with additive valuations, an EFX allocation can be computed in polynomial time by a different method.

This manifests even in simple examples, such as the instance given by Figure~\ref{fig:better-than-spliddit}. By symmetry, assume that player 1 receives good $a$. The maximum Nash welfare solution selects the allocation which maximizes the product of utilities: in this case, that would give player 1 $a$ and $b$, and player 2 only $c$. This allocation is EF1, because player 2 would not envy player 1 if good $a$ were removed from player 1's bundle. However, the allocation is not EFX, because player 2 \emph{would} envy player 1 even if good $b$ were removed from player 1's bundle. 

In contrast, our algorithm returns the unique (up to symmetry) EFX and PO allocation, which gives $a$ to player 1 and $b$ and $c$ to player 2. We suggest that this is also the more intuitively fair allocation. Furthermore, the assumption of nonzero marginal utility seems especially reasonable in the case of two players with additive valuations: if a player is truly indifferent to some good, one could imagine simply giving that good to the other player and excluding it from the fair division process entirely.\footnote{A vindictive player might object to this: she may be unhappy with the other player receiving a good ``for free", even if she has zero value for the good herself. We argue that this constitutes having nonzero value for the good, and that a player has zero value for a good only if she is truly indifferent.} We do note that Spliddit's current algorithm does not require the assumption of nonzero marginal utility, however. Neither approach has a clear advantage in terms of computational efficiency: both the leximin solution and maximum Nash welfare solution are NP-hard to compute, even for two players with additive valuations.\footnote{For two players with identical additive valuations, the leximin solution gives each player half the total value if and only if the valuation is a ``yes" instance of the partition problem. The reduction is less obvious for maximum Nash welfare; see e.g. \cite{Ramezani10:Nash}.}

Finally, in Section~\ref{sec:apx-algo} we propose an approximate version of EFX, and show that a $\frac{1}{2}$-EFX allocation always exists when players have subadditive (possibly distinct) valuations.

% Commenting out this section for the moment because I expect this result to end up in the appendix.
%Finally, in Section~\ref{sec:same-ranking} we examine a setting where
%an EFX allocation can be computed in polynomial time, namely when valuations are additive and
%there is a common ordering of the goods across all players. In light of our lower bound results, it is not surprising that such a strong assumption is necessary for polynomial-time positive results.

%Somewhat strong assumptions are needed to achieve this, which is unsurprising, given that our lower bound applies even for two players with identical submodular valuations. Specifically, we assume that valuations are additive, and that all players agree on the relative ordering of the goods (but not necessarily on the exact values). There are still practical contexts where this makes sense, such as with expensive paintings. It is possible that experts generally agree on which paintings are more valuable than others, but disagree on the exact values of the paintings.

More broadly, our results span additive, submodular, subadditive, and
general valuations, and identify separations between these classes
from a fair division perspective. For example, we show that assuming
nonzero marginal utility and two players with additive valuations, an
allocation which is both EFX and PO is guaranteed to exist, while
there is a counterexample for two players with
general valuations.
Such valuation
classes have already played a central role in the development of
algorithmic mechanism design over the past 15 years
(e.g.~\cite{Lehmann01:Auctions}), and they may well prove equally
important in the fair division of indivisible goods.

%Probably leaving out this paragraph
%The rest of the paper is structured as follows. In Section~\ref{sec:model}, we formally introduce the model and the definitions we will use throughout the paper. In Section~\ref{sec:lower-bound}, we prove that finding an EFX allocation requires an exponential number of queries in the worst case. In Section~\ref{sec:gen-id}, we show how a modification of the leximin solution can be used to show the existence of EFX allocations in two contexts. In Section~\ref{sec:po}, we consider when EFX and Pareto optimality can be simultaneously satisfied. Finally, in Section~\ref{sec:apx-algo}, we give an algorithm for satisfying an approximate version of EFX for any number of players with subadditive valuations.

%%% Local Variables:
%%% mode: latex
%%% TeX-master: t
%%% End:

\section{Model}\label{sec:model}

Let $[k]$ denote the set $\{1, 2, ...k\}$. Let $N = [n]$ be the set of
players and $M$ be the set of goods, where $m = |M|$. We assume throughout the
paper that goods are indivisible: a good may not be split among
multiple players. Each player $i$ has a value for each subset of $M$,
specified as a \emph{valuation function}
$v_i: 2^M \to \mathbb{R}_{\geq 0}$. 
Throughout the paper, we assume 
normalization, meaning that $v_i(\emptyset) = 0$,
and monotonicity (a.k.a.\ ``free disposal"), meaning that
$v_i(S) \le v_i(T)$ whenever $S \subseteq T$.
%monotone: for all $A, B \subseteq M$, $v_i(A \cup B) \geq v_i(A)$. 
When we refer to ``general valuations,'' 
we mean the set of all valuations that satisfy these two properties.

A special type of valuation is an {\em additive} valuation, where
$v_i(S) = \sum_{g \in S} v_i(\{g\})$ for every $S \subseteq M$. Thus $m$
parameters (one for each good) implicitly specify the $2^m$ values of the valuation. The majority of the literature on computational fair division, with both
divisible and indivisible goods, focuses on additive valuations.  There
are also many interesting subclasses of valuations that generalize
additive valuations.  For example, our main lower bound result
(Theorem~\ref{thm:efx-complexity}) holds for {\em submodular}
valuations, which are valuation functions~$v$ that satisfy
``diminishing returns'':
\[
v(S \cup \{x\}) - v(S) \geq v(T \cup \{x\}) - v(T) 
\]
for every $S \subseteq T$ and $x \notin T$. One of our positive results, Theorem~\ref{thm:apx-upper}, will hold for \emph{subadditive} valuations. A valuation $v$ is subadditive if
\[
v(S) + v(T) \geq v(S \cup T)
\]
Every additive valuation is submodular, and every submodular valuation is subadditive.

%The following relationship is well known:
%\[
%(v \textnormal{ is additive}) \implies (v \textnormal{ is submodular}) \implies (v \textnormal{ is subadditive})
%\]
%We say that a valuation $v_i$ is \emph{additive} if
%$v_i(S) = \sum_{g \in S} v_i(\{g\})$.
%We say that valuations are \emph{identical} if $v_i = v$ for all
%$i$. 

An \emph{allocation} $A$ is a partition of $M$ into $n$ disjoint
subsets, $(A_1, A_2...A_n)$, where $A_i$ is 
the {\em bundle} given to player $i$. 
%We
%refer to $A_i$ as the \emph{bundle} given to player $i$. 
We refer to
an allocation as \emph{partial} if only a subset $S\subseteq M$ of the
goods are allocated. When ``partial'' is omitted, it means that
all goods have been allocated.

Our objective is to find a ``fair'' allocation. Many different notions
of fairness have been studied, with envy-freeness being one of the
most prominent (see e.g.,~\cite{Brams96:Cake,Moulin03:Fair,Brandt16:Handbook}
for further background).

\begin{definition}
An allocation $A$ is \emph{envy-free} if for all $i$ and $j$,
\begin{align*}
v_i(A_i) \geq v_i(A_j).
\end{align*}
\end{definition}

We say that {\em $i$ envies $j$} if $v_i(A_i) < v_i(A_j)$. Unfortunately, an envy-free allocation does not always exist in the context of indivisible goods. This is clear even with two players and one good: the player who does not receive the good will envy the other, assuming they both have nonzero value for the good. Furthermore, determining whether an envy-free allocation exists is NP-complete~\cite{Bouveret08:Efficiency}: with two players and identical additive valuations, this is the partition problem.

Consequently, a relaxed version of envy-freeness has been studied, called envy-freeness up to one good~\cite{Budish11:Combinatorial,Caragiannis16:Nash}.

\begin{definition}
An allocation $A$ is \emph{envy-free up to one good} (EF1) if for all $i,j$ where $i$ envies $j$,\footnote{The ``where $i$ envies $j$" clause is necessary, or the condition would technically fail when $A_j = \emptyset$. This is not an issue for the definition of EFX, however.}
\begin{align*}
\exists\ g\in A_j\ \text{such that}\ v_i(A_i) \geq v_i(A_j \backslash \{g\}).
\end{align*}
\end{definition}

That is, $i$ may envy $j$, but there is a good in $j$'s bundle such
that if it were removed, $i$ would no longer envy $j$. An EF1 allocation always exists, and can be computed in polynomial time, even for general valuations~\cite{Lipton04:Approximately}. 

Furthermore, Caragiannis et al.~\shortcite{Caragiannis16:Nash} showed
that for additive valuations, an allocation which is both EF1 and
Pareto optimal always exists; in particular, the maximum Nash welfare
solution \cite{Nash50:Bargaining,Kaneko79:Nash,Ramezani10:Nash}
is guaranteed to satisfy both properties. Caragiannis et al.~\shortcite{Caragiannis16:Nash} also proposed a new fairness notion, one which is strictly weaker than envy-freeness, but strictly stronger than EF1.

\begin{definition}
An allocation $A$ is \emph{envy-free up to any good} (EFX) if, for all $i,j$,
\begin{align*}
\forall g\in A_j,\ \ v_i(A_i) \geq v_i(A_j \backslash \{g\}).
\end{align*}
\end{definition}
In words, EFX demands that removing \emph{any} good from $j$'s bundle
would guarantee that $i$ does not envy $j$. Next, we define the standard notion of Pareto optimality.
\begin{definition}
An allocation $A$ is Pareto optimal if there is no other allocation $B$ where
\begin{align*}
\forall i \in [n],\ \ v_i(B_i) \geq&\ v_i(A_i),\ \ \textnormal{and}\\
\exists j \in [n],\ \ v_j(B_j) >&\ v_j(A_j)
\end{align*}
\end{definition}

Finally, we define an approximate version of EFX. In Section~\ref{sec:apx-algo}, we will give an algorithm which produces a $\frac{1}{2}$-EFX allocation for any number of players with subadditive valuations.
\begin{definition}
An allocation $A$ is $c$-EFX if for all $i,j$,
\begin{align*}
\forall g\in A_j,\ \ v_i(A_i) \geq c \cdot v_i(A_j \backslash \{g\})
\end{align*}
where $0 \leq c \leq 1$.
\end{definition}

\section{Query complexity lower bound}\label{sec:lower-bound}

We begin with our most technically involved result: an exponential lower bound on the number of value queries required by any deterministic algorithm to compute an EFX allocation. Our lower bound will hold even for two players, and
even if their valuations are restricted to be identical and
submodular.\footnote{There is of course no hope for an exponential
  lower bound for additive valuations, since $m$ value queries
  suffice to reconstruct an entire additive valuation.}
%  Recall that a valuation $v$ is {\em submodular} if whenever
%$X \subseteq Y\subseteq M$ and $x \in M\backslash Y$,
%$v(X \cup \{x\}) - v(X) \geq v(Y \cup \{x\}) - v(Y)$. In words, a
%submodular valuation gives diminishing marginal value for a given good
%as the set it is being added to grows.
%\TODO{probably move submodular sentence}

In Section~\ref{sec:local-search-problem}, we introduce the local search problem that we will reduce from. In Section~\ref{sec:efx-complexity}, we prove that finding an EFX allocation is at least as hard as solving local search on a particular class of graphs. In Section~\ref{sec:local-search-complexity}, we show that any deterministic algorithm which finds a local maximum on this class of graphs requires an exponential number of queries. This will imply that the problem of finding an EFX allocation has exponential query complexity as well. Finally, in Section~\ref{sec:rand-complexity}, we extend this lower bound to randomized algorithms.

\subsection{Local search}\label{sec:local-search-problem}

The \textsc{Local Search} problem takes as input an undirected graph
$G = (V, E)$ and an oracle function $f: V \to \mathbb{R}$. The goal
is to find a local maximum $a\in V$, where $f(a) \geq f(b)$ for all
$(a,b) \in E$. Since there exists a global maximum, there must exist
at least one local maximum. We are interested in the number of queries
required to find a local maximum, where a query to $a \in V$ returns
$f(a)$. Queries are the only method by which an algorithm can discover
information about $f$ (i.e., it is given as a ``black box'').
All other operations are free in this model---we count only the number of
queries. Queries can be adaptive, with an algorithm's choice of which
vertex to query next depending on the results of previous queries.

For a graph $G$, the \emph{deterministic query complexity} of
\textsc{Local Search} on $G$ is the minimum number of queries required
by any deterministic algorithm to solve \textsc{Local Search} on
$G$ (for a worst-case choice of~$f$).
Formally, let $D[LS(G)]$ be the deterministic query complexity of
\textsc{Local Search} on $G$. Then $D[LS(G)] = \min\limits_\Gamma \max\limits_f T_{LS}(G, f, \Gamma)$,
where the minimizer ranges over all deterministic algorithms $\Gamma$,
the maximizer ranges over all functions $f: V \to \mathbb{R}$, and
$T_{LS}(G, f, \Gamma)$ is the number of queries used by the algorithm
$\Gamma$ to find a local maximum of $f$ on $G$.

%\subsubsection{Kneser graphs}
The difficulty of local search depends on the graph $G$. The \emph{Kneser graph} $K(n, k)$ is the graph whose vertices are the
size $k$ subsets of $[n]$, where two vertices are adjacent if and only
if their corresponding subsets are disjoint. The star of our lower bound argument is the {\em Odd graph}, $K(2k + 1, k)$. The most famous Odd
graph is the Petersen graph (Figure~\ref{fig:efx}).

\subsection{Local search on $K(2k+1, k)$ reduces to finding an EFX allocation}\label{sec:efx-complexity}

The \textsc{EFX Allocation} problem takes as input the set of players
$N = [n]$, the set of goods $M = [m]$, and a list of valuations
$(v_1, v_2,\ldots,v_n)$. In general, the goal is to find an EFX
allocation, or determine that none exists. 
The only method by which an algorithm can discover
information about the $v_i$'s is through {\em value queries},
where upon querying the valuation $v_i$ at the set $S$, the algorithm
learns $v_i(S)$.
%$f$ (i.e., it is given as a ``black box'').
%Since it is unknown whether an EFX allocation always exists for
%non-identical %valuations, this problem is not quite well-defined. 
Our lower bound applies even for a version of the problem that we will show to be total (Theorem~\ref{thm:general-identical}), meaning that
an EFX allocation is guaranteed to exist.

Consider the special case of the \textsc{EFX Allocation} problem
where all valuations are identical.
We will show in Section~\ref{sec:gen-id} that an EFX allocation is
guaranteed to exist in this setting.
We can define the deterministic query complexity $D[EFX_{id}(n,m)]$ as
the minimum number of queries required to find an EFX allocation for a
set of players $N =[n]$ and a set of goods $M=[m]$, given a single
valuation $v$ where an EFX allocation is known to exist. Formally, $D[EFX_{id}(n,m)] = \min\limits_\Gamma \max\limits_v T_{EFX}(N, M, v,
\Gamma)$,
where $T_{EFX}(N,M,v,\Gamma)$ denotes the number of queries required by the
algorithm $\Gamma$ to find an EFX allocation for players $N$ with valuation $v$ over goods
$M$. Since this is a special case of the general \textsc{EFX Allocation} problem, the deterministic query complexity of the general \textsc{EFX allocation} problem is at least  $D[EFX_{id}(n,m)]$.

We now state and prove our main result of Section~\ref{sec:efx-complexity}. We will use $M = [2k+1]$ for some integer $k$.

\begin{samepage}
\begin{theorem}\label{thm:efx-reduction} 
The deterministic query complexity of the \textsc{EFX Allocation}
problem satisfies
\begin{align*}
D[EFX_{id}(2,2k+1)] \geq D[LS(K(2k+1, k))],
\end{align*}
even for two players with identical submodular valuations.
\end{theorem}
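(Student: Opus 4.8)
The plan is to give a query-preserving reduction from \textsc{Local Search} on the Odd graph $K(2k+1,k)$ to \textsc{EFX Allocation} for two players. The vertices $V$ of $K(2k+1,k)$ are the size-$k$ subsets of $M=[2k+1]$. Given an oracle $f\colon V\to\mathbb{R}$, I will construct a single monotone submodular valuation $v=v_f$ over $M$ such that (i) an EFX allocation of $v$ exists, and (ii) the size-$k$ bundle of \emph{every} EFX allocation of $v$ is a local maximum of $f$. Given this, any deterministic \textsc{EFX Allocation} algorithm $\Gamma$ becomes a \textsc{Local Search} algorithm: run $\Gamma$ on the instance with two identical valuations $v_f$; answer a value query $v(S)$ with no $f$-query when $|S|\neq k$ and with a single query $f(S)$ when $|S|=k$; when $\Gamma$ halts with an allocation, output whichever of its two bundles has size $k$. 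This uses at most as many $f$-queries as $\Gamma$ uses value queries, so the derived algorithm satisfies $\max_f T_{LS}\le\max_v T_{EFX}$; minimizing over algorithms on both sides gives $D[LS(K(2k+1,k))]\le D[EFX_{id}(2,2k+1)]$, and since every $v_f$ is submodular the inequality survives restricting attention to identical submodular valuations.

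For the construction, fix a strictly increasing, strictly concave profile $h\colon\{0,1,\dots,2k+1\}\to\mathbb{R}_{\ge0}$ with $h(0)=0$ and a strictly increasing map $\psi\colon\mathbb{R}\to[0,\delta]$ for a small constant $\delta>0$, and set $v_f(S)=h(|S|)$ when $|S|\neq k$ and $v_f(S)=h(k)+\psi(f(S))$ when $|S|=k$. The combinatorial fact that makes this work is the following property of the Odd graph: for a size-$k$ set $a$ the complement $\bar a:=M\setminus a$ has size $k+1$, and the neighbours of $a$ in $K(2k+1,k)$ are exactly the sets $\bar a\setminus\{g\}$ with $g\in\bar a$. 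Hence for an allocation of the form $(a,\bar a)$ with $|a|=k$, the EFX constraint ``$v_f(a)\ge v_f(\bar a\setminus\{g\})$ for all $g\in\bar a$'' says precisely (using that $\psi$ is monotone) that $f(a)\ge f(b)$ for every neighbour $b$ of $a$, i.e.\ that $a$ is a local maximum of $f$; the other EFX constraint, $v_f(\bar a)\ge v_f(a\setminus\{g\})$, compares a set of size $k+1$ with one of size $k-1$ and holds automatically because $h$ is increasing.

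It remains to see that balanced allocations are the only EFX candidates. If some bundle has size at most $k-1$, its partner has size at least $k+2$, so after deleting any single good the partner still has size at least $k+1>k-1$; since $h$ is strictly increasing, the owner of the smaller bundle strictly envies the larger one even after the deletion, so the allocation is not EFX. Thus the only allocations that can be EFX split $M$ into bundles of sizes $k$ and $k+1$, i.e.\ they are of the form $(a,\bar a)$ with $|a|=k$, and by the previous paragraph these are EFX exactly when $a$ is a local maximum of $f$. In particular $v_f$ does admit an EFX allocation, because $f$ attains a global and hence local maximum; this is also why the instances produced by the reduction are total, as required.

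The one genuinely delicate point---and the step I expect to be the main obstacle---is verifying that $v_f$ is submodular. A crude encoding with a large jump between the size-$k$ and size-$(k+1)$ layers violates diminishing returns right at that boundary, which is exactly why $h$ is taken strictly concave with a margin and the perturbation on the size-$k$ layer has amplitude only $\delta$. Submodularity is equivalent to $v_f(S\cup\{x\})-v_f(S)\ge v_f(S\cup\{x,y\})-v_f(S\cup\{y\})$ for all $S$ and distinct $x,y\notin S$; checking the handful of cases according to where $|S|$ sits relative to $k$, each reduces to concavity of $h$ together with $\delta$ being smaller than the relevant second-difference gaps of $h$ near size $k$. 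One can make this fully explicit, e.g.\ taking $h$ piecewise linear with slope $1$ through size $k-1$ and slopes $\tfrac12,\tfrac12,\tfrac14,\tfrac18,\dots$ afterwards, and $\delta=\tfrac14$; monotonicity then follows from the same bookkeeping, and once $v_f$ is confirmed to be a monotone submodular valuation the reduction and the query count are routine.
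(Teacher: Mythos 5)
Your proof is correct and follows essentially the same route as the paper's: encode $f$ as a size-based valuation with a small $f$-dependent perturbation on the size-$k$ layer, show every EFX allocation must be a $(k,k+1)$-split whose size-$k$ bundle is a local maximum of $f$ (using the complement/neighbor characterization in $K(2k+1,k)$), and verify submodularity by bounding the perturbation amplitude against the concavity gaps of the profile. The paper's specific choice is $v(S)=2|S|$ for $|S|<k$, $v(S)=2k-\tfrac{1}{1+e^{f(S)}}$ for $|S|=k$, $v(S)=2k$ for $|S|>k$, which is one concrete instance of your $h$-plus-$\psi$ schema, and its submodularity analysis is the same case-check you sketch.
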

\end{samepage}
%\TODO{``even for two players with identical submodular valuations": omit ``even"?}

\begin{proof}
%  We will reduce from \textsc{Local Search} on $K(2k+1, k)$ to show
%  that $D[EFX_{id}(2,2k+1)] \geq D[LS(K(2k+1, k))]$. 
  Let
  $T = D[EFX_{id}(2,2k+1)]$; then there exists an algorithm $\Gamma$
  for finding an EFX allocation which uses at most $T$ queries,
  regardless of $v$. We will construct an algorithm $\Gamma'$ for
  \textsc{Local Search} which also uses at most $T$ queries,
  regardless of $f$. Formally,
  $\max\limits_v T_{EFX}(\{1,2\}, M, v, \Gamma) = T$, and we will
  construct $\Gamma'$ such that
  $\max\limits_f T_{LS}(K(2k+1, k), f, \Gamma') \leq T$.

Define the algorithm $\Gamma'$ on input $(K(2k+1, k), f)$ as follows. For each $S \subseteq [2k+1]$, define $v(S)$ as
\[ v(S) =
   \begin{cases} 
      2|S| & \textnormal{if}\ \ |S| < k\\
      2k - \dfrac{1}{1 + e^{(f(S))}} & \textnormal{if}\ \ |S| = k\\
      2k  & \textnormal{if}\ \ |S| > k.
   \end{cases}
\]
Then run $\Gamma$ on $(\{1,2\}, [2k+1], v)$ to obtain an EFX
allocation $(A_1, A_2)$, and return $A_1$ if $|A_1| < |A_2|$ and
$A_2$ otherwise. We will show that the returned set corresponds to a
local maximum in $K(2k+1, k)$ (see Figure~\ref{fig:efx}).

\begin{figure}[ht!bp]
\centering
\begin{tikzpicture}[every node/.style={draw,circle,fill=blue!20,minimum size=5 mm,font=\sffamily\small\bfseries}]
  \graph[clockwise=5, radius=3cm,empty nodes] {subgraph C_n [n=5, name=A],
  a[label=above:{$\{1,3\}$}], b[label=right:{$\{2,4\}$}], c[label=right:{$\{3,5\}$}],
  d[label=left:{$\{4,1\}$}], e[label=left:{$\{5,2\}$}];};
  \graph[clockwise=5, radius=1.3cm, empty nodes] {subgraph I_n [n=5,name=B],
  a[label=above right:{$\{4,5\}$}], b[label=below right:{$\{5,1\}$}], c[label=below:{$\{1,2\}$}],
  d[label=below:{$\{2,3\}$}], e[label=below left:{$\{3,4\}$}];};  
  
  \foreach \i in {1,2,3,4,5}{\draw (A \i) -- (B \i);}
  \newcounter{k}
  \foreach \i in {1,2,3,4,5}{%
  \pgfmathsetcounter{k}{ifthenelse(mod(\i+2,5),mod(\i+2,5),5)}
  \draw (B \i) -- (B \thek);
}
\end{tikzpicture}

\caption{The graph shown is $K(2k+1, k)$ for $k=2$, also known as the Petersen graph. Each vertex corresponds to a size $2$ subset of $[5]$.
Suppose the allocation where
 $A_1 = \{1,2,3\}$ and $A_2 =\{4,5\}$ is EFX. Since
 $v(S_1) \geq v(S_2)$ if and only if $f(S_1) \geq f(S_2)$, we have
 $f(\{4,5\}) \geq f(\{1,2\})$, $f(\{4,5\})\geq f(\{2,3\})$, and
 $f(\{4,5\})\geq f(\{1,3\})$. Therefore \{4,5\} is a local maximum in this graph.
}
\label{fig:efx}
\end{figure}

For brevity, define 
\[
\delta(S) = - \frac{1}{1 + e^{f(S)}}.
\] 
We note
that $-1 < \delta(S) < 0$ for all $S$, and that $\delta(S)$ is
strictly increasing with $f(S)$. Any other function satisfying these
properties would work as well.

We first argue that any EFX allocation returned by $\Gamma$ must give
one player exactly $k$ goods. Suppose that this were not the
case. Then one player must receive fewer than $k$ goods; without loss
of generality, assume $|A_2| < k$, and thus $|A_1| > k+1$. Therefore
$v(A_2) \leq 2k - 2$ and $v(A_1) = 2k$.

For an arbitrary $g\in A_1$, we have $|A_1 \backslash \{g\}| >
k$.
Therefore there exists a $g\in A_1$ such that
$v(A_1 \backslash \{g\}) = 2k > v(A_2)$, so the allocation cannot be
EFX. Thus any EFX allocation must give one player exactly $k$
goods. Therefore $\Gamma$ will return a set of size $k$, which
corresponds to a valid vertex of $K(2k+1, k)$.

Without loss of generality, assume $|A_1| = k+1$ and $|A_2| = k$. Then
$v(A_1) = 2k$ and $v(A_2) = 2k + \delta(A_2) < 2k$, so
$v(A_1) > v(A_2)$. Therefore the allocation $A = (A_1, A_2)$ is EFX if
and only if $v(A_2) \geq v(A_1 \backslash \{g\})$ for all $g\in A_1$.

We can rewrite this condition as $v(A_2) \geq v(S)$ for all
$S\subseteq A_1$ where $|S| = k$. For any $|S| = k$, we have
$v(A_2) - v(S) = \delta(A_2) - \delta(S)$. Since $\delta$ is
strictly increasing with $f(S)$, we have $v(A_2) \geq v(S)$ if and
only if $f(A_2) \geq f(S)$. Therefore an allocation $(A_1, A_2)$ is
EFX if and only if $f(A_2) \geq f(S)$ for all $S\subseteq A_1$ where
$|S| = k$.

Observe that $S\subseteq A_1$ if and only if $S \cap A_2 =
\emptyset$.
Therefore an allocation $(A_1, A_2)$ is EFX if and only if
$f(A_2) \geq f(S)$ for all $S\subseteq M$ where $|S| = k$ and
$S \cap A_2 = \emptyset$. This is exactly the definition of $A_2$
being a local maximum in $K(2k+1, k)$. Therefore an allocation
$(A_1, A_2)$ is EFX if and only if $A_2$ is a local maximum in
$K(2k+1, k)$.

Thus $\Gamma'$ correctly solves \textsc{Local
  Search}. Furthermore, since $\Gamma'$ uses no queries outside of
running $\Gamma$, and $\Gamma$ uses at most $T$ queries, $\Gamma'$
also uses at most $T$ queries. Therefore
\[
D[EFX_{id}(2,2k+1)] \geq D[LS(K(2k+1, k))].
\]

It remains to show that $v$ is submodular. For any $S\subseteq M$ and $x\in M\backslash S$, we have
\[ v(S\cup\{x\}) - v(S) =
   \begin{cases} 
      2 & \textnormal{if}\ \ |S\cup\{x\}| < k\\
      2 + \delta(S\cup\{x\}) & \textnormal{if}\ \ |S\cup\{x\}| = k\\
      - \delta(S) & \textnormal{if}\ \ |S\cup\{x\}| = k + 1\\
      0 & \textnormal{if}\ \ |S\cup\{x\}| > k + 1.
   \end{cases}
\]

Therefore $v(S\cup\{x\}) - v(S)$ is non-increasing with $|S|$, since $-1 < \delta(S) < 0$ for all $S$. Thus $v(X\cup\{x\}) - v(X) \geq v(Y\cup\{x\}) - v(Y)$ whenever $|X| < |Y|$. If $X \subseteq Y$, either $|X| < |Y|$ or $X = Y$. When $X = Y$, we trivially have $v(X\cup\{x\}) - v(X) = v(Y\cup\{x\}) - v(Y)$. Thus we have $v(X\cup\{x\}) - v(X) \geq v(Y\cup\{x\}) - v(Y)$ whenever $X\subseteq Y$, and so $v$ is submodular.
\end{proof}

\subsection{Query complexity of local search on Odd graphs}\label{sec:local-search-complexity}

In this section, we show that finding a local maximum on $K(2k+1, k)$ has exponential query complexity, completing our lower bound on the number of queries required to find an EFX allocation.\footnote{A similar lower bound for local search on $K(2k+1, k)$ was proved (using different arguments) in \cite{Dobzinski15:Complexity}.}

\subsubsection{The role of boundaries}

For a graph $G=(V,E)$ and a set $S \subseteq V$, define the
\emph{boundary} $B(S)$ of $S$ as the set of vertices that are not in
$S$ but are adjacent to a vertex in $S$. Formally,
$B(S) = \{a \in V\backslash S: \exists b\in S, (a,b) \in E\}$.
The next result, due to \citeN{Llewellyn89:Local},
implies that local search is hard in graphs that only
have large boundaries.

\begin{lemma}[\cite{Llewellyn89:Local}]\label{lem:sep-game}
For any graph $G = (V,E)$ and integers $t$ and $c$,
\begin{align*}
D[LS(G)] \geq \min\big(t, \min_{S} \{|B(S)|: c -t \leq |S| \leq c\}\big).
\end{align*}
\end{lemma}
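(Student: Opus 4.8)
This is a query-complexity lower bound, so the natural route is an \emph{adversary argument}: rather than fixing $f$ in advance, we answer each query on the fly, always keeping our answers consistent with some function for which no local maximum has yet been certified. The key preliminary observation is that an algorithm can only be \emph{sure} that a vertex $a$ is a local maximum once it has queried $a$ together with its entire neighborhood and has seen $f(a)\ge f(b)$ for every $(a,b)\in E$; if any vertex of $\{a\}\cup N(a)$ is still unqueried, a consistent extension could give it a larger value. So it suffices to build an adaptive adversary that, against every sequence of fewer than $\min(t,b)$ queries — writing $b:=\min_S\{|B(S)|:c-t\le |S|\le c\}$ — keeps its answers consistent with a real-valued function in which no vertex has both its full closed neighborhood queried and the largest value there.

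The adversary I would use is the ``hidden increasing ridge'' construction (in the style of Llewellyn, Tovey, and Trick). It maintains the queried set $Q$, a connected \emph{live region} $C\subseteq V\setminus Q$, and a ``frontier'' (the queried vertex currently atop the revealed ridge), and it keeps its answers consistent with a target $f$ of a fixed shape: small pairwise-distinct values on the queried vertices that are not on the ridge; an increasing chain of moderately large values that has been revealed so far, topped by the frontier; a global maximum located strictly inside $C$, toward which the ridge will be routed; and still-smaller values everywhere else. Initially $C$ is essentially all of $V$. On a query to a vertex $v$: if $v\notin C$, answer with a value below everything revealed so far and leave $C$ and the frontier unchanged; if $v\in C$, answer with a small value, delete $v$ from $C$, and reset $C$ to a connected component of $C\setminus\{v\}$ — chosen to keep the live region as large as possible (and its boundary in $Q$ as small as possible), re-routing the ridge into it if necessary. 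As long as $C$ remains large enough to host the remaining ridge together with a genuine interior maximum, nothing is certified: the maximum's location is not pinned down, and in every consistent extension it still has an unqueried neighbor that could be even larger.

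The heart of the proof is showing the adversary survives $\min(t,b)$ queries, and there are only two ways it can be driven to stop. First, the algorithm can query inside $C$ so as to split it into small pieces; the adversary must retreat into the largest piece $C'$, and is stuck only once even the largest piece is too small to contain the ridge. For that to happen, the queried vertices together with the last query must form a separator, and — calibrating the adversary's size invariants so the relevant set $S$ has $c-t\le|S|\le c$ — the ridge can only have been prevented from escaping to the discarded side once \emph{all} of $B(S)$ has been queried, which costs $\ge b$ queries. Second, the revealed ridge can simply become long enough to be fully exposed and its top surrounded; since each query exposes only $O(1)$ new ridge vertices and the adversary needs unexplored room of order the number of queries in order to keep re-routing, this cannot happen in fewer than $\sim t$ queries. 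In either case the algorithm has made at least $\min(t,b)$ queries before any vertex can be certified a local maximum, which is exactly the stated inequality for this choice of $t$ and $c$.

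The step I expect to be the real obstacle is the quantitative bookkeeping behind ``the adversary survives'': one must show that after fewer than $\min(t,b)$ queries the partial answers genuinely extend to a real-valued function with a well-defined local-maximum structure and — the subtle point — \emph{no spurious certified local maximum hiding in the already-abandoned part of the graph}, which typically forces the adversary to treat discarded components carefully (e.g.\ recursively, or keeping them monotone in query order); and one must pin down the exact correspondence between ``the live region was forced down to $C'$'' and ``the whole boundary $B(S)$ of some $S$ with $c-t\le|S|\le c$ has been queried,'' so that the $|B(S)|$ term comes out with the stated size window rather than a weaker one. Calibrating the two budgets — ridge length $\sim t$ versus live-region size $\sim c$ — so that the final bound is precisely $\min(t,\min_S\{|B(S)|:c-t\le|S|\le c\})$ is the other place the details bite.
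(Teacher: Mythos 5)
Your high-level route matches the paper's (which itself only sketches an adversary argument and defers the details to Llewellyn, Tovey, and Trick), but the adversary you describe differs from theirs in a way that actually makes your life harder, and the crucial quantitative step is left, by your own admission, unproved.

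The paper's adversary is simpler than your ``hidden ridge'': while the subgraph $G_u$ of unqueried vertices stays connected, it just answers every query with a value \emph{strictly larger than all previous answers}. This single rule instantly rules out a spurious certified local maximum anywhere in the already-queried region: a queried vertex $a'$ can only be certified a local max once all of $N(a')$ is queried with smaller values, and under the monotone rule that means $a'$ was queried \emph{after} all its neighbors, i.e.\ it was isolated in $G_u$ at the time, disconnecting $G_u$. Your on-ridge/off-ridge split — answering queries inside $C$ with small values and keeping a separate chain of ``moderately large'' values — does not have this automatic protection. You would need a separate argument that the small-valued vertices surrounding the shrinking live region never form a surrounded local maximum, which is exactly the subtlety you flag at the end (``no spurious certified local maximum hiding in the already-abandoned part''); the monotone-values design makes this concern evaporate.

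The more substantive gap is the second half of your argument. In the lemma, $t$ and $c$ are \emph{free parameters chosen in advance}, and the claim is that for \emph{any} such choice the adversary can force either $\geq t$ queries or the full querying of $B(S)$ for some $S$ with $c - t \leq |S| \leq c$. Your write-up inverts this: you try to \emph{derive} $t$ from a ridge-length/room accounting (``this cannot happen in fewer than $\sim t$ queries''). That framing does not yield the stated bound; the right logic is that the adversary's invariants are \emph{calibrated by} the chosen $t$ and $c$, and the separation-game step — showing that if fewer than $t$ queries have been made then the adversary has always been able to retreat into a connected component of the unqueried subgraph whose size has never yet fallen below $c-t$, so that when it first drops to $\leq c$ the surviving component $S$ has $c-t \leq |S| \leq c$ and all of $B(S)$ lies inside the query set — is precisely what you have not supplied. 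You have correctly identified that this is where the proof lives, but naming the obstacle is not the same as clearing it; as written, the proposal asserts the inequality rather than deriving it.
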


\begin{proofsketch}
  We sketch a proof for the benefit of the
  reader. The proof follows an adversary argument. 
Let $G_u$ be the
  subgraph induced by the still-unqueried vertices. While $G_u$
  remains connected, suppose
  the adversary simply returns increasing values for each query. Then
  the only way for a local maximum to be created is to query a vertex
  $a$ after querying all of $a$'s neighbors. 

Furthermore, while $G_u$ remains connected and contains at least one unqueried vertex, the most
recently queried vertex $a$ must have an unqueried neighbor $b$: if
not, $G_u$ must have been disconnected prior to the most recent
query. The adversary is free to toggle which of $a$ and $b$ is a local
maximum (or possibly neither, if there are more unqueried vertices). 
Thus while at least one vertex has not been queried and the graph of unqueried vertices remains connected, it cannot be determined where the graph has a local maximum.

Thus the only strategy to counteract the adversary is to perform a sort of binary search. First, we must disconnect the graph of unqueried vertices. At least one of the resulting components must contain a local maximum, and Llewellyn et al.~\shortcite{Llewellyn89:Local} show how we can always identify one such component based on the query results so far. Thus we can recurse on that component, and the process repeats. Llewellyn et al.~\shortcite{Llewellyn89:Local} call this the \emph{separation game}. An example of the separation game being played on a path is given by Figure~\ref{fig:sep-game}.

By this logic, we will have to eventually disconnect a ``fairly large" component: if it is too small, the adversary is free to place the local maximum in another larger component. Specifically, Llewellyn et al.~\shortcite{Llewellyn89:Local} show that for any integers $t$ and $c$, the adversary can force us to either query $t$ vertices, or disconnect a set of vertices $S$ where $c -t \leq |S| \leq c$.

In order to disconnect a set of vertices $S$, every vertex on the
boundary of $S$ must be queried. Thus at least $\min\big(t,
\min\limits_{S} \{|B(S)|: c -t \leq |S| \leq c\}\big)$ must be
queried, as claimed.
%which satisfies the lemma.
\end{proofsketch}

\begin{figure}[ht!bp]
\centering
\begin{tikzpicture}[every node/.style={draw,circle,minimum size=5 mm, node distance=.85 cm, font=\footnotesize}]
% Not sure if there's a better way to do this than hardcoding...oh well, I already did it and it didn't
% take that long
  \node (1) {$$};
  \node (2) [right of=1] {$$};
  \node (3) [right of=2] {$$};
  \node (4) [right of=3] {$$};
  \node [fill=blue!20] (5) [right of=4] {$3$};
  \node [fill=blue!20] (6) [right of=5] {$2$}; 
  \node (7) [right of=6] {$$}; 
  \node (8) [right of=7] {$$};
  \node (9) [right of=8] {$$}; 
  \node (10) [right of=9] {$$}; 
  \node (11) [below of=1] {$$};
  \node [fill=blue!20] (12) [right of=11] {$1$};
  \node [fill=blue!20] (13) [right of=12] {$5$};
  \node (14) [right of=13] {$$};
  \node [fill=blue!20] (15) [right of=14] {$3$};
  \node [fill=blue!20] (16) [right of=15] {$2$}; 
  \node (17) [right of=16] {$$}; 
  \node (18) [right of=17] {$$};
  \node (19) [right of=18] {$$}; 
  \node (20) [right of=19] {$$}; 
  \node (21) [below of=11]{$$};
  \node [fill=blue!20] (22) [right of=21] {$1$};
  \node [fill=blue!20] (23) [right of=22] {$5$};
  \node [fill=blue!20] (24) [right of=23] {$4$};
  \node [fill=blue!20] (25) [right of=24] {$3$};
  \node [fill=blue!20] (26) [right of=25] {$2$}; 
  \node (27) [right of=26] {$$}; 
  \node (28) [right of=27] {$$};
  \node (29) [right of=28] {$$}; 
  \node (30) [right of=29] {$$}; 
  
  \foreach \i [remember=\i as \lasti (initially 1)] in {1,2,3,4,5,6,7,8,9,10} {\draw (\lasti) -- (\i);};\
  \foreach \i [remember=\i as \lasti (initially 11)] in {11,12,13,14,15,16,17,18,19,20} {\draw (\lasti) -- (\i);};
  \foreach \i [remember=\i as \lasti (initially 21)] in {21,22,23,24,25,26,27,28,29,30} {\draw (\lasti) -- (\i);};
  
\end{tikzpicture}
\caption{An example of the separation game played on a path. After two central vertices are queried, returning values $3$ and $2$ as shown, we know that there must be a local maximum in the left half. Next, we bisect the left half by querying two more vertices, which return values $1$ and $5$. At this point, we know that either the vertex with value $5$ or the vertex immediately to its right must be a local maximum, and only one more query is required to determine which. In this case, the local maximum is the vertex with value $5$.}\label{fig:sep-game}
\end{figure}

%%%%%%%%%%%%%%%%%%%%%%%%%%%%%%%%%%%%%
\subsubsection{Boundaries of Kneser graphs}
In light of Lemma~\ref{lem:sep-game} and our interest in Kneser
graphs, the natural next step it to understand boundary sizes in
Kneser graphs.  The next lemma is due to Zheng~\shortcite{Zheng15:Vertex}.

\begin{lemma}[\cite{Zheng15:Vertex}]\label{lem:boundary} Let 
$\mu_G(r)$ denote $\min\limits_{|S| = r} |B(S)|$.
Then for all $1 \leq r \leq \mbinom{n}{k}$,%\\
\begin{align*}
\mu_{K(n,k)}(r) \geq \dbinom{n}{k} - \dfrac{1}{r}\dbinom{n-1}{k-1}^2 - r
\end{align*}
\end{lemma}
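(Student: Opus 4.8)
The plan is to bound the boundary size of a set $S$ of $r$ vertices in the Kneser graph $K(n,k)$ from below by counting, for the complementary set $T = V \setminus (S \cup B(S))$ of ``non-neighbors'' of $S$, how large $T$ can be. Note that a vertex $A$ (a size-$k$ subset of $[n]$) lies in $S \cup B(S)$ if and only if $A \in S$ or $A$ is disjoint from some $C \in S$; so $T$ consists exactly of the size-$k$ subsets that intersect every member of $S$. Since $|B(S)| = \binom{n}{k} - |S| - |T| = \binom{n}{k} - r - |T|$, it suffices to prove the upper bound $|T| \le \frac{1}{r}\binom{n-1}{k-1}^2$.

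To bound $|T|$, I would use a double-counting / averaging argument on incident pairs. Consider the bipartite incidence structure between $S$ and $T$: since every $A \in T$ meets every $C \in S$, there is at least one common element for each such pair $(A,C)$. Count triples $(A, C, x)$ with $A \in T$, $C \in S$, and $x \in A \cap C$. On one hand this count is at least $|S|\cdot|T| = r|T|$, since each pair contributes at least one element. On the other hand, for a fixed element $x \in [n]$, the number of $A \in T$ containing $x$ is at most $\binom{n-1}{k-1}$ (choose the remaining $k-1$ elements from $[n]\setminus\{x\}$), and likewise the number of $C \in S$ containing $x$ is at most $\binom{n-1}{k-1}$; so the number of triples with that fixed $x$ is at most $\binom{n-1}{k-1}^2$, and summing this naively over all $x$ gives an upper bound of $n\binom{n-1}{k-1}^2$, which is too weak. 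The fix is to be more careful: rather than summing over $x \in [n]$, I would count pairs weighted so that each $(A,C)$ pair is counted exactly once (pick, say, the minimum common element), and then bound the number of $A\in T$ and $C \in S$ through a single fixed element more tightly — in fact the key observation is that $\sum_{x}(\text{number of } A\in T \text{ containing } x) = k|T|$ and similarly $k|S|$, and one applies Cauchy--Schwarz or a convexity bound on $\sum_x d_S(x) d_T(x)$ where $d_S(x), d_T(x)$ are these degrees. That yields $r|T| \le \sum_x d_S(x)d_T(x)$, and then bounding $d_S(x) \le \binom{n-1}{k-1}$ and $\sum_x d_T(x) = k|T| \le \frac{k}{n}\cdot n \binom{n-1}{k-1}$... — the precise combination needs care, but the target $\frac{1}{r}\binom{n-1}{k-1}^2$ strongly suggests the clean route is: each pair $(A,C) \in T\times S$ shares an element, assign it one such element $\phi(A,C)$; for fixed $x$, the number of pairs with $\phi = x$ is at most $|\{A \in T: x\in A\}|\cdot|\{C\in S: x\in C\}| \le \binom{n-1}{k-1}\cdot|\{C\in S: x\in C\}|$; summing over $x$ and using $\sum_x |\{C\in S: x\in C\}| = k r \le \binom{n-1}{k-1}$-type slack is where the bound emerges. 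So $r|T| \le |T\times S|_{\text{pairs}} = \sum_x (\cdots) \le \binom{n-1}{k-1}^2$, giving $|T| \le \frac1r \binom{n-1}{k-1}^2$.

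The main obstacle I anticipate is getting the averaging tight enough to land exactly on $\frac{1}{r}\binom{n-1}{k-1}^2$ rather than a bound weaker by a factor of $n$ or $k$: the crude ``sum over all elements $x$'' loses too much, so the argument must exploit that a size-$k$ set has only $k$ elements (not $n$) and that we are counting each pair only once. Once that combinatorial counting inequality $|T| \le \frac1r\binom{n-1}{k-1}^2$ is in hand, the lemma follows immediately by substituting into $|B(S)| = \binom nk - r - |T|$ and minimizing over all $S$ with $|S| = r$. I would double-check the edge cases $r = 1$ (where $|T| = \binom{n-1}{k-1}$, consistent since then $\frac1r\binom{n-1}{k-1}^2 = \binom{n-1}{k-1}^2 \ge \binom{n-1}{k-1}$) and $r$ near $\binom nk$ to make sure the inequality remains valid (it may become vacuous when the right side goes negative, which is fine since $\mu_{K(n,k)}(r) \ge 0$ trivially).
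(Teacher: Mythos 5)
You have correctly identified the structure of the proof: partition $V$ into $S$, $B(S)$, and $T = V\setminus(S\cup B(S))$, observe that $T$ is exactly the set of $k$-subsets intersecting every member of $S$ (i.e., $S$ and $T$ are cross-intersecting), and then reduce the lemma to the inequality $|S||T| \le \binom{n-1}{k-1}^2$. This is precisely the skeleton of the paper's proof. However, there is a genuine gap in the step where you try to establish $r|T| \le \binom{n-1}{k-1}^2$ by double counting, and you acknowledge as much yourself. Your triple-count of $(A,C,x)$ with $x \in A\cap C$ gives, at best, $r|T| \le \sum_x d_S(x)\,d_T(x)$. Bounding $d_T(x) \le \binom{n-1}{k-1}$ and $\sum_x d_S(x) = kr$ yields only $|T| \le k\binom{n-1}{k-1}$, which lacks the crucial $1/r$ decay; no amount of Cauchy--Schwarz on the degree sequences will manufacture it, because the degree constraints alone do not encode the cross-intersection property strongly enough.

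The missing ingredient is that $|\mathcal{X}||\mathcal{Y}| \le \binom{n-1}{k-1}^2$ for cross-intersecting families of $k$-subsets of $[n]$ (valid when $n \ge 2k$) is a theorem in its own right --- the cross-intersecting variant of the Erd\H{o}s--Ko--Rado theorem, due to Matsumoto and Tokushige (and earlier Pyber in weaker form). It is proved by shifting/compression or algebraic methods, not by elementary averaging, and the paper simply cites it as Lemma~\ref{lem:cross-intersect} and applies it to the pair $(\mathcal{X},\mathcal{Y}) = (S,T)$. Once you invoke that theorem, your final substitution $|B(S)| = \binom{n}{k} - r - |T| \ge \binom{n}{k} - r - \frac1r\binom{n-1}{k-1}^2$ is exactly right. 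So the fix is not a cleverer counting argument but recognizing that this is a known extremal set theory result to be cited, not rederived from scratch.
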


We include a proof for completeness.  In it, we make use of the
following variant of the Erd\H{o}s-Ko-Rado theorem.  Call the set
families $\X$ and $\Y$ {\em
  cross-intersecting} if $X \cap Y \neq \emptyset$ for all $X \in \X$
and $Y \in \Y$.

\begin{lemma}[\cite{Matsumoto89:Exact}]\label{lem:cross-intersect}
If $\X$ and $\Y$ are cross-intersecting families of size-$k$ subsets of
$[n]$, then
\begin{align*}
|\X||\Y| \leq \binom{n-1}{k-1}^2
\end{align*}
\end{lemma}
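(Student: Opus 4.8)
The plan is to recast the statement as a spectral fact about the Kneser graph and apply a cross-independent version of the Hoffman ratio bound. View the size-$k$ subsets of $[n]$ as the vertices of the Kneser graph $K(n,k)$, where two vertices are adjacent exactly when their sets are disjoint. Then $\X$ and $\Y$ being cross-intersecting is precisely the statement that no edge of $K(n,k)$ has one endpoint in $\X$ and the other in $\Y$; call such a pair \emph{cross-independent}. So it suffices to bound the product of sizes of a cross-independent pair in $K(n,k)$. (One uses here $n \geq 2k$, the only regime in which the inequality can hold, and all that is needed in our application where $n = 2k+1$; for $n < 2k$ every two $k$-sets meet and the bound is false.)

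The tool is the following ratio bound: if $H$ is a $d$-regular graph on $N$ vertices all of whose non-principal adjacency eigenvalues have absolute value at most $\tau$ (with $0 < \tau < d$), and $X, Y$ are cross-independent in $H$, then $|X|\,|Y| \leq \big(\tfrac{\tau}{d+\tau}\big)^2 N^2$. I would prove this by the standard averaging argument. Let $\mathbf 1_X,\mathbf 1_Y$ be indicator vectors, put $\alpha = |X|/N$, $\beta = |Y|/N$, and decompose $\mathbf 1_X = \alpha\mathbf 1 + f$, $\mathbf 1_Y = \beta\mathbf 1 + g$ with $f,g$ orthogonal to the all-ones vector $\mathbf 1$ (the eigenvector for $d$). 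Cross-independence is the identity $\mathbf 1_X^{\!\top} A\,\mathbf 1_Y = 0$, which expands to $\alpha\beta d N = -f^{\!\top} A g$; since $f,g$ lie in the span of non-principal eigenvectors, $|f^{\!\top} A g| \leq \tau\|f\|\|g\|$ by Cauchy--Schwarz. Using $\|f\|^2 = \alpha(1-\alpha)N$ and $\|g\|^2 = \beta(1-\beta)N$ gives $d\sqrt{\alpha\beta} \leq \tau\sqrt{(1-\alpha)(1-\beta)}$; squaring, applying $1-\alpha-\beta \leq 1-2\sqrt{\alpha\beta}$, and solving the resulting quadratic in $\sqrt{\alpha\beta}$ yields $\sqrt{\alpha\beta} \leq \tau/(d+\tau)$.

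It remains to instantiate this for $K(n,k)$. That graph is $d$-regular with $d = \binom{n-k}{k}$, and its adjacency eigenvalues are $(-1)^i\binom{n-k-i}{k-i}$ for $i = 0,1,\dots,k$ (a classical fact coming from the Johnson association scheme); for $n \geq 2k$ these are non-increasing in absolute value, so every non-principal eigenvalue has modulus at most $\tau := \binom{n-k-1}{k-1}$. A one-line binomial identity, $(n-k)\binom{n-k-1}{k-1} = k\binom{n-k}{k}$, gives $\tfrac{\tau}{d+\tau} = \tfrac{k}{n}$, hence
\[
|\X|\,|\Y| \;\leq\; \Big(\tfrac{k}{n}\Big)^2\binom{n}{k}^2 \;=\; \binom{n-1}{k-1}^2,
\]
using $\tfrac{k}{n}\binom{n}{k} = \binom{n-1}{k-1}$. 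This is the desired bound, and it is tight (take $\X = \Y$ to be all $k$-sets through a fixed element).

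The step I expect to be the real obstacle is the eigenvalue input: a genuinely self-contained proof must compute the spectrum of the Kneser graph, which is standard but not immediate. The ratio-bound step and the binomial arithmetic are routine. If one prefers to avoid spectra entirely, the original approach of Matsumoto and Tokushige (building on Pyber) is purely combinatorial: first apply shifting (compression) operations, which preserve the sizes of $\X,\Y$ and the cross-intersecting property, to reduce to left-compressed families, then analyze those directly; a clean variant instead uses Katona's cyclic-permutation averaging to reduce to the case where $\X$ and $\Y$ consist of length-$k$ arcs on an $n$-cycle, the crux there being a small extremal lemma bounding the product of sizes of two cross-intersecting arc families.
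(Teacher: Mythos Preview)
The paper does not prove this lemma; it is quoted verbatim from Matsumoto and Tokushige and used as a black box in the proof of Lemma~\ref{lem:boundary}. So there is no ``paper's own proof'' to compare against.

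Your spectral argument is correct. The reinterpretation of cross-intersecting families as cross-independent vertex sets in $K(n,k)$ is exactly right, the Hoffman-type product bound for cross-independent pairs is derived cleanly (the AM--GM step $1-\alpha-\beta\le 1-2\sqrt{\alpha\beta}$ followed by the quadratic in $\sqrt{\alpha\beta}$ is the standard move), and the Kneser eigenvalue computation together with the identity $(n-k)\binom{n-k-1}{k-1}=k\binom{n-k}{k}$ gives $\tau/(d+\tau)=k/n$ as claimed. Your caveat that one needs $n\ge 2k$ is accurate and irrelevant for the paper's application ($n=2k+1$). The one place a reader might pause is your hypothesis $\tau<d$ in the ratio bound: when $n=2k$ one has $\tau=d=1$, but the quadratic then degenerates to $2\tau^2 s\le\tau^2$, still yielding $s\le\tfrac12=\tau/(d+\tau)$, so the conclusion survives.

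As you yourself note, this route is different from the original Matsumoto--Tokushige proof, which is purely combinatorial (shifting, or Katona's cyclic method). The trade-off is exactly what you describe: your approach is short and conceptual once the Kneser spectrum is granted, but that spectrum is itself a substantial import; the combinatorial proof is self-contained but longer. Either is a legitimate way to supply what the paper takes on faith.
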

Note that the inequality in Lemma~\ref{lem:cross-intersect} holds with
equality (for $k \le n/2$) when $\X$ and $\Y$ both consist of all
subsets of size $k$ that contain the element~1.

\begin{proof} 
(Of Lemma~\ref{lem:boundary}.)  For any $S$, we can partition $V$ into $S$, $B(S)$, and $V \backslash(S\cup B(S))$. An example of this is shown in Figure~\ref{fig:boundary}. Consider an arbitrary $a \in V\backslash(S\cup B(S))$. We know that $a\not\in S$ and $a\not\in B(S)$, so there is no $b\in S$ where $(a,b) \in E$. Therefore for all $a \in V\backslash(S\cup B(S))$ and $b\in S$, $a$ and $b$ are not adjacent. Recall that $a$ and $b$ are adjacent in $K(n,k)$ if $a\cap b = \emptyset$. Therefore for all $a\in V\backslash(S\cup B(S))$ and $b\in S$, $a\cap b \neq\emptyset$. Thus $S$ and $V\backslash(S\cup B(S))$ are cross-intersecting families.

\colorlet{color1}{red!20}
\colorlet{color2}{blue!20}
\colorlet{color3}{gray!40}
\begin{figure}[ht!bp]
\centering
\begin{tikzpicture}[every node/.style={draw,circle,fill=blue!20,minimum size=6 mm,node distance=1.3cm}]
  \node[fill=color2] (1) {$$};
  \node[fill=color2] (2) [right of=1] {$$};
  \node[fill=color3] (3) [right of=2] {$$};
  \node[fill=color1] (4) [below of=1] {};
  \node[fill=color2] (5) [right of=4] {};
  \node[fill=color3] (6) [right of=5] {};
  \node[fill=color1] (7) [left of=1] {};
  \node[fill=color1] (8) [left of=4] {};

  \path[every node/.style={font=\sffamily\normalsize}]
    (1) edge [left] node [left] {} (7)
    (2)  edge [left] node[left] {} (6)
    (3) edge [left] node[below] {} (6) 
         edge [right] node[above] {} (2)
    (4) edge [left] node[left] {} (1)
    	 edge [left] node[left] {} (2)
	 edge [left] node[left] {} (5)
	 edge [left] node [left] {} (8)
    (5) edge [left] node[left] {} (6)

      ;
\end{tikzpicture}
\caption{The partitioning of an arbitrary graph into $S$, $B(S)$, and $V\backslash(S\cup B(S))$. In this example, $S$ is the set of pink vertices, $B(S)$ is the set of blue vertices, and $V\backslash(S\cup B(S))$ is the set of gray vertices.}\label{fig:boundary}
\end{figure}

Therefore by Lemma~\ref{lem:cross-intersect}, we have $|S| |V\backslash(S\cup B(S))| \leq \mbinom{n-1}{k-1}^2$. Let $r = |S|$. Then $|V\backslash(S\cup B(S))| \leq \mfrac{1}{r}\mbinom{n-1}{k-1}^2$. Therefore for all $S$,
\begin{align*}
|B(S)| =&\ |V| - |V\backslash(S\cup B(S))| - |S|\\
=&\ \mbinom{n}{k} - |V\backslash(S\cup B(S))| - r\\
\geq&\ \mbinom{n}{k} - \mfrac{1}{r}\mbinom{n-1}{k-1}^2 - r
\end{align*}
and so $\mu_{K(n,k)}(r) = \min\limits_{|S| = r} |B(S)| \geq \mbinom{n}{k} - \mfrac{1}{r}\mbinom{n-1}{k-1}^2 - r$.
\end{proof}

We will only be interested in $K(2k+1, k)$, so we will simply write 
\[
\mu(r) = \mu_{K(2k+1, k)}(r).
\] 
Similarly, let 
\[
\beta(r) = \mbinom{2k+1}{k} - \mfrac{1}{r}\mbinom{2k}{k-1}^2 - r.
\] 
Then $\mu(r) \geq \beta(r)$ for all $r$.

We next prove a lemma building on Lemma~\ref{lem:boundary}.

\begin{lemma}\label{lem:bound-with-min} Let $r_{max} = \mbinom{2k}{k-1}$. Then for the graph $K(2k+1, k)$ and any $r^* \leq r_{max}$,
\begin{align*}
\min_{S} \{|B(S)|: r^* \leq |S| \leq r_{max}\} \geq \beta(r^*)
\end{align*}
\end{lemma}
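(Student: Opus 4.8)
The plan is to reduce the claim to a one-variable monotonicity statement about the function $\beta$ and then verify that monotonicity. First I would note that splitting the minimum over set sizes gives
\[
\min_{S}\{|B(S)| : r^* \le |S| \le r_{max}\} \;=\; \min_{r^* \le r \le r_{max}} \mu(r),
\]
where $r$ ranges over integers and $\mu(r) = \min_{|S|=r}|B(S)|$. By Lemma~\ref{lem:boundary} we have $\mu(r) \ge \beta(r)$ for every $r$, so it suffices to prove that $\beta(r) \ge \beta(r^*)$ whenever $r^* \le r \le r_{max}$; in other words, that $\beta$ is non-decreasing on the range $[r^*, r_{max}]$.

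For the monotonicity step I would work with $\beta(r) = \binom{2k+1}{k} - \frac{1}{r}\binom{2k}{k-1}^2 - r$ directly. The cleanest argument avoids calculus: for $r^* \le r \le r_{max}$,
\[
\beta(r) - \beta(r^*) \;=\; \binom{2k}{k-1}^2\Big(\tfrac{1}{r^*} - \tfrac{1}{r}\Big) - (r - r^*) \;=\; (r - r^*)\left(\frac{\binom{2k}{k-1}^2}{r\,r^*} - 1\right).
\]
Since $1 \le r^* \le r \le r_{max} = \binom{2k}{k-1}$, we have $r\,r^* \le r_{max}^2 = \binom{2k}{k-1}^2$, so the parenthesized factor is non-negative and $r - r^* \ge 0$, giving $\beta(r) \ge \beta(r^*)$. (Equivalently, treating $r$ as a real variable, $\beta'(r) = \frac{1}{r^2}\binom{2k}{k-1}^2 - 1 \ge 0$ exactly when $r \le \binom{2k}{k-1} = r_{max}$, so $\beta$ is non-decreasing on $(0, r_{max}]$.)

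Combining the two steps yields $\min_{S}\{|B(S)| : r^* \le |S| \le r_{max}\} \ge \min_{r^* \le r \le r_{max}} \beta(r) = \beta(r^*)$, which is the claim. I do not anticipate a genuine obstacle here: the only points requiring a little care are that $r$ must be read as an integer in the set-size minimum (harmless, since $\beta$ is monotone as a real function on the whole interval), and that Lemma~\ref{lem:boundary} is invoked separately for each value of $r$ in the range, which is exactly how it is stated. The role of the specific cutoff $r_{max} = \binom{2k}{k-1}$ is precisely that it is the point at which $\beta$ stops increasing, which is why the interval in the statement ends there.
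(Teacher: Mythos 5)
Your proof is correct and takes essentially the same approach as the paper: both reduce the claim to the monotonicity of $\beta$ on $[r^*, r_{max}]$ and then invoke Lemma~\ref{lem:boundary} termwise. The only stylistic difference is that you establish monotonicity by directly factoring $\beta(r) - \beta(r^*) = (r - r^*)\bigl(\tfrac{1}{r r^*}\binom{2k}{k-1}^2 - 1\bigr)$, whereas the paper checks the consecutive differences $\beta(r) - \beta(r-1)$ and iterates; your version is marginally tighter but the argument is the same.
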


\begin{proof}
We begin by examining the expression $\beta(r) - \beta(r-1)$:
\begin{align*}
\beta(r) - \beta(r-1) =&\ - \mfrac{1}{r}\mbinom{2k}{k-1}^2 - r  + \mfrac{1}{r-1}\mbinom{2k}{k-1}^2 + r - 1\\
=&\ \Big(\mfrac{1}{r-1} - \mfrac{1}{r}\Big) \mbinom{2k}{k-1}^2 - 1\\
=&\ \mfrac{1}{r(r-1)}\mbinom{2k}{k-1}^2 - 1.
\end{align*}
Therefore $\beta(r) - \beta(r-1) \geq 0$ when $r(r-1) \leq \mbinom{2k}{k-1}^2$. If $r \leq r_{max}$, then $r(r-1) < r^2 \leq r_{max}^2 = \mbinom{2k}{k-1}^2$. Thus $\beta(r) \geq \beta(r-1)$ when $r \leq r_{max}$. Iterating this inequality yields $\beta(r^*) \leq \beta(r)$ whenever $r^* \leq r \leq r_{max}$.

We can rewrite $\min\limits_{S} \{|B(S)|: r^* \leq |S| \leq r_{max}\}$ as
\begin{align*}
\min\limits_{S} \{|B(S)|: r^* \leq |S| \leq r_{max}\} =&\
\min\limits_{r:\ r^* \leq r \leq r_{max}}\ \min\limits_{|S| = r}\ |B(S)|\\
=&\ \min\limits_{r:\ r^* \leq r \leq r_{max}} \mu(r) \\
\geq&\ \min\limits_{r:\ r^* \leq r \leq r_{max}} \beta(r)
\end{align*}
where the last step is due to Lemma~\ref{lem:boundary}. Since $\beta(r^*) \leq \beta(r)$ whenever $r^* \leq r \leq r_{max}$, $\min\limits_{r:\ r^* \leq r \leq r_{max}} \beta(r) = \beta(r^*)$. Therefore $\min\limits_{S} \{|B(S)|: r^* \leq |S| \leq r_{max}\} \geq \beta(r^*)$, as required.
\end{proof}

%%%%%%%%%%%%%%%%%%%%%%%%%%%%%%%%%%%%
\subsubsection{Local search on $K(2k + 1, k)$}
We are now ready to prove our result on $D[LS(K(2k + 1, k))]$.

\begin{theorem}\label{thm:local-search-complexity} For all $k$,
\begin{align*}
D[LS(K(2k+1, k))] \in \Omega\left(\dfrac{1}{k} \dbinom{2k+1}{k}\right).
\end{align*}
\end{theorem}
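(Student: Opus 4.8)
The plan is to apply Lemma~\ref{lem:sep-game} with a judicious choice of the parameters $t$ and $c$, combined with the boundary estimate of Lemma~\ref{lem:bound-with-min}. Recall that Lemma~\ref{lem:sep-game} gives $D[LS(K(2k+1,k))] \geq \min\big(t, \min_S\{|B(S)| : c - t \leq |S| \leq c\}\big)$ for every choice of integers $t$ and $c$. The idea is to pick $t$ and $c$ so that the window $[c-t, c]$ of set sizes sits inside the range $[r^*, r_{max}]$ where Lemma~\ref{lem:bound-with-min} applies, so that the second term in the $\min$ is bounded below by $\beta(r^*) = \binom{2k+1}{k} - \frac{1}{r^*}\binom{2k}{k-1}^2 - r^*$; then we optimize over $r^*$ (equivalently over $t$ and $c$) to make both $t$ and $\beta(r^*)$ as large as possible simultaneously.

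The natural choice is to take the window to be $[r^*, r_{max}]$ itself, i.e. set $c = r_{max} = \binom{2k}{k-1}$ and $t = r_{max} - r^*$, so that $c - t = r^*$. With this choice Lemma~\ref{lem:bound-with-min} gives $\min_S\{|B(S)| : r^* \leq |S| \leq r_{max}\} \geq \beta(r^*)$, and Lemma~\ref{lem:sep-game} yields $D[LS(K(2k+1,k))] \geq \min(r_{max} - r^*, \beta(r^*))$. Now I would choose $r^*$ to balance these two terms up to constants. The key identity to exploit is the ratio $\binom{2k}{k-1} / \binom{2k+1}{k}$: using $\binom{2k+1}{k} = \frac{2k+1}{k+1}\binom{2k}{k}$ and $\binom{2k}{k-1} = \frac{k}{k+1}\binom{2k}{k}$, one finds $\binom{2k}{k-1} = \Theta\big(\frac{1}{k}\binom{2k+1}{k}\big)$ — in fact $r_{max} = \binom{2k}{k-1}$ is already of the target order $\Theta(\frac1k\binom{2k+1}{k})$. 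Also $\binom{2k}{k-1}^2 = \Theta\big(\frac{1}{k^2}\binom{2k+1}{k}^2\big)$. So I would set, say, $r^* = c_0 \cdot \frac{1}{k}\binom{2k+1}{k}$ for a suitable small constant $c_0$ (chosen so that $r^* \leq r_{max}$, which holds for $c_0$ small enough); then $\frac{1}{r^*}\binom{2k}{k-1}^2 = \Theta\big(\frac{1}{k}\binom{2k+1}{k}\big)$ and $r^* = \Theta\big(\frac1k\binom{2k+1}{k}\big)$, so $\beta(r^*) = \binom{2k+1}{k} - \Theta\big(\frac1k\binom{2k+1}{k}\big) = \Theta\big(\binom{2k+1}{k}\big)$ — which is actually much larger than needed. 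Meanwhile $r_{max} - r^* = \Theta\big(\frac1k\binom{2k+1}{k}\big)$. Hence $\min(r_{max}-r^*, \beta(r^*)) = \Omega\big(\frac1k\binom{2k+1}{k}\big)$, as claimed. (One should double-check that $t = r_{max}-r^*$ and $c = r_{max}$ are positive integers, or round appropriately; this is routine.)

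I expect the main obstacle to be the binomial-coefficient bookkeeping: making sure the constant $c_0$ is chosen so that $r^* \leq r_{max}$ holds for all $k$ (including small $k$), verifying that $\beta(r^*)$ stays positive and of the right order rather than being swamped by the $-\frac{1}{r^*}\binom{2k}{k-1}^2$ term, and handling the floor/ceiling issues that arise because $t$, $c$, and $r^*$ must be integers. None of this is conceptually hard, but it is where a careless argument could go wrong. An alternative, slightly cleaner route that avoids optimizing $c_0$: just take $r^* = \lceil r_{max}/2 \rceil$, so that $r_{max} - r^* = \Theta(r_{max}) = \Theta\big(\frac1k\binom{2k+1}{k}\big)$ and $\beta(r^*) \geq \binom{2k+1}{k} - \frac{2}{r_{max}}\binom{2k}{k-1}^2 - r_{max} = \binom{2k+1}{k} - 2\binom{2k}{k-1} - \binom{2k}{k-1} = \binom{2k+1}{k} - 3\binom{2k}{k-1}$, which is $\Omega\big(\binom{2k+1}{k}\big)$ since $\binom{2k}{k-1} = o\big(\binom{2k+1}{k}\big)$; then $\min(t,\beta(r^*)) = t = \Theta\big(\frac1k\binom{2k+1}{k}\big)$. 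Either way the conclusion follows.
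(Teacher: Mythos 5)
Your overall plan is the right one and matches the paper's: apply Lemma~\ref{lem:sep-game} with $c = r_{max}=\binom{2k}{k-1}$ and $t = r_{max}-r^*$, use Lemma~\ref{lem:bound-with-min} to lower-bound the boundary term by $\beta(r^*)$, and pick $r^*$ to balance $t$ against $\beta(r^*)$. But there is a concrete error in the binomial asymptotics that breaks both of your proposed parameter choices.

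You claim $\binom{2k}{k-1} = \Theta\bigl(\tfrac{1}{k}\binom{2k+1}{k}\bigr)$, and consequently that $r_{max}$ is already of the target order. This is off by a factor of $k$. From the very identities you wrote,
\begin{align*}
\binom{2k}{k-1} = \frac{k}{k+1}\binom{2k}{k} = \frac{k}{k+1}\cdot\frac{k+1}{2k+1}\binom{2k+1}{k} = \frac{k}{2k+1}\binom{2k+1}{k},
\end{align*}
so $r_{max}$ is $\Theta\bigl(\binom{2k+1}{k}\bigr)$, roughly \emph{half} of $\binom{2k+1}{k}$, not a $1/k$ fraction of it. Likewise $\binom{2k}{k-1}^2 = \Theta\bigl(\binom{2k+1}{k}^2\bigr)$, not $\Theta\bigl(\tfrac{1}{k^2}\binom{2k+1}{k}^2\bigr)$. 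This matters, because it means the target $\tfrac1k\binom{2k+1}{k}$ is $\Theta(r_{max}/k)$, not $\Theta(r_{max})$, and your $t$ cannot be as large as $\Theta(r_{max})$.

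Both of your concrete instantiations fail for this reason. With $r^*=\lceil r_{max}/2\rceil$ you get $\tfrac{1}{r^*}\binom{2k}{k-1}^2 \approx 2r_{max}$, and since $\binom{2k+1}{k} = \tfrac{2k+1}{k}r_{max}\approx 2r_{max}$, your $\beta(r^*)$ is roughly $2r_{max} - 2r_{max} - \tfrac12 r_{max} < 0$ for $k\ge 3$; the claim ``$\binom{2k}{k-1}=o(\binom{2k+1}{k})$'' that you invoke is false. With $r^* = c_0\cdot\tfrac1k\binom{2k+1}{k}$, the subtracted term $\tfrac1{r^*}\binom{2k}{k-1}^2$ is of order $k\binom{2k+1}{k}$, which is astronomically larger than $\binom{2k+1}{k}$, so $\beta(r^*)$ is again negative. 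The fix is that $r^*$ must be taken \emph{very close} to $r_{max}$ — the paper uses $r^*=\tfrac{2k}{2k+1}r_{max}$, so $t = \tfrac{1}{2k+1}r_{max} = \Theta(r_{max}/k)$ — because $\beta$ grows from $0$ to its maximum of about $\tfrac1k r_{max}$ over a narrow window near $r_{max}$, and that $\tfrac1k r_{max}$ ceiling is precisely why the final bound carries the $1/k$ factor. Once you correct the asymptotics and push $r^*$ up near $r_{max}$, the argument goes through exactly as in the paper.
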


\begin{proof}
Let $c = r_{max} = \mbinom{2k}{k-1}$, and let $t =
\mfrac{1}{2k+1}r_{max}$. so $c-t = \mfrac{2k}{2k+1}r_{max}$. Then by
Lemma~\ref{lem:sep-game},
\[
D[LS(K(2k+1, k))] \geq\]\[
\min \Big( \mfrac{1}{2k+1}r_{max}, \min\limits_S
\big\{|B(S)|: \mfrac{2k}{2k+1}r_{max} \leq |S| \leq
r_{max}\big\} \Big)
\]

By Lemma~\ref{lem:bound-with-min},
\begin{align*}
\ \min\limits_S \big\{|B(S)|:&\ \mfrac{2k}{2k+1}r_{max} \leq |S| \leq r_{max}\big\} 
\geq \beta\left(\mfrac{2k}{2k+1} r_{max}\right)\\
=&  \mbinom{2k+1}{k} - \mfrac{2k+1}{2k\cdot r_{max}} r_{max}^2 - \mfrac{2k}{2k+1}r_{max}\\
\geq&\ \mbinom{2k+1}{k} - \left(\mfrac{2k+1}{2k} + 1\right)r_{max}.
\end{align*}

Using the identity $\mbinom{n}{k}= \mfrac{n}{k}\mbinom{n-1}{k-1}$ for any $n,k$, we have $\mbinom{2k+1}{k} = \mfrac{2k+1}{k} \mbinom{2k}{k-1} = \mfrac{2k+1}{k}r_{max}$. Thus we have
\begin{align*}
\min\limits_S \big\{|B(S)|: \mfrac{2k}{2k+1}r_{max}&\ \leq |S| \leq r_{max}\big\}\\
\geq&\ \left(\mfrac{2k+1}{k} - \mfrac{2k+1}{2k} - 1\right)r_{max}\\
=& \mfrac{4k+2 - 2k-1 - 2k}{2k} r_{max}\\
=&\ \mfrac{1}{2k} r_{max}.
\end{align*}
Therefore,
\begin{align*}
 D[LS(K(2k+1, k))] \geq&\ \min\left(\mfrac{1}{2k+1}r_{max}, \mfrac{1}{2k}r_{max}\right)\\
 = &\ \mfrac{1}{2k+1}r_{max}\\
  \in&\ \Omega\left(\mfrac{1}{k} r_{max}\right).
 \end{align*}
Since $\mbinom{2k+1}{k} = \mfrac{2k+1}{k}r_{max}$, we have
\[
D[LS(K(2k+1, k))] \in \Omega\left(\mfrac{1}{k} \mbinom{2k+1}{k}\right).
\]
\end{proof}

Theorem~\ref{thm:efx-reduction} and Theorem~\ref{thm:local-search-complexity} together imply our main result of Section~\ref{sec:lower-bound}.
\begin{theorem}\label{thm:efx-complexity} 
The deterministic query complexity of the \textsc{EFX Allocation}
problem satisfies
\begin{align*}
D[EFX_{id}(2,2k+1)] \in \Omega\left(\dfrac{1}{k} \dbinom{2k+1}{k}\right),
\end{align*}
even for two players with identical submodular valuations.
\end{theorem}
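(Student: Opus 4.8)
The plan is to obtain the bound by composing the two results already established in this section, with no new technical work required. Theorem~\ref{thm:efx-reduction} gives the query-preserving reduction $D[EFX_{id}(2,2k+1)] \geq D[LS(K(2k+1,k))]$, and crucially the valuation $v$ built in that reduction is monotone and submodular (verified at the end of its proof) and is shared by both players, so the inequality holds even under the restriction to two players with identical submodular valuations. Theorem~\ref{thm:local-search-complexity} gives $D[LS(K(2k+1,k))] \in \Omega\!\left(\tfrac{1}{k}\binom{2k+1}{k}\right)$. First I would simply chain these: for a suitable constant $c > 0$ and all large $k$, $D[EFX_{id}(2,2k+1)] \geq D[LS(K(2k+1,k))] \geq c \cdot \tfrac{1}{k}\binom{2k+1}{k}$, which is exactly the claim.

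Next I would record two small observations that make the statement as strong as advertised. First, since the identical-valuations case is a special case of the general \textsc{EFX Allocation} problem (as noted when $D[EFX_{id}(n,m)]$ was defined), the same $\Omega\!\left(\tfrac{1}{k}\binom{2k+1}{k}\right)$ bound applies to the general problem. Second, the bound holds even for the promise version in which an EFX allocation is guaranteed to exist: by Theorem~\ref{thm:general-identical} an EFX allocation always exists for two players with identical valuations, and the reduction's correctness (from Theorem~\ref{thm:efx-reduction}) shows that the returned size-$k$ set is always a valid vertex of $K(2k+1,k)$ corresponding to a local maximum, of which one always exists because a global maximum does.

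I do not anticipate any obstacle here — this theorem is the bookkeeping step that packages the two genuinely hard pieces. If forced to point to the difficult part, it was already dispatched upstream: in Theorem~\ref{thm:efx-reduction}, arranging that $\Gamma'$ invokes no value queries beyond those made by $\Gamma$ (so the reduction is exactly query-preserving) while keeping the constructed valuation monotone and submodular; and in Theorem~\ref{thm:local-search-complexity}, controlling the boundary-size lower bound $\beta(r)$ so that it remains of order $\tfrac{1}{k}\binom{2k+1}{k}$ across the entire window $\tfrac{2k}{2k+1}r_{max} \leq |S| \leq r_{max}$ when combined with the parameter choice $c = r_{max}$, $t = \tfrac{1}{2k+1}r_{max}$ in Lemma~\ref{lem:sep-game}. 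With both in hand, transitivity finishes the proof.
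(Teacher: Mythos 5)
Your proposal is correct and matches the paper exactly: Theorem~\ref{thm:efx-complexity} is obtained by chaining Theorem~\ref{thm:efx-reduction} with Theorem~\ref{thm:local-search-complexity}, just as you describe. The additional observations you record (identical submodular valuations, totality via Theorem~\ref{thm:general-identical}) are accurate and consistent with what the paper establishes upstream.
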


%%%%%%%%%%%%%%%%%%%%%%%%%%%%%%%%%%%%
\subsection{Randomized query complexity}\label{sec:rand-complexity}

Our reduction from \textsc{Local Search} to \textsc{EFX Allocation} also yields an exponential lower bound for randomized algorithms
for free, thanks to results due to Dinh and Russell~\shortcite{Dinh10:Quantum} and Valencia-Pabon and Vera~\shortcite{Valencia-Pabon05:Diameter}. Let $R[LS(G)]$ be the minimum
number of queries required to solve \textsc{Local Search} on $G$ by a
randomized algorithm: the algorithm should output a local
  maximum with probability at least $2/3$ (say) over its internal coin
  flips. Formally, $R[LS(G)] = \min\limits_{\Gamma_R} \max\limits_f T(G, f, \Gamma_R)$, where $\Gamma_R$ ranges over the set of randomized algorithms.

Similarly, let $R[EFX_{id}(2,2k+1)]$ be the minimum number of queries required by
a randomized algorithm to find an EFX allocation for two players with identical valuations, and $2k+1$ goods (again with correctness probability at least
2/3, say).

\begin{theorem}[\cite{Dinh10:Quantum}]\label{thm:general-rand} If $G = (V,E)$ is a vertex transitive graph with diameter $d$, then
\begin{align*}
R[LS(G)] \in \Omega\Big(\frac{\sqrt{|V|}}{d \cdot \log |V|} \Big)
\end{align*}
\end{theorem}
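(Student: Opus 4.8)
Since this theorem is due to Dinh and Russell \cite{Dinh10:Quantum}, one option is simply to invoke it. To sketch how it is proved, the plan is to combine a hard distribution over instances with the \emph{relational adversary method} --- a classical (randomized) analogue of Ambainis's quantum adversary argument, in the spirit of Aaronson's earlier lower bound for local search on the Boolean hypercube --- and then specialize the resulting combinatorial estimate to vertex-transitive graphs.

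\emph{Hard instances.} The hard instances are ``hidden snake'' functions. Fix a length parameter $\ell$ of order $\sqrt{|V|}$, fix a base vertex $x_0$, and choose a random walk (conditioned, if necessary, to be self-avoiding) $X = (x_0, x_1, \dots, x_\ell)$ in $G$. Define $f_X$ to increase strictly along the walk and to assign every off-walk vertex a value below all on-walk values, chosen (e.g.\ as a strictly decreasing function of the distance in $G$ to $x_0$) so that following the ``gradient'' from anywhere eventually reaches the walk and then its endpoint; the construction is arranged so that $x_\ell$ is the \emph{unique} local maximum of $f_X$. Solving \textsc{Local Search} on $f_X$ is therefore equivalent to locating the endpoint of the hidden walk, and a query to a vertex $v$ reveals only (i) the position of $v$ on the walk if $v \in X$, or (ii) an $O(\log d)$-bit hint (a graph distance) if $v \notin X$. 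Hence an algorithm learns essentially nothing about the tail of the walk until it has queried vertices close to the tail.

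\emph{Adversary bound and its reduction.} Next one sets up the relational adversary: relate pairs of walks $(X,Y)$ that agree on a prefix and then diverge, equip the relation with a weight $w(X,Y)$, and apply the relational adversary theorem, which lower-bounds the bounded-error randomized query complexity by a minimum, over related pairs and over ``pivotal'' query vertices, of $w(X,Y)$ divided by the geometric mean of the walk-loads of $X$ and $Y$ at the pivotal vertex. This reduces the whole problem to a probabilistic estimate: bound, over a uniformly random walk, the probability that it visits a fixed vertex at a fixed time, and relatedly that two walks sharing a time-$t$ prefix both reach a common vertex afterward.

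\emph{Main obstacle.} The only place where vertex-transitivity and the diameter genuinely enter --- and the technical heart of the argument --- is this last estimate. For the hypercube one uses explicit binomial tail bounds; for a general vertex-transitive graph one instead averages over a transitive automorphism group (so all vertices are ``equivalent'') and argues that a length-$\ell$ random walk is sufficiently spread out. The diameter $d$ controls how slowly the walk escapes balls, hence how concentrated the time-$t$ distribution can be (which is why the bound degrades with $d$ and becomes vacuous for, say, the cycle), while the factor $\log |V|$ absorbs the slack in the relational bound together with the $O(\log d)$ hint leaked by off-walk queries. Choosing $\ell$ of order $\sqrt{|V|}$ to balance the collision and leakage terms against $|V|$, and turning the crank, yields $R[LS(G)] \in \Omega\bigl(\sqrt{|V|}/(d\log|V|)\bigr)$. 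Since the construction produces a legitimate \textsc{Local Search} instance for every walk $X$, Yao's minimax principle converts this hard-distribution bound into a lower bound against all randomized algorithms.
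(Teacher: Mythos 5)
The paper does not prove this statement at all: it is quoted verbatim from Dinh and Russell \cite{Dinh10:Quantum} and used as a black box (together with the Odd-graph diameter bound of Valencia-Pabon and Vera) to derive Theorem~\ref{thm:efx-rand}. You correctly lead with the option of simply invoking the citation, which is exactly the paper's treatment, so there is no gap to speak of. Your accompanying sketch of the underlying Dinh--Russell argument --- hidden walk whose endpoint is the unique local maximum, relational adversary over pairs of walks diverging after a shared prefix, averaging over the transitive automorphism group, diameter controlling concentration and the $\log |V|$ slack, walk length of order $\sqrt{|V|}$, Yao's principle --- is a fair high-level summary of the cited result, though a few details (e.g.\ the exact source of the $\log|V|$ factor, and whether Dinh--Russell literally conditions on self-avoidance rather than using a ``staircase''-type labeling) are glossed; since the paper never reproduces this argument, those details are internal to the reference and immaterial here.
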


Since $K(2k + 1, k)$ is vertex transitive, the last piece of the puzzle is the following theorem,

\begin{theorem}[\cite{Valencia-Pabon05:Diameter}]\label{thm:diameter}
The diameter of $K(2k+1, k)$ is $k$.
\end{theorem}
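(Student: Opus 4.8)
The plan is to determine $d(A,B)$ exactly as a function of $j:=|A\cap B|$ and then optimize over $j$. Recall that $K(2k+1,k)$ is the Odd graph, whose vertices are the $k$-subsets of $[2k+1]$ with $A\sim A'$ iff $A\cap A'=\emptyset$; equivalently, the neighbors of $A$ are exactly the sets $([2k+1]\setminus A)\setminus\{x\}$ for $x\in[2k+1]\setminus A$ (so every vertex has degree $k+1$). The structural fact I would record up front: along any walk $A_0,A_1,\dots$ with $x_m$ the element ``omitted'' at step $m$ (so $A_{m+1}=([2k+1]\setminus A_m)\setminus\{x_m\}$), and for any fixed target set $B$ with $|B|=k$, the quantity $\phi_m:=|A_m\cap B|$ satisfies $\phi_{m+1}=k-\phi_m-\varepsilon_m$ with $\varepsilon_m:=[x_m\in B]\in\{0,1\}$. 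I would also note a ``two-step swap'': from $A$ one can reach $(A\setminus\{y\})\cup\{x\}$ for any $y\in A$, $x\notin A$ in two steps, via the intermediate vertex $([2k+1]\setminus A)\setminus\{x\}$.

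For the upper bound, iterate the two-step swap: writing $A\setminus B=\{a_1,\dots,a_i\}$ and $B\setminus A=\{b_1,\dots,b_i\}$ with $i=k-j$, swap each $a_t$ out and $b_t$ in successively (checking at each stage that $b_t$ is still outside and $a_t$ still inside the current bundle), reaching $B$ in $2i=2(k-j)$ steps. A complementary bound comes from first stepping to a neighbor $A'=([2k+1]\setminus A)\setminus\{x\}$ with $x\in[2k+1]\setminus(A\cup B)$, a nonempty set since $|A\cup B|=2k-j<2k+1$; this gives $|A'\cap B|=|B\setminus A|=k-j$, hence $|A'\setminus B|=j$, so the swap bound applied to $(A',B)$ yields $d(A,B)\le 1+2j$. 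Thus $d(A,B)\le\min\big(2(k-j),\,2j+1\big)$, and since the right side is an integer bounded by $\tfrac12\big(2(k-j)+(2j+1)\big)=k+\tfrac12$, every pair is at distance at most $k$.

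For the matching lower bound I would run the recursion $\phi_{m+1}=k-\phi_m-\varepsilon_m$ along an arbitrary walk of length $\ell$ from $A$ to $B$, where $\phi_0=j$ and $\phi_\ell=k$. The substitution $\psi_m=\phi_m-\tfrac k2$ linearizes it to $\psi_{m+1}=-\psi_m-\varepsilon_m$, with solution $\psi_\ell=(-1)^\ell\psi_0-\sum_{m=0}^{\ell-1}(-1)^{\ell-1-m}\varepsilon_m$; the alternating sum of $\ell$ terms from $\{0,1\}$ lies in $[-\lfloor\ell/2\rfloor,\lceil\ell/2\rceil]$. Imposing $\psi_\ell=\tfrac k2$ then forces $\ell\ge 2(k-j)$ when $\ell$ is even and $\ell\ge 2j+1$ when $\ell$ is odd, so $d(A,B)\ge\min(2(k-j),2j+1)$, and therefore $d(A,B)=\min(2(k-j),2j+1)$ exactly. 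Finally, the equation $\min(2(k-j),2j+1)=k$ is precisely $\tfrac{k-1}{2}\le j\le\tfrac k2$, an interval of length $\tfrac12$ that always contains an integer ($j=k/2$ for even $k$, $j=(k-1)/2$ for odd $k$); choosing any $A,B$ with that intersection size — e.g. $A=[k]$ and $B=[j]\cup\{k+1,\dots,2k-j\}$ — produces a pair at distance $k$, so the diameter equals $k$.

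I expect the lower bound to be the main obstacle: the work is in isolating the linear recursion for $|A_m\cap B|$ and extracting from it the parity-dependent length bounds. The upper bound is a routine explicit-path construction, and the concluding optimization over $j$ is elementary arithmetic.
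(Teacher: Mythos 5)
The paper states Theorem~\ref{thm:diameter} by citation to Valencia-Pabon and Vera and does not reprove it, so there is no in-paper argument to compare against; your proof should therefore be judged as a self-contained derivation, and it is correct. You actually establish the stronger, per-pair formula $d(A,B)=\min\bigl(2(k-j),\,2j+1\bigr)$ with $j=|A\cap B|$, from which the diameter is $k$ by an elementary optimization. The two-step ``swap'' $A\to([2k+1]\setminus A)\setminus\{x\}\to(A\setminus\{y\})\cup\{x\}$ is legitimate (the intermediate vertex is disjoint from both endpoints), the iterated swap gives the even-length path of length $2(k-j)$, and the odd-parity shortcut through a common non-element is available because $|A\cup B|=2k-j<2k+1$. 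For the lower bound, the recursion $\phi_{m+1}=k-\phi_m-\varepsilon_m$ for $\phi_m=|A_m\cap B|$ is computed correctly (since $A_{m+1}=[2k+1]\setminus(A_m\cup\{x_m\})$), the closed form $\psi_\ell=(-1)^\ell\psi_0-\sum_{m=0}^{\ell-1}(-1)^{\ell-1-m}\varepsilon_m$ is right, and since the alternating sum of $\ell$ indicators lies in $[-\lfloor\ell/2\rfloor,\lceil\ell/2\rceil]$, imposing $\phi_\ell=k$ indeed forces $\ell\ge2(k-j)$ for even $\ell$ and $\ell\ge2j+1$ for odd $\ell$. The concluding check that an integer $j$ with $\min(2(k-j),2j+1)=k$ always exists ($j=k/2$ or $j=(k-1)/2$ by parity of $k$), together with the explicit pair $A=[k]$, $B=[j]\cup\{k+1,\dots,2k-j\}$, closes the argument. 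This is a clean and complete proof; it matches the standard distance analysis of Odd graphs.
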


With these two tools in hand, Theorem~\ref{thm:efx-rand} requires only a short proof. 

\begin{theorem}\label{thm:efx-rand} The randomized query complexity of the \textsc{EFX Allocation} problem satisfies
\begin{align*}
R[EFX_{id}(2, 2k+1)] \in \Omega\Big(\sqrt{\mbinom{2k+1}{k}} \mfrac{1}{k^2} \Big)
\end{align*}
even for two players with identical submodular valuations.
\end{theorem}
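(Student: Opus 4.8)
The plan is to chain together the reduction of Theorem~\ref{thm:efx-reduction} with the randomized local search lower bound of Theorem~\ref{thm:general-rand}, instantiated on the Odd graph $K(2k+1,k)$. First I would observe that the reduction in Theorem~\ref{thm:efx-reduction} is query-preserving in a way that carries over verbatim to randomized algorithms: given a randomized algorithm $\Gamma_R$ for \textsc{EFX Allocation} on two players with identical submodular valuations using an expected (or worst-case) number of queries $T$ and succeeding with probability at least $2/3$, the algorithm $\Gamma_R'$ that constructs the valuation $v$ from $f$ exactly as in the proof of Theorem~\ref{thm:efx-reduction}, runs $\Gamma_R$, and returns the smaller of the two bundles, solves \textsc{Local Search} on $K(2k+1,k)$ with the same success probability and the same query count (it issues no queries of its own). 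Hence $R[EFX_{id}(2,2k+1)] \geq R[LS(K(2k+1,k))]$.

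Next I would apply Theorem~\ref{thm:general-rand} to $G = K(2k+1,k)$. The hypotheses are met: $K(2k+1,k)$ is vertex transitive, and by Theorem~\ref{thm:diameter} its diameter is $d = k$. Its vertex set has size $|V| = \binom{2k+1}{k}$. Plugging in,
\begin{align*}
R[LS(K(2k+1,k))] \in \Omega\!\left(\frac{\sqrt{\binom{2k+1}{k}}}{k \cdot \log \binom{2k+1}{k}}\right).
\end{align*}
Combining with the reduction from the previous paragraph gives the same lower bound for $R[EFX_{id}(2,2k+1)]$.

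Finally, to reach the stated form $\Omega\!\left(\sqrt{\binom{2k+1}{k}}\,/\,k^2\right)$, I would bound the logarithmic factor: since $\binom{2k+1}{k} \leq 2^{2k+1}$, we have $\log\binom{2k+1}{k} = O(k)$, so $k \cdot \log\binom{2k+1}{k} = O(k^2)$, which turns the denominator $k\log\binom{2k+1}{k}$ into $O(k^2)$ and yields $R[EFX_{id}(2,2k+1)] \in \Omega\!\left(\sqrt{\binom{2k+1}{k}}\,\tfrac{1}{k^2}\right)$. I do not expect any real obstacle here — every ingredient is already in place; the only mild care needed is in confirming that the reduction genuinely preserves randomized success probability (it does, since $\Gamma_R'$ is deterministic given $\Gamma_R$'s output and the correspondence ``EFX allocation $\iff$ local maximum'' established in Theorem~\ref{thm:efx-reduction} is exact, not approximate) and in the elementary estimate $\log\binom{2k+1}{k} = O(k)$.
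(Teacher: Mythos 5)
Your proposal matches the paper's proof essentially step for step: invoke Theorem~\ref{thm:general-rand} on the vertex-transitive graph $K(2k+1,k)$ with $|V| = \binom{2k+1}{k}$ and diameter $k$ (Theorem~\ref{thm:diameter}), absorb the $\log\binom{2k+1}{k} = O(k)$ factor, and observe that the reduction of Theorem~\ref{thm:efx-reduction} is query-preserving for randomized algorithms as well, giving $R[EFX_{id}(2,2k+1)] \geq R[LS(K(2k+1,k))]$. The only difference is cosmetic ordering; the argument is correct and the same as the paper's.
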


\begin{proof}
Since $|V| = \mbinom{2k+1}{k}$ and $\log \left(\mbinom{2k+1}{k}\right) \in O(\log (4^k)) = O(k)$, we have
\[
R[LS(K(2k+1, k))] \in \Omega\Big(\sqrt{\mbinom{2k+1}{k}} \mfrac{1}{dk} \Big)
\]
by Theorem~\ref{thm:general-rand}. Thus by Theorem~\ref{thm:diameter}, we have
\[
R[LS(K(2k+1, k))] \in \Omega\Big(\sqrt{\mbinom{2k+1}{k}} \mfrac{1}{k^2} \Big)
\]

The reduction used to prove that $D[EFX_{id}(2, 2k+1)] \geq D[LS(K(2k+1,k))]$
can equivalently be used to show that
\[
R[EFX_{id}(2, 2k+1)] \geq R[LS(K(2k+1,k))].
\] 
Therefore $R[EFX_{id}(2, 2k+1)] \in \Omega\Big(\sqrt{\mbinom{2k+1}{k}} \mfrac{1}{k^2} \Big)$.
\end{proof}

While this bound is not as strong as our deterministic lower bound (Theorem~\ref{thm:efx-complexity}), it does establish that even a randomized algorithm requires an exponential number of queries to find an EFX allocation.

\section{Existence of EFX allocations for general but identical valuations}\label{sec:gen-id}

We mentioned in the previous section that an EFX allocation is guaranteed to exist when all players have the same valuation: this section proves that claim. Specifically, we show that a modified version of the leximin solution is guaranteed to be EFX for general but identical valutions. This also yields a cut-and-choose-based protocol for two players with general and possibly distinct valuations.

\subsection{The leximin solution}
The leximin solution selects the allocation which maximizes the minimum utility of any player. If there are multiple allocations which achieve that minimum utility, it chooses among those the one which maximizes the second minimum utility, and so on. This implicitly specifies a comparison operator $\prec$, which is given by Algorithm~\ref{alg:leximin}, and constitutes a total ordering over allocations. 

The operator $\prec$ takes as input two allocations $A$ and $B$, and the list of player valuations $(v_1...v_n)$. The players are ordered by utility, and according to some arbitrary but consistent tiebreak for players with the same utility (for example, by player index). The comparison terminates when the $\ell$th player in $A$'s ordering $X^A$ has different utility from the $\ell$th player in $B'$s ordering $X^B$.

The leximin solution is the global maximum under this ordering. The leximin solution is trivially PO, since if it were possible to improve the utility of one player without decreasing the utility of any other player, the new allocation would be strictly larger under $\prec$.

\subsubsection{Standard leximin is not EFX}
Unfortunately, the standard leximin solution is not always EFX, even for identical valuations. Consider two players with the same (non-additive) valuation $v$ over two goods $a$ and $b$. Define $v$ by
\[ v(S) =
   \begin{cases} 
      0 & \textnormal{if}\ \ S = \{a\}\\
      1 & \textnormal{if}\ \ S = \{b\}\\
      2 & \textnormal{if}\ \ S = \{a,b\}\\
   \end{cases}
\]

By symmetry, suppose without loss of generality that player 1 receives good $b$. Define the allocation $A$ by $A_1 = \{b\}$ and $A_2 = \{a\}$, and define the allocation $B$ by $B_1 = \{a,b\}$ and $B_2 = \emptyset$.

Since player 2 (the minimum utility player) is indifferent between $A$ and $B$, leximin selects allocation $B$ because it maximizes the value of player 1 (the second minimum utility player). However, $A$ is EFX, while $B$ is not: player 2 envies player 1 even after the removal of $a$ from $B_1$.\footnote{This example will be relevant again in Section~\ref{sec:po} as an instance where there is no allocation which is both EFX and PO.} 

\subsection{The leximin++ solution}
Our fix is that after maximizing the minimum utility, we maximize the size of the bundle of the player with minimum utility, before maximizing the second minimum utility. Then we maximize the second minimum utility, followed by the size of the second minimum utility bundle, and so on. Thus giving good $a$ to the lower utility player (player 2) is preferable, and so the EFX allocation $A$ is chosen over $B$.

We call this the \emph{leximin++ solution}. The leximin++ solution induces a comparison operator $\prec_{++}$, also given in Algorithm~\ref{alg:leximin}. Similarly to $\prec$, the players are ordered by increasing utility, and then according to an arbitrary but consistent tiebreak among players with the same utility.\footnote{The tiebreak method must be consistent to ensure that $\prec_{++}$ is a total ordering. Consider two players with the same valuation $v$, and a single good $a$ where $v(\{a\}) = 0$. Suppose $a \in A_1$. Since both players have zero utility, if the tiebreak method were not required to be consistent, both $\{1,2\}$ and $\{2,1\}$ would be valid player orderings for $A$. Consider running $A \prec_{++} A$. If player 2 were considered first in the $A$ on the left, and player 1 were considered first in the $A$ on the right, the operator would see that player 1 has a larger bundle than player 2, and return true.}
The comparison terminates when the $\ell$th player in $X^A$ differs in utility or bundle size from the $\ell$th player in $X^B$, with utility being checked before bundle size.

\begin{algorithm*}[tb]
\centering
\begin{algorithmic}[1]
  \Function{LeximinCmp}{$A, B, (v_1...v_n)$} \Comment{Returns true if $A \prec B$ (strictly)}
  \State $X^A \gets$ ordering of players by increasing utility $v_i(A_i)$, then by some arbitrary but consistent tiebreak method for players with the same utility
  \State $X^B \gets$ corresponding ordering of players under $B$
  \ForAll {$\ell \in [n]$}
  	\State $i \gets X^A_\ell$ \Comment{$\ell$th player in the ordering $X_A$}
	\State $j \gets X^B_\ell$ \Comment{$\ell$th player in the ordering $X_B$}
	\If {$v_i(A_i) \neq v_j(B_j)$}
		\Return $v_i(A_i) < v_j(B_j)$
	\EndIf
  \EndFor
  \Return false \Comment{In this case, $A$ and $B$ are equal}
\EndFunction
\end{algorithmic}
  
\begin{algorithmic}[2]
  \Function{Leximin++Cmp}{$A, B, (v_1...v_n)$} \Comment{Returns true if $A \prec_{++} B$ (strictly)}
  \State $X^A \gets$ same as in \textsc{LeximinCmp}
  \State $X^B \gets$ same as in \textsc{LeximinCmp}
  \ForAll {$\ell \in [n]$}
  	\State $i \gets X^A_\ell$
	\State $j \gets X^B_\ell$
	\If {$v_i(A_i) \neq v_j(B_j)$}
		\Return $v_i(A_i) < v_j(B_j)$
	\EndIf
	\If {$|A_i| \neq |B_j|$}
		\Return $|A_i| < |B_j|$
	\EndIf
  \EndFor
  \Return false
\EndFunction
\end{algorithmic}
\caption{Leximin and Leximin++ comparison operators}
\label{alg:leximin}
\end{algorithm*}

It may not be immediately clear that $\prec_{++}$ specifies a total ordering, but this is in fact the case. The proof of Theorem~\ref{thm:total} appears in Appendix Section~\ref{sec:add-proofs}.

\begin{theorem}\label{thm:total}
The comparison operator $\prec_{++}$ specifies a total ordering.
\end{theorem}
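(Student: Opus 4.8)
The plan is to show that $\prec_{++}$ is irreflexive, antisymmetric, transitive, and total, which together establish that it is a strict total order. The core idea is to encode each allocation $A$ as a \emph{signature sequence}: the list of pairs $\big(v_{X^A_1}(A_{X^A_1}), |A_{X^A_1}|\big), \big(v_{X^A_2}(A_{X^A_2}), |A_{X^A_2}|\big), \ldots, \big(v_{X^A_n}(A_{X^A_n}), |A_{X^A_n}|\big)$, where $X^A$ is the canonical ordering of players (first by increasing utility, then by the fixed consistent tiebreak). Then $A \prec_{++} B$ is precisely the lexicographic comparison of these two length-$n$ sequences, where each pair is compared lexicographically (utility first, then bundle size). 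Since lexicographic order on sequences over a totally ordered alphabet (here, pairs in $\mathbb{R} \times \mathbb{Z}_{\ge 0}$ with the product lexicographic order, which is itself a total order) is a total order, the result follows once the encoding is well-defined.

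First I would make the signature well-defined: the ordering $X^A$ is determined up to the behavior of the tiebreak, but the key point is that the \emph{consistency} of the tiebreak method means $X^A$ is a single well-defined permutation for each fixed allocation $A$ — this is exactly the issue flagged in the footnote about running $A \prec_{++} A$. So I would state explicitly that ``consistent tiebreak'' means the tiebreak is a fixed function of the players (e.g.\ player index), hence $X^A$ depends only on $A$, and the signature $\sigma(A)$ is well-defined. Next I would observe that the loop in \textsc{Leximin++Cmp} returns true iff $\sigma(A) <_{\mathrm{lex}} \sigma(B)$ and returns false iff $\sigma(A) = \sigma(B)$ or $\sigma(A) >_{\mathrm{lex}} \sigma(B)$; in particular $A \prec_{++} B$ iff $\sigma(A) <_{\mathrm{lex}} \sigma(B)$. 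Irreflexivity ($\sigma(A) \not<_{\mathrm{lex}} \sigma(A)$), antisymmetry, and transitivity are then immediate from the corresponding properties of lexicographic order on tuples. For totality, given $A \neq B$, I'd argue that either $\sigma(A) \neq \sigma(B)$, in which case one is lex-smaller than the other, or $\sigma(A) = \sigma(B)$, in which case $\prec_{++}$ declares them equal — and a total \emph{preorder} that is also antisymmetric on the equivalence classes is what we want; strictly speaking $\prec_{++}$ is a strict total order on the quotient by ``same signature,'' which is the standard meaning of ``total ordering'' here. I would phrase the statement so that allocations with identical signatures are treated as incomparable-but-equal, matching the algorithm's ``return false.''

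The main obstacle — really the only subtlety — is the well-definedness of $X^A$ when several players have equal utility: one must be careful that the tiebreak is applied \emph{identically} on the left and right arguments of the comparison, and that it does not depend on which allocation is being compared against. Once ``consistent'' is pinned down to mean ``a fixed total order on the player set used to break all utility ties,'' the rest is the routine fact that a lexicographic product of total orders is a total order. I would also double-check the one mild wrinkle that the comparison in the algorithm checks utility before bundle size within each coordinate, so the per-coordinate order is the lexicographic order on $\mathbb{R} \times \mathbb{Z}_{\ge 0}$ rather than, say, only utility; this is exactly what makes the two-level lexicographic structure go through. I expect the formal write-up to be short, with most of the text spent on making the signature encoding precise.
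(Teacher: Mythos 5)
Your proof proposal is correct, and it is a genuinely cleaner route than the paper's own argument, though both rest on the same key observation. The paper proves the two nontrivial axioms (irreflexivity and transitivity) by hand: irreflexivity via the observation that the consistent tiebreak makes $X^A$ a uniquely determined permutation so that $A \prec_{++} A$ compares identical pairs at every index; and transitivity via a three-case analysis on the iteration indices $\ell_1$, $\ell_2$, $\ell_3$ at which the comparisons $A \prec_{++} B$, $B \prec_{++} C$, and $A \prec_{++} C$ terminate, showing in each case that the first discrepancy propagates. You instead factor the argument through the right abstraction: once $X^A$ is pinned down as a function of $A$ alone, the map $A \mapsto \sigma(A) \in (\mathbb{R} \times \mathbb{Z}_{\ge 0})^n$ is well defined, and \textsc{Leximin++Cmp} is \emph{by inspection} the lexicographic comparison of $\sigma(A)$ against $\sigma(B)$ where each coordinate is itself ordered lexicographically (utility first, then size). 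All the axioms then come for free from the standard fact that a lexicographic product of total orders is a total order, and the quotient-by-equal-signature remark correctly captures what ``total ordering'' must mean here since distinct allocations can share a signature. Your approach is shorter, less error-prone, and also makes explicit the totality/trichotomy property, which the paper does not separately verify; the paper's version is more self-contained and avoids introducing the signature machinery. Both hinge on the identical crux: consistency of the tiebreak making $X^A$ well defined (you and the paper both flag this, the paper in a footnote). No gap.
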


We are now ready to prove our main result of this section.

\begin{theorem}\label{thm:general-identical}
For general but identical valuations, the leximin++ solution is EFX.
\end{theorem}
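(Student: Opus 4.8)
The plan is a direct exchange argument: assume toward a contradiction that the leximin++ solution $A = (A_1, \ldots, A_n)$ is not EFX, and exhibit an allocation $B$ with $A \prec_{++} B$, contradicting that $A$ is the $\prec_{++}$-maximum (which is a well-defined allocation since $\prec_{++}$ is a total ordering by Theorem~\ref{thm:total} and there are finitely many allocations). Write $v$ for the common valuation. The failure of EFX gives players $i, j$ and a good $g \in A_j$ with $v(A_i) < v(A_j \setminus \{g\})$.

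First I would reduce to the case where the envious player has the globally minimum utility. Let $\mu = \min_\ell v(A_\ell)$ and $P = \{\ell : v(A_\ell) = \mu\}$. For every $k \in P$ we have $v(A_k) = \mu \le v(A_i) < v(A_j \setminus \{g\})$, so $(k, j, g)$ is again a violation; moreover $k \ne j$, since monotonicity gives $v(A_j) \ge v(A_j \setminus \{g\}) > \mu$. Among the players in $P$ I would pick $k$ to be the \emph{last} one in the fixed, allocation-independent tiebreak order used by $\prec_{++}$ — this particular choice is what makes the comparison below come out cleanly. Define $B$ by moving $g$ from $j$ to $k$: $B_k = A_k \cup \{g\}$, $B_j = A_j \setminus \{g\}$, and $B_\ell = A_\ell$ for all other $\ell$.

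Next I would compare $A$ and $B$ under $\prec_{++}$. By monotonicity $v(B_k) = v(A_k \cup \{g\}) \ge \mu$, and $v(B_j) = v(A_j \setminus \{g\}) > \mu$, while all other utilities are unchanged; hence the minimum utility in $B$ is again $\mu$ (or larger), and the only players that can attain it in $B$ are those of $P \setminus \{k\}$, together with $k$ itself exactly when $v(B_k) = \mu$. Since the tiebreak order does not depend on the allocation, the first $|P|-1$ slots of the orderings of both $A$ and $B$ consist of the players of $P \setminus \{k\}$ in tiebreak order, each contributing utility $\mu$ and an identical bundle size, so $A$ and $B$ agree slot-for-slot there. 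At slot $|P|$, the ordering of $A$ has player $k$, contributing the pair $(\mu, |A_k|)$. In $B$, slot $|P|$ either is still $k$ — when $v(B_k) = \mu$ — now contributing $(\mu, |A_k| + 1)$, or is a player of utility strictly above $\mu$ — when $v(B_k) > \mu$, in which case $P \setminus \{k\}$ already exhausts the minimum-utility class. In the first situation the utilities tie but the bundle size strictly increases; in the second the utility strictly increases. Either way, the comparison operator returns $A \prec_{++} B$ when it reaches $\ell = |P|$, which is the desired contradiction.

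The main obstacle is the bookkeeping around the tiebreak order inside the comparison operator: deleting $k$ from the minimum-utility class shifts every later minimum-utility player one slot earlier in $B$'s ordering, which would spoil the slot-by-slot comparison were $k$ chosen arbitrarily (the bundle sizes of two distinct minimum-utility players are unrelated). Taking $k$ to be the last minimum-utility player in the tiebreak order confines this shift to the very end of the minimum-utility block, so that all earlier slots match exactly and the decisive change — a strictly larger bundle for the minimum-utility player, or a strictly larger utility at that slot — occurs precisely at slot $|P|$. It remains to check the degenerate sub-cases, such as $|P| = 1$ or $j$ being the unique non-minimum player, but the same slot-$|P|$ analysis handles them uniformly.
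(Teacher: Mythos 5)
Your proposal is correct and follows essentially the same route as the paper's proof: you pick the \emph{last} minimum-utility player in the tiebreak order, move $g$ from $j$ to that player, argue that the orderings $X^A$ and $X^B$ agree slot-for-slot through position $|P|-1$, and at slot $|P|$ split on whether the bundle size or the utility strictly increases, yielding $A \prec_{++} B$ in either case. The paper phrases this in terms of the sets $S$ (players before $i$ in $X^A$) and $T$ (players after $i$), but the substance matches your $P\setminus\{k\}$ argument exactly.
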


\begin{proof}
Let $A$ be an allocation that is not EFX. We will show that $A$ is not the leximin++ solution.

Since $A$ is not EFX, there exist players $i,j$ and $g\in A_j$ where $v(A_i) < v(A_j \backslash\{g\})$. Then any player with utility $\min_k v(A_k)$ must also have utility strictly less than $v(A_j \backslash\{g\})$, so assume with loss of generality that $i = \argmin_k v(A_k)$. If there are multiple players with minimum utility in $A$, let $i$ be the one considered last in the ordering $X^A$, according to the arbitrary but consistent tiebreak method.

Define a new allocation $B$ where $B_{i} = A_{i} \cup \{g\}$, $B_j = A_j \backslash \{g\}$, and $B_k = A_k$ for all $k \not\in \{i,j\}$. We will show that $A \prec_{++} B$. 

Let $S$ be the set of players appearing before $i$ in $X^A$. We know $i$ is considered last among the players with minimum utility by assumption, so $S$ is exactly the set of players with minimum utility, other than $i$. Note that neither $i$ nor $j$ are in $S$.

Since the only bundles that differ between allocations $A$ and $B$ are that of $i$ and $j$, we have $A_k = B_k$ for all $k\in S$. Thus for all $k\in S$, $v(B_k) = v(A_k) = v(A_i)$. Since $v(B_j) > v(A_i)$, $j$ must occur after every player in $S$ in $X^B$. 

Because $A_i \subset B_i$, we have $v(B_i) \geq v(A_i)$. If $v(B_i) > v(A_i)$, $i$ must occur after every player in $S$ in $X^B$, since $v(B_i) > v(B_k)$ for all $k\in S$. If $v(B_i) = v(A_i)$, $i$ is still considered after every player in $S$ according to the arbitrary but consistent tiebreak method. Thus $i$ occurs after every player in $S$ in $X^B$ in either case, which shows that the first $|S|$ players in $X^B$ are the players in $S$, in the same order they occur in $X^A$.

Therefore the leximin++ comparison will not have terminated before reaching position $|S| + 1$ in the orderings. Let $T$ be the set of players appearing after $i$ in $X^A$: note that $j\in T$. By assumption, of the players with minimum utility in $A$, $i$ appears last in $X^A$. Therefore all players after $i$ in $X^A$ do not have minimum utility, so $v(A_k) > v(A_i)$ for all $k \in T$. Recall that $v(B_j) > v(A_i)$ and that for all $k\in T\backslash\{j\}$, $v(B_k) = v(A_k)$. Thus $v(B_k) > v(A_i)$ for all $k\in T$.

We know that $X^A_{|S| + 1} = i$. If $X^B_{|S| + 1} = i$, we have $|A_i| < |B_i|$ (and possibly also $v(A_i) < v(B_i)$), so $A \prec_{++} B$ returns true. If $X^B_{|S| + 1} = k$ for some $k \neq i$, then $k \in T$. Therefore $v(A_i) < v(B_k)$, so $A \prec_{++} B$ returns true in this case as well.

Since $A \prec_{++} B$, $A$ cannot be the leximin++ solution. Therefore the leximin++ solution must be EFX.
\end{proof}
%%%%%%%%%%%%%%%%%%%%%%%%%%%%%%
We now show how Theorem~\ref{thm:general-identical} can easily be used to find an EFX allocation for two players with general and
possibly distinct valuations.\footnote{The two-player case is not
  trivial.  For example, our lower bound in
  Theorem~\ref{thm:efx-complexity} already applies with two players
  (even with identical valuations).}
Our algorithm for this follows from the
observation that any player can partition the goods into $k$ bundles
that are mutually EFX from her viewpoint, simply by computing the leximin++ solution with $k$ copies of herself.

Algorithm~\ref{alg:two-players} is a straightforward adaptation of the
cut-and-choose protocol. Player $1$ partitions the goods into two
bundles using the leximin++ solution, and player $2$
chooses her favorite bundle.

\begin{algorithm*}[tb]
\centering
\begin{algorithmic}[1]
  \Function{CutAndChoose}{$m, v_1, v_2$}
  \State $(A_1, A_2) \gets \text{Leximin++Solution}(2,m,v_1)$ 
  	\Comment{Player 1 uses the leximin++ solution to cut,}
  \If{$v_2(A_1) \geq v_2(A_2)$} \Comment{and player 2 chooses.}
  	\Return $(A_2, A_1)$
  \Else 
 	 \Return $(A_1, A_2)$
  \EndIf
\EndFunction
\end{algorithmic}
\caption{Find an EFX allocation for two players with general valuations via cut-and-choose}
\label{alg:two-players}
\end{algorithm*}

\begin{theorem}\label{thm:two-players}
  For two players with general (not necessarily identical) valuations,
  Algorithm~\ref{alg:two-players} returns an EFX allocation.
\end{theorem}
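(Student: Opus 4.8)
The plan is to exploit Theorem~\ref{thm:general-identical} applied to two copies of player~1's valuation, together with the defining property of cut-and-choose. First I would observe that running the leximin++ solution on the instance $(2, m, v_1)$---that is, treating both players as having valuation $v_1$---produces, by Theorem~\ref{thm:general-identical}, a partition $(A_1, A_2)$ of the goods that is EFX \emph{with respect to $v_1$}. Concretely, this means $v_1(A_1) \geq v_1(A_2 \setminus \{g\})$ for every $g \in A_2$ and $v_1(A_2) \geq v_1(A_1 \setminus \{g\})$ for every $g \in A_1$; that is, no matter which of the two bundles player~1 ends up with, she does not envy the other bundle up to any good. So the EFX condition for player~1 holds in \emph{either} of the two possible final allocations, regardless of which bundle player~2 selects.

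Next I would handle player~2. By construction, player~2 chooses the bundle she weakly prefers: Algorithm~\ref{alg:two-players} assigns player~2 the bundle $A_1$ if $v_2(A_1) \geq v_2(A_2)$ and $A_2$ otherwise, so in the returned allocation player~2's bundle has value at least that of the other bundle under $v_2$. Hence $v_2(\text{player 2's bundle}) \geq v_2(\text{player 1's bundle}) \geq v_2(\text{player 1's bundle} \setminus \{g\})$ for every good $g$ in player~1's bundle, where the last inequality is monotonicity. So player~2 does not envy player~1 at all, which certainly implies the EFX condition for player~2. Combining the two parts: in the allocation returned, player~1 satisfies EFX toward player~2 (by the first paragraph) and player~2 satisfies EFX toward player~1 (by this paragraph), so the allocation is EFX.

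The only subtlety---and the one place the argument needs a word of care rather than being purely mechanical---is making sure the EFX guarantee of Theorem~\ref{thm:general-identical} is invoked symmetrically, i.e.\ that it gives the mutual-EFX statement ``$v_1(S) \geq v_1(S' \setminus \{g\})$ for each of the two bundles $S$ and $g$ in the other bundle $S'$,'' not merely a one-directional statement. This follows because the EFX definition quantifies over all ordered pairs $(i,j)$, and here both players have valuation $v_1$, so the $i=1, j=2$ clause and the $i=2, j=1$ clause together give exactly the two needed inequalities. I do not anticipate any real obstacle; the proof is short, and essentially amounts to unpacking the definitions and citing Theorem~\ref{thm:general-identical}. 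One could also remark (though it is not needed for the statement) that this is precisely why the classical cut-and-choose intuition goes through: player~1 ``cuts'' into pieces that are pairwise EFX-indifferent to her, and player~2 ``chooses.''
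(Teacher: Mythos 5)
Your proof is correct and follows essentially the same approach as the paper's: invoke Theorem~\ref{thm:general-identical} with two copies of $v_1$ to get a partition that is EFX from player~1's perspective regardless of which bundle she receives, then observe that player~2 takes her preferred bundle and hence does not envy at all. Your version spells out the symmetry of the mutual-EFX condition and the monotonicity step more explicitly, but the argument is the same.
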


\begin{proof}
By Theorem~\ref{thm:general-identical}, the allocation is EFX from player $1$'s viewpoint regardless of which bundle she receives. Player $2$ receives her favorite bundle, so the resulting allocation is EFX from her viewpoint as well.
\end{proof}

\subsection{Limitations of leximin++}
Unfortunately, the leximin++ solution may not be EFX when players have different valuations. For example, consider two players with valuations $v_1(S) = |S|$ and $v_2(S) = \epsilon |S|$, for some small $\epsilon > 0$. As long as player 1 receives at least one good, she will have utility at least 1. However, player 2 will always have utility less than 1 for a suitably small $\epsilon$. Thus the leximin++ solution gives a single good to player 1 and the rest to player 2, which will cause player 1 to envy player 2 in violation of EFX.

One might hope that this could be remedied by assuming that all players have the same value for the entire set of goods (or rescaling valuations as necessary if this is not the case). Unfortunately, the set of additive valuations given by Figure~\ref{fig:leximin-fails} thwarts this hope. \begin{figure}[h]
\centering
\begin{tabular}{c|cccc} 
& a & b & c & d\\
\hline
player 1 & 14 & 3 & 2 & 1 \\
player 2 & 7 & 6 & 4 & 3\\
player 3 & 20 & 0 & 0 & 0
\end{tabular}
\caption{An example where the leximin++ solution fails to be EFX even when all players have the same value for the entire set of goods.}
\label{fig:leximin-fails}
\end{figure}

We claim that the allocation $A = (\{b,d\}, \{c\}, \{a\})$ is the only allocation where all players have utility at least $4$. To see this, first observe that good $a$ must go to player 3, or player 3 has zero utility. Then the only way to give players 1 and 2 each utility at least $4$ is to give $\{b,d\}$ to player 1 and $\{c\}$ to player 2.

Since $A$ is the only allocation which gives all players utility at least $4$, $A$ must be the leximin++ solution. However, $A$ is not EFX, because $v_2(\{c\}) < v_2(\{b,d\} \backslash\{d\})$.

We mentioned at the beginning of this section that the leximin solution is trivially PO. The leximin++ solution does not share this guarantee. Indeed, this is necessary in order for the leximin++ solution to be EFX, since it is impossible to simultaneously guarantee EFX and Pareto optimality, even for identical valuations (Theorem~\ref{thm:no-efx-po-gen}). However, that example relies on zero value goods. We will show in the next section that if zero value goods are disallowed, the leximin solution becomes EFX as well as PO in two contexts.

\section{Pareto optimality}\label{sec:po}

In this section, we examine when EFX and Pareto optimality can be guaranteed simultaneously. We begin by showing that if a player is wholly indifferent to a good being added to her bundle (zero marginal utility), EFX and Pareto optimality can be mutually exclusive even in simple cases.

%%%%%%%%%%%%%%%%%%%%%%%%%%%%%%
\begin{theorem}\label{thm:no-efx-po-add}
If zero marginal utility is allowed, there exist additive valuations where no EFX allocation is also PO, even for two players.
\end{theorem}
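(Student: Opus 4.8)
The plan is to exhibit a small explicit instance — two players, additive valuations, with at least one good having zero value to one of the players — and then argue by exhaustive case analysis over all allocations that every EFX allocation fails to be Pareto optimal (and conversely every PO allocation fails to be EFX). The natural candidate is essentially the instance already used earlier in the paper to show that standard leximin is not EFX: two goods $a, b$, where player~1 has $v_1(\{a\}) = 0$, $v_1(\{b\}) = 1$, and player~2 has, say, $v_2(\{a\}) = 1$, $v_2(\{b\}) = 1$ (or some variant making the zero-value good pivotal). I would first state the valuations in a small table, then enumerate the (few) allocations.

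The key steps, in order: (1) Write down the instance and note additivity. (2) List all allocations of the goods to the two players — with two goods there are only four, and with three goods twelve, so this is finite and short. (3) For each allocation, check EFX: this amounts to verifying, for each envied player $j$ and each good $g \in A_j$, that $v_i(A_i) \geq v_i(A_j \setminus \{g\})$. (4) For each allocation, check Pareto optimality by comparing against the other allocations. (5) Observe that the set of EFX allocations and the set of PO allocations are disjoint. The crux is that giving the zero-value good to player~1 is "wasteful" (not PO, since player~2 would strictly benefit from receiving it at no cost to player~1), yet it is exactly this wasteful move that is needed to keep player~1 from envying player~2 under the EFX test — because the good player~1 doesn't value still counts toward $v_2$'s bundle when we remove some \emph{other} good in the EFX thought experiment.

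The main obstacle, such as it is, is choosing the numerical values so that the argument is genuinely tight: I need the PO allocations to concentrate all goods (or all the "contested" value) in a way that makes some player envious even after removing a single good, while the EFX-restoring allocation must necessarily hand a positive-value-to-the-other-player good to the indifferent player. I would verify that a two-good instance suffices; if not, fall back to three goods (e.g.\ $v_1 = (1,1,0)$, $v_2 = (0, \epsilon, 1)$ type patterns) where player~3's analogue isn't needed since the theorem only claims two players. The only care required is bookkeeping in the case enumeration, which is routine; there is no structural difficulty beyond picking the example well.

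Finally, I would remark that this instance is exactly the kind that motivates the "nonzero marginal utility" assumption introduced in the next subsection, tying the counterexample to the positive results that follow (Theorem~\ref{thm:two-players-po}).
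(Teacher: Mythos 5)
Your enumeration strategy is sound in principle, but both concrete instances you propose fail at step (5): each admits an allocation that is simultaneously EFX and Pareto optimal. In your two-good instance ($v_1(a)=0,\,v_1(b)=1$; $v_2(a)=v_2(b)=1$), the allocation $A_1=\{b\},A_2=\{a\}$ is envy-free (hence EFX) and PO. In your three-good fallback ($v_1=(1,1,0)$, $v_2=(0,\epsilon,1)$ on goods $a,b,c$), any PO allocation has $a\in A_1$ and $c\in A_2$, and the allocation $A_1=\{a,b\},A_2=\{c\}$ is then envy-free (player 2 values $\{a,b\}$ at only $\epsilon<1$) and PO. So the proof cannot be completed with either instance.

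The stated ``crux'' is also wrong as a matter of arithmetic: with additive valuations, a good $g$ with $v_1(\{g\})=0$ contributes nothing to $v_1(A_2\setminus\{g'\})$ for any other $g'$, so placing $g$ with player~1 never ``keeps player 1 from envying player 2 under the EFX test.'' The mechanism that actually forces the impossibility is a \emph{shared} high-value good, not a misplaced zero-value good. The paper's instance $v_1=(2,1,0)$, $v_2=(2,0,1)$ has exactly this structure: Pareto optimality pins $b\in A_1$ and $c\in A_2$ (each player has a privately valued good that the other values at zero), leaving the shared top good $a$ to one player, say player~1; then player~2 envies $A_1=\{a,b\}$, and removing $b$ still leaves $\{a\}$ with $v_2(\{a\})=2>1=v_2(A_2)$, violating EFX. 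The zero-value goods serve only to pin down the PO allocations; the EFX violation comes from the shared good surviving the removal. Your templates lack any good valued highly by both players, which is why they admit an allocation that satisfies both properties.
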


\begin{proof}
Consider the following additive valuations:
\begin{center}
\begin{tabular}{ c|ccc} 
& a & b & c\\
\hline
player 1 & 2 & 1 & 0 \\
%\hline
player 2 & 2 & 0 & 1\\
%\hline
\end{tabular}
\end{center}

Since $v_1(\{c\}) = 0$ but $v_2(\{c\}) > 0$, $c \in A_2$ in any PO allocation. Similarly, $b \in A_1$ in any PO allocation.

By symmetry, assume without loss of generality that $a \in A_1$, so $A_1 = \{a,b\}$ and $A_2 = \{c\}$. Then $v_2(\{c\}) = 1$, but $v_2(A_1 \backslash\{b\}) = v_2(\{a\}) = 2$, so the allocation is not EFX.

Therefore no allocation is both EFX and PO.
\end{proof}

A similar example exists for general and identical valuations. This example was also used in Section~\ref{sec:gen-id} to show that the leximin solution may not be EFX when zero marginal utility is allowed.

%%%%%%%%%%%%%%%%%%%%%%%%%%%%%%
\begin{theorem}\label{thm:no-efx-po-gen}
If zero marginal utility is allowed, there exist general and identical valuations where no EFX allocation is also PO, even for two players.
\end{theorem}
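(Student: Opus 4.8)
The plan is to recycle the non-additive valuation already introduced in Section~\ref{sec:gen-id}: take two goods $a,b$ and let both players share the common valuation $v$ with $v(\emptyset)=0$, $v(\{a\})=0$, $v(\{b\})=1$, and $v(\{a,b\})=2$. First I would verify that $v$ is a legitimate general valuation — normalization is immediate, and monotonicity is a finite check ($\emptyset\subseteq\{a\}$ gives $0\le 0$, $\{b\}\subseteq\{a,b\}$ gives $1\le 2$, etc.) — and observe that it does exhibit zero marginal utility, since $v(\{a\})-v(\emptyset)=0$, so this instance is admissible for the statement.

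Next I would enumerate all allocations of $\{a,b\}$ to two players. Up to relabeling the players there are only two kinds: the \emph{split} allocations, where one player gets $\{a\}$ and the other gets $\{b\}$, and the \emph{monopoly} allocations, where one player gets $\{a,b\}$ and the other gets $\emptyset$. I would argue that no split allocation is PO: reassigning good $a$ from the player holding $\{a\}$ to the player holding $\{b\}$ produces the monopoly allocation $(\{a,b\},\emptyset)$, which keeps the first player's value at $0$ and raises the second player's value from $1$ to $2$ — a Pareto improvement. Hence every PO allocation is a monopoly allocation. (As a sanity check, the monopoly allocations really are PO: giving the bundle-holding player value at least $2$ forces her to hold both goods, so no Pareto improvement exists.)

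Finally I would show the monopoly allocations fail EFX. If player $i$ holds $A_i=\{a,b\}$ and player $j$ holds $A_j=\emptyset$, then $v(A_j)=0$ while $v(A_i\setminus\{a\})=v(\{b\})=1>0$, so even after removing good $a$ from $i$'s bundle player $j$ still envies player $i$, contradicting EFX. Combining the two steps, no allocation for this instance is simultaneously EFX and PO, which proves the theorem. The whole argument is a short case analysis; the only thing that needs care is confirming that the case enumeration of PO allocations is complete and that the monopoly allocations are genuinely Pareto optimal, so that the EFX failure there is decisive.
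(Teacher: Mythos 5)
Your proof is correct and follows essentially the same route as the paper's: you use the identical two-good instance ($v(\{a\})=0$, $v(\{b\})=1$, $v(\{a,b\})=2$), show that the monopoly allocation violates EFX after removing $a$, and show that the split allocation admits a Pareto improvement by moving $a$ to the $\{b\}$-holder. The paper organizes the case analysis around which goods are in $A_1$ rather than by allocation type, but the underlying argument is the same.
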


\begin{proof}
Consider two players with the same valuation $v$, and two goods $a$ and $b$. Define $v$ by
\[ v(S) =
   \begin{cases} 
      0 & \textnormal{if}\ \ S = \{a\}\\
      1 & \textnormal{if}\ \ S = \{b\}\\
      2 & \textnormal{if}\ \ S = \{a,b\}\\
   \end{cases}
\]
By symmetry, assume without loss of generality that $b \in A_1$. If $A_1 = \{a,b\}$, then $v(A_2) = v(\emptyset) = 0$, but $v(A_1 \backslash \{a\}) = v(\{b\}) > 0$, so the allocation is not EFX.

Therefore in any EFX allocation, $a \in A_2$. But $v(\{a\}) = v(\emptyset) = 0$ and $v(\{a,b\}) > v(\{b\})$. Thus giving $a$ to player 1 strictly increases player 1's value, without changing player 2's value, so the allocation is not PO.

Therefore no allocation is both EFX and PO.
\end{proof}

On the other hand, if valuations are required to be additive and identical, it is possible to guarantee EFX and Pareto optimality simultaneously, even with zero marginal utility. However, this is an extremely restrictive setting that we mention mostly for completeness; we consider this a very minor result. The proof of Theorem~\ref{thm:yes-efx-po-add-id} appears in Section~\ref{sec:add-proofs} of the appendix.

%%%%%%%%%%%%%%%%%%%%%%%%%%%%%%
\begin{theorem}\label{thm:yes-efx-po-add-id}
For additive and identical valuations, there exists an allocation that is both EFX and PO (even allowing zero marginal utility).
\end{theorem}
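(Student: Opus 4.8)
The plan is to exhibit an explicit allocation satisfying both properties. Since valuations are additive and identical, write $v$ for the common valuation and let $v(\{g\})$ denote the value of good $g$; the value of a bundle is the sum of the values of its goods. The key observation is that with identical additive valuations we may freely sort the goods by value. First I would partition the goods into two classes: the ``zero goods'' $Z = \{g \in M : v(\{g\}) = 0\}$ and the ``positive goods'' $P = M \setminus Z$. For Pareto optimality, the zero goods are irrelevant to every player's utility, so any allocation of $P$ that is PO among the positive goods, combined with an arbitrary allocation of $Z$, is PO overall. So the real work is to allocate $P$ in a way that is simultaneously EFX and PO, and then to place the zero goods without breaking EFX.

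For the allocation of $P$, I would use the greedy ``round-robin by decreasing value'' procedure restricted to the players who ever receive a positive good --- or, more simply, adapt the standard argument that for identical additive valuations a balanced greedy allocation is EFX. Order the positive goods $g_1, g_2, \ldots$ with $v(\{g_1\}) \ge v(\{g_2\}) \ge \cdots > 0$ and assign them one at a time, each time to a player with currently smallest bundle value (breaking ties consistently). A routine inductive argument shows this allocation is EFX: when player $i$ receives good $g_t$, she had the (weakly) smallest value among all players, so even after adding $g_t$, any other player $j$'s value minus her smallest-value good is at most $i$'s value before the addition, hence at most $i$'s value after. Since goods are added in decreasing order, the last good added to any bundle is a minimum-value good of that bundle, which is exactly what EFX needs. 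This allocation is also PO among positive-good allocations because it is a leximin-type allocation restricted to $P$ (one could alternatively just take the leximin solution over $P$ directly and invoke that leximin is trivially PO, as noted in Section~\ref{sec:gen-id}).

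Next I would handle the zero goods. Adding a zero good to player $i$'s bundle does not change any utility, so PO is preserved. For EFX: if good $g$ with $v(\{g\})=0$ is added to $A_i$, then for any $j$ we still need $v(A_j) \ge v(A_i \cup \{g\} \setminus \{h\})$ for every $h \in A_i \cup \{g\}$; taking $h = g$ gives $v(A_i)$ which is already $\ge$ by the EFX property of the positive-good allocation (after noting envy toward $i$ is unaffected), and taking $h \ne g$ gives $v(A_i \setminus \{h\}) + 0 = v(A_i \setminus \{h\})$, again covered. Symmetrically, removing a zero good from a bundle $A_j$ changes nothing, so EFX from every viewer toward $j$ is unaffected. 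Hence dumping all zero goods onto any single player (or distributing them arbitrarily) keeps the allocation EFX and PO.

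\textbf{Main obstacle.} The only genuinely delicate point is the EFX verification for the greedy allocation of the positive goods in the edge case where some player receives \emph{no} positive good: such a player has value $0$, and another player's bundle minus its smallest good could still be positive, violating EFX. I expect to resolve this by observing that with the ``assign to currently-smallest-value player'' rule, as long as there is more than one good, no player is left empty until every other player has exactly one good --- more carefully, if a player ends with value $0$ then every positive good went to players who each hold exactly one good, forcing $|P| \le n$, and in that case the allocation gives each positive good to a distinct player, which is trivially EFX. Pinning down this case analysis cleanly (and confirming the tie-breaking does not sabotage it) is where the actual care is needed; everything else is bookkeeping.
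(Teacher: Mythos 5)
Your decomposition into zero goods $Z$ and positive goods $P = M \setminus Z$, and your argument that an EFX+PO allocation of $P$ exists (either via the decreasing-value greedy rule or by invoking leximin/Theorem~\ref{thm:gen-id-po}), matches the paper's strategy. But the step where you place the zero goods has a genuine gap, and it is not the edge case you flag as the ``main obstacle''; it is the claim that ``dumping all zero goods onto any single player (or distributing them arbitrarily) keeps the allocation EFX.'' That claim is false.

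Concretely, when you add a zero-value good $g$ to $A_i$ and take $h = g$ in the EFX check, the required inequality $v(A_j) \geq v\bigl((A_i \cup \{g\}) \setminus \{g\}\bigr) = v(A_i)$ is \emph{full} envy-freeness toward $i$, not EFX. EFX over the positive goods only guarantees $v(A_j) \geq v(A_i \setminus \{h\})$ for $h \in A_i$, which can be strictly weaker. For a counterexample to your step: take identical additive $v$ with $v(a) = 2$, $v(b) = 1$, plus a zero good $z$. The positive-goods allocation $A_1 = \{a\}$, $A_2 = \{b\}$ is EFX and PO. If you now place $z$ in $A_1$, the result has $v(A_2) = 1 < 2 = v\bigl((\{a,z\}) \setminus \{z\}\bigr)$, so EFX fails. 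The zero goods must go to a player that nobody envies. For identical valuations the minimum-utility player is always unenvied, and that is exactly how the paper fixes this: it gives all of $Z$ to the minimum-utility player $i$, at which point the check for $k = i$ reduces to $v(B_j) \geq v(B_i) \geq v(B_i \setminus \{g\})$, which holds precisely because $i$ has minimum utility. Your argument needs this restriction to go through.
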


\subsection{Nonzero marginal utility}

The negative results of Theorem~\ref{thm:no-efx-po-add} and Theorem~\ref{thm:no-efx-po-gen} both break down if players are assumed to have strictly positive utility for any good being added to their bundle. Formally, we say that a valuation $v$ has nonzero marginal utility if for every set $S \subset [m]$ and $g \not\in S$, $v(S \cup \{g\}) - v(S) > 0$.

We feel that this is a reasonable assumption in practice, as $v(S \cup \{g\}) - v(S)$ is allowed to be arbitrarily small, and one might expect players in real world situations to always prefer to have a good than not.
%TODO: justify this more?
\subsubsection{Positive results from leximin}\label{sec:leximin-po-gen-id} Under the assumption of nonzero marginal utility, the leximin solution is guaranteed to be both EFX and PO for any number of players with general but identical valuations, and for two players with (possibly distinct) additive valuations.

%%%%%%%%%%%%%%%%%%%%%%%%%%%%%%
\begin{theorem}\label{thm:gen-id-po}
For general but identical valuations with nonzero marginal utility, the leximin solution is EFX and PO.
\end{theorem}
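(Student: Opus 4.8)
The plan is to reuse the argument from the proof of Theorem~\ref{thm:general-identical}, observing that nonzero marginal utility is exactly what lets us replace the bundle-size tiebreak with the plain leximin ordering $\prec$. Pareto optimality is immediate and was already noted: any Pareto improvement over the leximin solution would be strictly larger under $\prec$, a contradiction. So essentially all of the work is in showing EFX.

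For EFX, I would argue by contradiction: suppose the leximin solution $A$ is not EFX. Exactly as in Theorem~\ref{thm:general-identical}, there are players $i,j$ and a good $g \in A_j$ with $v(A_i) < v(A_j \setminus \{g\})$, and we may assume $i = \argmin_k v(A_k)$, taken to be the last minimum-utility player in the ordering $X^A$. Define $B$ by moving $g$ from $j$ to $i$ (so $B_i = A_i \cup \{g\}$, $B_j = A_j \setminus \{g\}$, and all other bundles unchanged). The new ingredient over the earlier proof is that nonzero marginal utility gives the \emph{strict} inequality $v(B_i) = v(A_i \cup \{g\}) > v(A_i)$, rather than merely $v(B_i) \ge v(A_i)$.

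Then I would compare $A$ and $B$ under $\prec$. Let $S$ be the set of players occurring before $i$ in $X^A$, i.e.\ the minimum-utility players of $A$ other than $i$; note $i,j \notin S$. Every $k \in S$ satisfies $v(B_k) = v(A_k) = v(A_i)$, while every player outside $S$ has strictly larger utility in $B$: player $i$ by the strict inequality above, player $j$ because $v(B_j) = v(A_j \setminus \{g\}) > v(A_i)$, and every other player $k$ after $i$ in $X^A$ because its bundle is unchanged and, since $i$ was chosen last among the minimum-utility players, it already had $v(A_k) > v(A_i)$. Hence the first $|S|$ entries of $X^B$ are precisely the players of $S$, in the same consistent-tiebreak order as in $X^A$ and with matching utilities, so the $\prec$-comparison has not terminated before position $|S|+1$. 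At position $|S|+1$, $A$'s player is $i$ with utility $v(A_i)$, whereas $B$'s player has utility strictly greater than $v(A_i)$ (when $S = \emptyset$ this simply says the minimum utility of $B$ strictly exceeds that of $A$). Therefore $A \prec B$, contradicting that $A$ is the leximin solution, and so the leximin solution is EFX.

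The only real subtlety — and the step I would be most careful about — is confirming that no player other than those in $S$ can sit at the minimum utility level $v(A_i)$ in $B$, which is exactly what guarantees the comparison reaches position $|S|+1$ with $A$'s entry strictly smaller. This is precisely where both design choices are used: picking $i$ to be the last minimum-utility player forces every player after it in $X^A$ strictly above $v(A_i)$, and nonzero marginal utility forces $v(B_i)$ itself strictly above $v(A_i)$, so no bundle-size tiebreak is needed.
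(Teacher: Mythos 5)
Your proof is correct and follows the paper's approach essentially verbatim: both re-run the argument of Theorem~\ref{thm:general-identical} with the observation that nonzero marginal utility makes $v(B_i) > v(A_i)$ strict, which lets the plain leximin order $\prec$ replace $\prec_{++}$; both then note PO is automatic for the leximin solution. You even flag the same subtle point the paper's earlier proof relies on (choosing $i$ last among minimum-utility players so every later player in $X^A$ sits strictly above $v(A_i)$), so there is nothing to add.
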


\begin{proof}
We follow a very similar analysis to the proof of Theorem~\ref{thm:general-identical}. Let $A$ be an allocation that is not EFX. Then there exist players $i,j$ and $g\in A_j$ where $v(A_i) < v(A_j \backslash\{g\})$. Again assume without loss of generality that $i = \argmin_k v(A_k)$, and if there are multiple players with minimum utility in $A$, let $i$ be the one considered last in the ordering $X^A$.

Define the same new allocation $B$ where $B_{i} = A_{i} \cup \{g\}$, $B_j = A_j \backslash \{g\}$, and $B_k = A_k$ for all $k \not\in \{i,j\}$. When zero marginal utility is allowed, the leximin++ modification of considering bundle size is necessary because otherwise if $v_i(B_i) = v_i(A_i)$, it could be the case that $B \prec A$. When zero marginal utility is disallowed, this modification is not necessary because $v_i(B_i) > v_i(A_i)$ always.

The proof of Theorem~\ref{thm:general-identical} can be used nearly verbatim to show that $A \prec B$ (simply omit the sentences handling the case where $v(B_i) = v(A_i)$, since we now have $v(B_i) > v(A_i)$, due to the nonzero marginal utility of $v$). Thus $A$ is not the leximin solution, so the leximin solution is EFX.

As noted before, the leximin solution is trivially Pareto optimal, since if any player could be made better off without hurting any other player, that new allocation would be strictly larger under $\prec$.
\end{proof}
%%%%%%%%%%%%%%%%%%%%%%%%%%%%%

We now show that assuming nonzero marginal utility, the leximin solution is EFX and PO for two players with additive valuations. For this theorem, we will assume that $v_i([m]) = 1$ for all $i$: were this not the case, we could easily define $v'_i(S) = v_i(S)/v_i([m])$, and find the leximin solution according to $v'$. Additivity is necessary for Theorem~\ref{thm:two-players-po} so that $v_i(A_1) < v_i(A_2)$ implies $v_i(A_1) < 1/2$, and so that  $v_i(A_1) \geq v_i(A_2)$ implies $v_i(A_1) \geq 1/2$.

The proof is similar to those of Theorem~\ref{thm:general-identical} and Theorem~\ref{thm:gen-id-po}, in that we consider an arbitrary allocation $A$ that is not EFX, and show that it cannot be the leximin solution by constructing an allocation $B$ such that $A \prec B$. However, the allocation $B$ is constructed differently here.

%%%%%%%%%%%%%%%%%%%%%%%%%%%%%%
\begin{theorem}\label{thm:two-players-po}
For two players with additive valuations (not necessarily identical) with nonzero marginal utility, the leximin solution is EFX and PO.
\end{theorem}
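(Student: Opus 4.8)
The plan is to follow the template of Theorem~\ref{thm:general-identical} and Theorem~\ref{thm:gen-id-po}: take an allocation $A$ that is not EFX and exhibit an allocation $B$ with $A \prec B$, so that $A$ cannot be the leximin solution; Pareto optimality then comes for free from the definition of $\prec$. Since there are only two players, the minimum-utility player (say player $i$, after the normalization $v_i([m])=1$) has $v_i(A_i) < 1/2$ whenever she is strictly worse off, and the leximin ordering just compares $(\min, \max)$ of the two utilities. The non-EFX hypothesis gives players $i,j$ and a good $g \in A_j$ with $v_i(A_i) < v_i(A_j \setminus \{g\})$. Because $v_i(A_i) + v_i(A_j) = 1$ by additivity and normalization, $v_i(A_i) < v_i(A_j\setminus\{g\}) \le v_i(A_j) = 1 - v_i(A_i)$ already forces $v_i(A_i) < 1/2$, so $i$ is indeed the (weakly) minimum-utility player, consistent with the WLOG choice.

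The construction of $B$ is the part that differs from the identical-valuations proofs, as the paper flags. The natural move is not simply to transfer the single good $g$, but to rebalance: I would move a carefully chosen good (or subset of $A_j$) from $j$ to $i$ so that $i$'s new utility strictly exceeds $v_i(A_i)$ while $j$'s utility stays above $v_i(A_i)$ as well — using the fact that $v_i(A_j \setminus \{g\}) > v_i(A_i)$ to guarantee that after handing $i$ the set $A_j \setminus \{g\}$ (or an appropriate sub-collection of it), her value rises past the old minimum, and nonzero marginal utility to ensure strictness. Concretely, one candidate is $B_i = A_i \cup (A_j \setminus \{g\})$, $B_j = \{g\}$: then $v_i(B_i) \ge v_i(A_i) + v_i(A_j\setminus\{g\}) > 2 v_i(A_i)$, and one must check $v_j(B_j) = v_j(\{g\}) > v_i(A_i)$, which may require instead keeping a larger remainder with $j$; an alternative is to keep transferring goods from $j$ to $i$ one at a time, stopping at the first point where $i$ would overtake $j$, and argue that at the stopping configuration both new utilities exceed $v_i(A_i)$. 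In either case the new minimum utility is strictly larger than $v_i(A_i) = \min_k v_k(A_k)$, so $A \prec B$.

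The main obstacle is choosing the transfer so that \emph{both} inequalities — $v_i(B_i) > v_i(A_i)$ and $v_j(B_j) > v_i(A_i)$ — hold simultaneously; the danger is that giving $i$ enough of $A_j$ to overtake the old minimum leaves $j$ below it. This is where additivity (so that utilities behave linearly and one can reason about "stopping at the first good that tips the balance") and the normalization $v_i([m])=1$ (so that "below $1/2$" pins down who the minimum player is) are essential, exactly as the paragraph preceding the theorem statement asserts. Once the right $B$ is identified, verifying $A \prec B$ is a short comparison of the sorted utility vectors, and Pareto optimality follows because any Pareto improvement would be strictly larger under $\prec$, contradicting leximin-optimality.
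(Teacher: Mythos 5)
You have the right frame (contradict leximin-optimality by exhibiting $B$ with $A \prec B$, after normalizing $v_1([m])=v_2([m])=1$), and you correctly note that the construction of $B$ is where the work lies. But your proposal stops short of a working construction, and the candidates you float do not close the gap. Your first candidate, $B_i = A_i \cup (A_j\setminus\{g\})$, $B_j = \{g\}$, transfers far too much: there is no reason $v_j(\{g\})$ should exceed the old minimum, and you acknowledge this. Your fallback, "transfer goods one at a time, stopping at the first point where $i$ would overtake $j$," is not pinned down — overtake in whose valuation? — and even if one fixes a meaning, there is no obvious reason the stopping configuration keeps both utilities above the old minimum simultaneously; the good that tips $i$ over could be the one whose loss drops $j$ below.

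The paper's proof does something you explicitly dismiss as \emph{not} the natural move: it transfers \emph{only} the single good $g$ from player 2 to player 1, forming the cut $S_1 = A_1 \cup \{g\}$, $S_2 = A_2\setminus\{g\}$, and then lets player 2 take whichever of $S_1, S_2$ she prefers. Because $v_2(S_1)+v_2(S_2)=1$, player 2's chosen bundle has $v_2$-value at least $1/2$, which strictly exceeds $v_1(A_1)<1/2$. And whichever bundle player 1 is left with, her value strictly exceeds $v_1(A_1)$: if she gets $S_2 = A_2\setminus\{g\}$ this is exactly the non-EFX hypothesis, and if she gets $S_1 = A_1\cup\{g\}$ this is exactly nonzero marginal utility. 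So the minimum utility in $B$ strictly exceeds that in $A$, giving $A \prec B$. (The paper also first disposes of the case $v_2(A_2) < v_2(A_1)$ by a direct swap, so that player 1 really is the minimum-utility player.) This one-good-plus-swap trick is the missing idea in your proposal.
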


\begin{proof}
Let $A$ be an allocation that is not EFX. Then there exist players $i,j$ and $g\in A_j$ where $v_i(A_i) < v_i(A_j \backslash\{g\})$. Without loss of generality, assume $i=1$ and $j=2$.

We know that $v_1(A_1) < v_1(A_2)$, so $v_1(A_1) < 1/2$. If $v_2(A_2) < v_2(A_1)$, the players could swap bundles to increase both of their utilities, so $A$ could not be the leximin solution. Therefore assume $v_2(A_2) \geq v_2(A_1)$, and so $v_2(A_2) \geq 1/2$.

Define two new bundles $S_1 = A_1 \cup \{g\}$ and $S_2 = A_2\backslash\{g\}$. Then define a new allocation $B$ where $B_1 = \argmin\limits_{S \in \{S_1, S_2\}} v_2(S)$ and $B_2 = \argmax\limits_{S \in \{S_1, S_2\}} v_2(S)$.

Since player 2 received her favorite of $S_1$ and $S_2$, we still have $v_2(B_2) \geq 1/2$. We have $v_1(S_2) = v_1(A_2 \backslash \{g\}) > v_1(A_1)$ by our original assumption that $A$ is not EFX, and we have $v_1(S_1) = v_1(A_1 \cup \{g\}) > v_1(A_1)$ by the nonzero marginal utility of $v_1$. Therefore regardless of which bundle player 1 receives, $v_1(B_1) > v_1(A_1)$. 

Thus $B$ has a higher minimum utility than $A$, so $A$ cannot be the leximin solution. Therefore the leximin solution is EFX in this setting, and it remains trivially PO.
\end{proof}

Assuming nonzero marginal utility, Theorem~\ref{thm:two-players-po} provides stronger guarantees than the currently deployed algorithm on Spliddit, which only guarantees an EF1 and PO allocation. As described in Section~\ref{sec:outperforming}, this manifests even in simple cases. 

We also argue that the assumption of nonzero marginal utility is particularly reasonable in the case of two players with additive valuations, since if a player is truly indifferent to some good, perhaps that good could simply be given to the other player and excluded from the fair division process entirely.

% Include this paragraph?
%This is not as straightforward for more than two players, since it is possible that one player has zero value for a good, but multiple other players have nonzero value. For non-additive valuations, whether a player has zero value for a good could depend on her bundle, so players may not be content to disregard goods which they only sometimes have zero value for.

%%%%%%%%%%%%%%%%%%%%%%%%%%%%%%
\subsubsection{Counterexample for two players with general valuations} Finally, we show that EFX and Pareto optimality cannot be guaranteed simultaneously for general and distinct valuations, even with the assumption of nonzero marginal utility.

\begin{theorem}\label{thm:po-counterexample}
There exist general valuations where no EFX allocation is also PO, even for two players with nonzero marginal utility.
\end{theorem}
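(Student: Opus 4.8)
The plan is to construct one explicit instance — two players, a handful of goods $M$, and two monotone, normalized valuations $v_1 \neq v_2$ with strictly positive marginal values — and then verify by exhaustive inspection of the finitely many allocations that each one either fails EFX or fails Pareto optimality. At least one of the two valuations will have to be non-additive: by Theorem~\ref{thm:two-players-po}, for additive valuations with nonzero marginal utility the leximin solution is already both EFX and PO, so an additive instance cannot be a counterexample. The zero-marginal instance in Theorem~\ref{thm:no-efx-po-gen} is a useful point of reference: there the allocation $(\{a\},\{b\})$ is exactly what obstructs the ``negative'' conclusion once marginals are forced to be positive, and the new construction must arrange for all such near-tie allocations to be Pareto-dominated.

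Designing the valuations is the crux, and the target is to make the Pareto frontier avoid every EFX allocation. Two structural observations guide the choice. First, if some good $g$ is worth more by itself to a player than every pair of the remaining goods, then the allocation handing that player only $\{g\}$ (and everything else to the other player) is EFX — removing a good from the other bundle leaves a set the $\{g\}$-holder values less than $\{g\}$, and removing $g$ from $\{g\}$ leaves the empty bundle, which no one envies — and it is typically Pareto optimal, since that player will not trade $g$ away and the other player will not surrender the rest for a strictly worse bundle. So the valuations must have \emph{no such dominant single good}, for either player. Second, and dually, the other ``cheap'' EFX+PO allocations one must block are the (nearly) envy-free balanced ones, in which each player receives a bundle she weakly prefers to the other's — for instance each player getting her favorite pair when those two favorite pairs happen to be complementary. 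The construction therefore needs the players' preferences over small bundles to be aligned enough (so that the unbalanced and singleton allocations fail EFX) while their preferences over the larger, ``efficient'' bundles are misaligned enough that every Pareto-optimal split hands some player a bundle from which the other's envy survives the deletion of a single good. A natural device is to give one player a strong two-good complementarity whose complement the other player values very little, while still letting the first player value some single good inside that complement highly enough that envy persists after one deletion — threaded against the ``no dominant single good'' requirement. This forces the valuations to be non-additive.

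Once the instance is fixed, the verification is routine but must cover every allocation. It is organized by the observation that an EFX violation requires the envied player's bundle to contain at least two goods, so under nonzero marginal utility every allocation in which some player holds zero or one good fails EFX outright; the remaining, more balanced allocations are the only real cases, and for each one that turns out to be EFX we exhibit a Pareto-dominating allocation (which for this many goods can always be found among allocations of the same ``shape''). If the chosen instance has a symmetry — e.g. one swapping the two players together with a relabeling of the goods — it roughly halves the case count.

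The main obstacle is not the verification but pinning down the valuations: the candidate EFX+PO allocations are ``robust'' — each player's best bundle tends to be simultaneously un-tradeable and hard to envy once a good is deleted — so one has to kill all of them at once, and the resulting construction is necessarily non-additive and somewhat delicate.
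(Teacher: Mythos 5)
There is a genuine gap: for a theorem asserting the \emph{existence} of a counterexample, the proof must actually exhibit one, and your proposal stops short of doing so. You lay out sensible design constraints (no dominant single good, non-additivity required by Theorem~\ref{thm:two-players-po}, block both the singleton-heavy and the balanced near-envy-free splits) but then concede that ``the main obstacle is\ldots pinning down the valuations.'' That obstacle is the theorem. The paper's proof meets it head-on: it fixes $M=\{a,b,c,d,e\}$, defines for each player $i$ a chain $\alpha_i \subset \beta_{3-i} \subset \gamma_i$ of ``milestone'' sets, and sets $v_i(S)$ to be determined by the largest milestone contained in $S$ plus an $\epsilon|S|$ correction that manufactures nonzero marginal utility. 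The case analysis is then not brute force over all $2^5$ allocations but a short chain of containment arguments showing that any EFX allocation must be $(\{a,c,e\},\{b,d\})$ up to symmetry, and that this allocation is Pareto-dominated by $(\{a,c,d\},\{b,e\})$.

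Two of your intermediate claims also don't hold as stated. You assert that under nonzero marginal utility ``every allocation in which some player holds zero or one good fails EFX outright,'' but an EFX violation can only be witnessed against a bundle of size $\geq 2$, so an allocation giving player $1$ a single good $h$ and player $2$ the rest is EFX whenever $v_1(\{h\}) \geq v_1(A_2 \setminus \{g\})$ for all $g$ --- which your own earlier paragraph points out is possible and must be designed away. (It does hold for the zero-good case.) And the ``worth more than every pair of the remaining goods'' condition for a singleton allocation to be EFX is off by the ambient number of goods: with $m$ goods you need $\{g\}$ to dominate every size-$(m-2)$ subset of $M\setminus\{g\}$, not every pair. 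These are fixable, but together with the missing construction they mean the argument as written does not establish the theorem.
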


\begin{proof}
We construct a set of valuations for which there is no EFX allocation that is also PO.

Let $n=2$ and $M=\{a,b,c,d,e\}$. Let $\alpha_1 = \{a\}, \beta_1 = \{b,d\}, \gamma_1 = \{a,c,d\}$ and $\alpha_2 = \{b\}, \beta_2 = \{a,d\}, \gamma_2 = \{b,d,e\}$. The key properties will be $\alpha_1 \subset \beta_2 \subset \gamma_1$ and $\alpha_2 \subset \beta_1 \subset \gamma_2$.

Define each player's valuation $v_i$ by
\[ v_i(S) =
   \begin{cases} 
      3 + \epsilon(|S| - 3) & \textnormal{if}\ \ \gamma_i \subseteq S\\
      2 + \epsilon(|S| - 2) & \textnormal{if}\ \ \beta_i \subseteq S\ 
      		\textnormal{and}\ \gamma_i \not\subseteq S\\
      1 + \epsilon(|S| - 1) & \textnormal{if}\ \ \alpha_i \subseteq S
      		\ \textnormal{and}\ \beta_i, \gamma_i \not\subseteq S\\
	\epsilon|S| & \textnormal{otherwise}
   \end{cases}
\]
where $\epsilon$ is some small positive value (.1 would suffice). Adding a good to a bundle always increases the value of the bundle by at least $\epsilon$, so $v_i$ satisfies nonzero marginal utility. Also, note that the valuations are symmetric across players, since $\alpha_i, \beta_i$, and $\gamma_i$ are symmetric across players.

We have the following implications:
\begin{align*}
\gamma_i \not\subseteq S \implies&\ v_i(S) < 3\\
\beta_i,\gamma_i \not\subseteq S \implies&\ v_i(S) < 2\\
\alpha_i,\beta_i,\gamma_i \not\subseteq S\implies&\ v_i(S) < 1
\end{align*}

By Theorem~\ref{thm:two-players}, an EFX allocation $A = (A_1, A_2)$ must exist. Suppose $\gamma_i \subseteq A_i$ for some $i$: by symmetry, suppose $i=1$. Since $\beta_1 \cap \beta_2 \cap \gamma_1 \cap \gamma_2 = \{d\} \neq \emptyset$, we have $\beta_2, \gamma_2 \not\subseteq A_2$, so $v_2(A_2) < 2$. Furthermore, $\beta_2$ is a strict subset of $A_1$: specifically, $\beta_2\subseteq A_1\backslash \{c\}$. Therefore $v_2(A_1 \backslash \{c\}) \geq v_2(\beta_2) = 2$, which is strictly larger than $v_2(A_2)$. Therefore if $\gamma_i \subseteq A_i$ for either $i$, $A$ is not EFX.

Now suppose $\beta_i \subseteq A_i$ for some $i$: again suppose $i=1$. Similarly, $\beta_2, \gamma_2 \not\subseteq A_2$. In this case, we also have $\alpha_2 \not\subseteq A_2$, since $\alpha_2 \cap \beta_1 \neq \emptyset$. Therefore $v_2(A_2) < 1$. Since $\alpha_2 \subseteq A_1 \backslash \{d\}$, we have $v_2(A_1 \backslash\{d\}) \geq v_2(\alpha_1) = 1$, which is strictly larger than $v_2(A_2)$. Therefore if $\beta_i \subseteq A_i$ for either $i$, $A$ is not EFX. Since $A$ is EFX by assumption, we have $\beta_i, \gamma_i \not\subseteq A_i$ for both $i$, and so $v_i(A_i) < 2$ for both $i$. 

We next claim that $\alpha_i \subseteq A_i$ for both $i$. Suppose $\alpha_1 \not\subseteq A_1$: then $\alpha_1 \subseteq A_2$. Therefore $v_1(A_1) < 1$, and $v_1(A_2) \geq 1$, so player 1 envies player 2. If there exists $g \in A_2\backslash\alpha_1$, then $g$ could be removed and player 1 would still envy player 2. Thus if $|A_2| \geq 2$, $A$ is not EFX, so we have $|A_2| = 1$. But then $\alpha_2 \subseteq A_1$ and $|A_1| \geq 2$, so player 1 is envied in violation of EFX. Thus we have $\alpha_1 \subseteq A_1$, and by symmetry, $\alpha_2 \subseteq A_2$.

One of the players has at least three goods; by symmetry, suppose $|A_1| \geq 3$. Since $\alpha_1 \subseteq A_1$ and $\beta_1, \gamma_1, \alpha_2 \not\subseteq A_1$, we have $A_1 = \{a,c,e\}$ and $A_2 = \{b,d\}$. 

Consider the allocation $B = (B_1, B_2) = (\{a,c,d\}, \{b,e\})$. Player 2 is indifferent between $\{b,d\}$ and $\{b,e\}$, so $v_2(B_2) = v_2(A_2)$. But $\gamma_1 \subseteq B_1$, so $v_1(B_1) > v(A_1)$. Thus player 1 is strictly better off in $B$, and no player is worse off. Therefore $A$ is not PO, and so no EFX allocation is PO.
\end{proof}

One last attempt to salvage EFX and PO in this setting might be to require a strict ranking over bundles, i.e., not allow player 2 to be indifferent between $\{b,d\}$ and $\{b,e\}$. However, even that would not work, because we can easily set $v_2(\{b,e\}) > v_2(\{b,d\})$, in which case both players are strictly better off in $B$.

This counterexample and our query complexity lower bound show that EFX is a very demanding fairness property, even for two players. In the next section, we complement these negative results by showing that an approximate version of EFX is satisfiable for any number of players with subadditive valuations.

\section{Existence of $\frac{1}{2}$-EFX allocations for subadditive valuations}\label{sec:apx-algo}
The possible existence of EFX allocations for possibly distinct valuations and $n\geq 3$ remains an open question, even for additive valuations. However, we are able to achieve an approximate version of EFX, for any number of players with (possibly distinct) subadditive valuations. Recall that an allocation $A$ is $c$-EFX if for all $i,j$, and for all $g\in A_j$, $v_i(A_i) \geq c \cdot v_i(A_j \backslash \{g\})$.
In words, an allocation is $c$-EFX if for all $i,j$, and $g\in A_j$, $i$'s value for her own bundle is at least $c$ times her value for $j$'s bundle after removing $g$. For example, $1$-EFX is equivalent to standard EFX. In this section, we give an algorithm that is guaranteed to return a $\frac{1}{2}$-EFX allocation for any number of players with subadditive valuations. 

To describe our algorithm, we must first define the \emph{envy graph}. The envy graph of an allocation $A$ has a vertex for each player, and a directed edge from $i$ to $j$ if player $i$ envies player $j$. Here we mean full envy (i.e. $v_i(A_i) < v_i(A_j)$), not just envy in violation of EFX. It will be necessary for the envy graph in our algorithm to be acyclic; we now show that we can always ensure this. The following lemma is adapted from Lipton et al.~\shortcite{Lipton04:Approximately}.
\begin{lemma}\label{lem:remove-cycles}
Let $A = (A_1, A_2...A_n)$ be a $c$-EFX allocation with envy graph $G = (V,E)$, where $G$ contains a cycle. Then there exists another allocation $B = (B_1, B_2...B_n)$  with envy graph $H$ where $B$ is also $c$-EFX, and $H$ has no cycles.
\end{lemma}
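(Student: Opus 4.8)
The plan is to mimic the classical cycle-elimination argument of Lipton et al., but to be careful that the $c$-EFX property is preserved (not just EF1). Given a $c$-EFX allocation $A$ whose envy graph $G$ contains a cycle, pick any cycle $i_1 \to i_2 \to \cdots \to i_\ell \to i_1$ and rotate the bundles along it: set $B_{i_t} = A_{i_{t+1}}$ for each $t$ (indices mod $\ell$), and $B_k = A_k$ for all players $k$ not on the cycle. Each player on the cycle now holds a bundle she strictly envied, so her utility strictly increases; players off the cycle are unchanged. Hence the total utility $\sum_k v_k(B_k)$ strictly exceeds $\sum_k v_k(A_k)$. This gives a strict monotone potential, so repeating the rotation finitely many times must terminate in an acyclic envy graph — which handles the ``$H$ has no cycles'' part once we know each rotation preserves $c$-EFX.

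The substantive step is checking that $B$ is still $c$-EFX after one rotation. Fix players $p, q$ and a good $g \in B_q$; we must show $v_p(B_p) \geq c \cdot v_p(B_q \setminus \{g\})$. The key observation is that the \emph{set} of bundles present in the allocation is unchanged — rotation only permutes which player holds which bundle — and moreover every player's own bundle under $B$ is weakly better for her than under $A$ (strictly better if she was on the cycle, identical otherwise), i.e. $v_p(B_p) \geq v_p(A_p)$ for all $p$. Now $B_q = A_{q'}$ for some player $q'$, and the $c$-EFX property of $A$ applied to the pair $(p, q')$ gives $v_p(A_p) \geq c \cdot v_p(A_{q'} \setminus \{g\}) = c \cdot v_p(B_q \setminus \{g\})$. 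Chaining, $v_p(B_p) \geq v_p(A_p) \geq c \cdot v_p(B_q \setminus \{g\})$, as desired. So $B$ is $c$-EFX.

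I expect the main obstacle — really the only place needing care — to be the termination/well-definedness bookkeeping rather than the inequality: one must argue that after each rotation the envy graph genuinely changes in a way captured by the strictly increasing potential $\sum_k v_k(\cdot)$, and that since there are finitely many allocations (partitions of $M$ into $n$ bundles), this potential cannot increase forever, so some iterate has an acyclic envy graph. A minor subtlety worth a sentence: we should note the argument uses only monotonicity and that rotating along a cycle strictly raises each participant's utility; subadditivity of the valuations is not needed for this lemma (it is used elsewhere in Section~\ref{sec:apx-algo}). One should also remark that the lemma statement only asserts \emph{existence} of such a $B$, so we do not need to track exactly how many rotations are required — just that the process halts.
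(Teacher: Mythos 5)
Your proof is correct and follows the same core strategy as the paper: rotate bundles along an envy cycle, and observe that the $c$-EFX inequality for a pair $(p,q)$ under the rotated allocation $B$ reduces to the $c$-EFX inequality for $(p,q')$ under $A$, because the multiset of bundles is unchanged and $v_p(B_p)\geq v_p(A_p)$. The one place you diverge is the termination argument. You use the total utility $\sum_k v_k(\cdot)$ as a strictly increasing potential over the finite set of allocations, which indeed suffices for the existence claim in the lemma. The paper instead shows that each rotation strictly decreases the number of edges in the envy graph: edges from outside the cycle into it are merely relabeled, edges from the cycle outward can only disappear since every cycle participant's utility strictly rises, and each cycle participant stops envying at least one cycle bundle (namely the one she now holds). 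That yields termination within $|E| \leq n^2$ rotations rather than within the number of distinct allocations. The distinction is immaterial for Lemma~\ref{lem:remove-cycles} itself, but the polynomial edge-count bound is what is later invoked in the $O(mn^3)$ running-time analysis of Theorem~\ref{thm:additive-same-ranking} for the \textsc{EliminateEnvyCycles} subroutine; your potential-function variant, while correct here, would not on its own give that efficiency guarantee. Your side remark that subadditivity plays no role in this lemma is also accurate.
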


\begin{proof}
We first show that there exists another $c$-EFX allocation $A' = (A'_1...A'_n)$ with envy graph $G'$, where  $G'$ has strictly fewer edges than $G$.

Let $c = (1, 2...|c|)$ be a cycle in $G$. Thus $v_i(A_i) < v_i(A_{(i\ \text{mod}\ |c|) + 1})$ for all $i\in c$. Define a new allocation $A'$ where $A_i' = A_{(i\ \text{mod}\ |c|) + 1}$ for all $i$, and let $G' = (V', E')$ be the envy graph for $A'$. It is clear that $A'$ is a permutation of $A$.

Suppose $A'$ is not $c$-EFX: then there exist  $i, j \in N$ and $g\in A'_j$ where $v_i(A') < c\cdot v_i(A'_j \backslash \{g\})$. Since $A'$ is a permutation of $A$, there exists $k\in N$ where $A_k = A'_j$, so $v_i(A'_i) < c \cdot v_i(A_k \backslash \{g\})$. Observe that $v_i(A'_i) > v_i(A_i)$ if $i\in c$, and $v_i(A'_i) = v_i(A_i)$ otherwise. Thus $v_i(A_i)\leq v_i(A'_i) < c \cdot v_i(A_k \backslash \{g\})$, and so $A$ is also not $c$-EFX. Therefore if $A$ is $c$-EFX, then $A'$ is also $c$-EFX.

Note that the number of edges from $V' \backslash c$ into $c$ is unchanged. Also, the number of edges from $c$ into $V'\backslash c$ has decreased or stayed the same, since the utility of every player in $c$ has strictly increased. Furthermore, for each $i\in c$, the number of players in $c$ whom $i$ envies has decreased by at least one. This shows that $G'$ has strictly fewer edges than $G$.

If $G'$ still contains a cycle, we can apply this process again to obtain $G''$, $G'''$, and so on. Since the number of edges strictly decreases each time, we can apply this process at most $|E|$ times before we obtain a envy graph without a cycle.
\end{proof}

Algorithm~\ref{alg:apx} gives pseudocode for our algorithm. Initially all goods are in the pool $P$, and we proceed in rounds until $P$ is empty, maintaining the invariant that the partial allocation at the end of each round is EFX. The function EliminateEnvyCycles uses Lemma~\ref{lem:remove-cycles} to ensure that the graph at the beginning of each round is acyclic. Since the envy graph is acyclic, we can always find an unenvied player $j$, and give an arbitrary good $g^*$ from $P$ to her. 

It is possible that this will cause another player $i$ to envy $j$ in violation of $\frac{1}{2}$-EFX. In this case, we return all of $i$'s current bundle to $P$, and let $i$'s new bundle be just $\{g^*\}$. The key insight is that in order for $i$ to go from not envying $j$ to envying $j$ in violation of $\frac{1}{2}$-EFX, adding $g^*$ to $A_j$ must have caused $v_i(A_j)$ to at least double. We will use that fact, along with the subadditivity of $v_i$, to show that $v_i(\{g^*\})$ must be larger than $i$'s value for her bundle at the beginning of the round. Thus if $i$ envies any player, it remains consistent with $\frac{1}{2}$-EFX. Any envy directed towards $i$ will be fully EFX, since $i$ will only have one good.

On each round, either $P$ decreases in size (in the case where $g^*$ remains with $j$), or the sum of utilities increases (in the case where $g^*$ is instead given to $i$ because $i$ envies $j$ in violation of $\frac{1}{2}$-EFX). Thus we can use a potential function argument to show that Algorithm~\ref{alg:apx} terminates (although it may take a non-polynomial number of rounds).

\begin{algorithm*}[tb]
\centering
\begin{algorithmic}[1]
  \Function{GetApxEFXAllocation}{$n, m, (v_1...v_n)$}
  \State $P \gets [m]$ \Comment{Initially, all goods are in the pool}
  \ForAll{$i \in [n]$}
  \State $A_i \gets \emptyset$
  \EndFor
  \While{$P \neq \emptyset$}
  \State $g^* \gets \text{pop}(P)$ \Comment{Remove an arbitrary good from $P$,}
  \State $j \gets \text{FindUnenviedPlayer}(A_1, A_2...A_n)$ \Comment{and give it to an unenvied player}
  \State $A_j \gets A_j \cup \{g^*\}$
  \If{$\exists i \in [n], g \in A_j$ such that $v_i(A_i) < \frac{1}{2} v_i(A_j \backslash\{g\})}$
  \State $P \gets P \cup A_i$ \Comment{Return $i$'s old allocation to the pool,}
  \State $A_j \gets A_j \backslash \{g^*\}$ \Comment{and give $i$ just $\{g^*\}$}
   \State $A_i \gets \{g^*\}$
  \EndIf
  \State $(A_1, A_2...A_n) \gets \text{EliminateEnvyCycles}(A_1, A_2...A_n)$ \Comment{Ensure the envy graph is acyclic}
\EndWhile
\Return $(A_1, A_2...A_n)$
\EndFunction
\end{algorithmic}
\caption{Find an $\frac{1}{2}$-EFX allocation for $n$ players with subadditive valuations}
\label{alg:apx}
\end{algorithm*}

%%%%%%%%%%%%%%%%%%%%%%%%%%%%%%%%%%%%%%%%%%
\begin{theorem}\label{thm:apx-upper}
For subadditive valuations, Algorithm~\ref{alg:apx} returns a $\frac{1}{2}$-EFX allocation.
\end{theorem}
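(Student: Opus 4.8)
The plan is to establish, as a loop invariant, that at the start of every iteration of the while loop the tuple $(A_1,\dots,A_n)$ is a $\frac12$-EFX partial allocation whose envy graph is acyclic. This invariant suffices: when the loop exits we have $P=\emptyset$, hence a complete allocation, and the invariant says it is $\frac12$-EFX. The base case is the empty allocation, which is vacuously $\frac12$-EFX and has an edgeless (hence acyclic) envy graph. The work is in the inductive step and in termination.

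For the inductive step, fix an iteration starting from a $\frac12$-EFX allocation $A$ with acyclic envy graph. Since a DAG always has a vertex of in-degree zero, FindUnenviedPlayer returns a player $j$ whom nobody envies; in particular $v_i(A_i)\ge v_i(A_j)$ for all $i$. We add $g^*$ to $A_j$ and split into two cases. If no pair $(i,j)$ now violates $\frac12$-EFX, then the whole allocation is still $\frac12$-EFX: the only bundle that changed is $A_j$, which only grew, so pairs not involving $j$ are untouched, pairs in which $j$ is the envier are fine by monotonicity and the previous invariant, and pairs in which $j$ is envied are exactly what the \textbf{if} test rules out. If instead some pair $(i,j)$ with witness $g$ violates $\frac12$-EFX, I would first note $i\neq j$ and $g\neq g^*$ (otherwise $v_i(A_i)<\tfrac12 v_i(A_j)\le\tfrac12 v_i(A_i)$, impossible). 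Then, writing $A_j$ for $j$'s bundle before $g^*$ was added and using that $j$ was unenvied, $v_i(A_i)\ge v_i(A_j)\ge v_i(A_j\setminus\{g\})$; and by subadditivity $v_i\big((A_j\setminus\{g\})\cup\{g^*\}\big)\le v_i(A_j\setminus\{g\})+v_i(\{g^*\})$. Combining these with the violation $v_i(A_i)<\tfrac12 v_i\big((A_j\cup\{g^*\})\setminus\{g\}\big)$ yields $v_i(A_i)<v_i(\{g^*\})$ — this is the ``a doubling of value forces $g^*$ to be valuable'' insight sketched before the algorithm. The algorithm now reverts $A_j$, returns $i$'s old bundle to $P$, and sets $A_i=\{g^*\}$, and I claim this restores $\frac12$-EFX: every pair $(\cdot,j)$ is back to the envy-free state in $A$, every pair in which $i$ is now envied is fully EFX because $A_i$ is a single good (removing it leaves $\emptyset$), and for every pair in which $i$ is the envier we have $v_i(\{g^*\})>v_i(A_i^{\text{old}})$, which already half-dominated every $v_i(A_k\setminus\{g\})$ by the previous invariant. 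In either case, EliminateEnvyCycles (Lemma~\ref{lem:remove-cycles}) then restores acyclicity of the envy graph while preserving $\frac12$-EFX, re-establishing the invariant.

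For termination, I would observe that every Case-2 iteration strictly increases $\sum_k v_k(A_k)$ (player $i$'s utility strictly increases and no one else's changes), and that neither a Case-1 iteration nor EliminateEnvyCycles ever decreases this sum, while each Case-1 iteration strictly decreases $|P|$. Since there are only finitely many partial allocations, $\sum_k v_k(A_k)$ takes only finitely many values, so only finitely many Case-2 iterations can occur; between two consecutive Case-2 iterations $|P|$ strictly decreases each iteration, so at most $m$ Case-1 iterations can intervene. Hence the loop terminates, and by the invariant the returned allocation is $\frac12$-EFX.

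The hard part will be the Case-2 analysis: extracting $v_i(\{g^*\})>v_i(A_i^{\text{old}})$ from the $\frac12$-EFX violation via subadditivity, and then re-verifying all $\frac12$-EFX conditions for the reassigned allocation — in particular noticing that reverting $A_j$ cleans up every pair with $j$ envied simultaneously, so it does not matter that the \textbf{if} branch repairs only one witness $(i,g)$. A secondary subtlety is that $|P|$ can increase during an iteration, so termination needs the two-part (utility sum, then $|P|$) measure rather than a single monotone quantity; and one must keep track of envy-graph acyclicity throughout, since that is exactly what makes FindUnenviedPlayer well-defined and is the reason EliminateEnvyCycles must be invoked every round.
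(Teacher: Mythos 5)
Your proposal is correct and takes essentially the same approach as the paper: the same loop invariant ($\frac12$-EFX plus acyclic envy graph at the start of each round), the same split into the two branches of the \textbf{if}, the same key chain $v_i(A_i) < \frac12 v_i(A_j\cup\{g^*\}) \le \frac12\big(v_i(A_j)+v_i(\{g^*\})\big) \le \frac12\big(v_i(A_i)+v_i(\{g^*\})\big)$ using subadditivity and the fact that $j$ was unenvied to conclude $v_i(\{g^*\}) > v_i(A_i)$, and the same termination argument via the utility-sum potential bounding Case-2 rounds and $|P|$ bounding consecutive Case-1 rounds. The only cosmetic difference is that you separately establish $g\ne g^*$ in order to rewrite $(A_j\cup\{g^*\})\setminus\{g\}$, whereas the paper simply drops the $\setminus\{g\}$ by monotonicity.
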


\begin{proof}
We refer to each iteration of the while-loop as a round. We first show that the partial allocation at the end of each round is $\frac{1}{2}$-EFX. Then we will show that the algorithm is guaranteed to terminate. 

Let $A_k^\ell$ be the bundle of player $k$ at the beginning of round $\ell$, and let $B_k^\ell$ denote the bundle of player $k$ just before EliminateEnvyCycles is run on round $\ell$. Let $A^\ell = (A_1^\ell...A_n^\ell)$ and $B^\ell = (B_1^\ell...B_n^\ell)$. In this proof, we use $k$ and $k'$ to denote a generic player; $i$ and $j$ refer exclusively to the variables in the while-loop.

We proceed by induction on $\ell$. Initially, all players have empty bundles, which trivially satisfies $\frac{1}{2}$-EFX. Thus assume the partial allocation at the beginning of round $\ell$ is $\frac{1}{2}$-EFX. We will show that the partial allocation at the beginning of round $\ell+1$ is $\frac{1}{2}$-EFX. The partial allocation at the beginning of round $\ell+1$ $A^{\ell+1}$ is equal to EliminateEnvyCycles$(B^\ell)$. Thus by Lemma~\ref{lem:remove-cycles}, it suffices to show that $B^\ell$ is $\frac{1}{2}$-EFX.

If the body of the if-statement (lines 10-12) is not executed, the allocation $B^\ell$ is $\frac{1}{2}$-EFX by definition. Thus assume the body of the if-statement is executed. Then $B_j^\ell = A_j^\ell$, because $g^*$ was added and then removed. Thus for all $k\neq i$, $B_k^\ell = A_k^\ell$. 

We say that a pair $(k, k')$ is $\frac{1}{2}$-EFX in $B^\ell$ if $v(B_{k}^\ell) \geq \frac{1}{2} v(B_{k'}\backslash\{g\})$ for all $g\in B_{k'}^\ell$. We know that $A^\ell$ is $\frac{1}{2}$-EFX by assumption. Therefore since $B_k^\ell = A_k^\ell$ for all $k\neq i$, all pairs $(k, k')$ where $k \neq i$ and $k'\neq i$ remain $\frac{1}{2}$-EFX in $B^\ell$. Furthermore, since $B_i^\ell = \{g^*\}$, the pair $(k,i)$ is $\frac{1}{2}$-EFX for all players $k$, since $B_i^\ell \backslash\{g\} = \emptyset$ for all $g \in B_i^\ell$. 

It remains only to show that the pairs $(i,k)$ are $\frac{1}{2}$-EFX for all players $k$. We do this by showing that $v_i(B_i^\ell) > v_i(A_i^\ell)$. The fact that this inequality is strict will be important later in showing that the algorithm terminates.

We know that $j$ was unenvied at the beginning of round $\ell$, so $v_i(A_i^\ell) \geq v_i(A_j^\ell)$. Since the body of the if-statement executed, we also know that there exists $g \in A_j^\ell \cup \{g^*\}$ such that $v_i(A_i^\ell) < \frac{1}{2} v_i(A_j^\ell \cup \{g^*\} \backslash \{g\})$. Thus $v_i(A_i^\ell) < \frac{1}{2} v_i(A_j^\ell \cup \{g^*\})$, which will be all we need. Therefore,
\begin{align}
v_i(A_i^\ell) <&\ \mfrac{1}{2} v_i(A_j^\ell \cup \{g^*\})\label{line1}\\
\leq&\ \mfrac{1}{2}(v_i(A_j^\ell) + v_i(\{g^*\}))\label{line2}\\
\leq&\ \mfrac{1}{2}(v_i(A_i^\ell) + v_i(\{g^*\}))\label{line3}
\end{align}
where \ref{line2} follows from \ref{line1} due to $v_i$ being subadditive, and \ref{line3} follows from \ref{line2} due to $v_i(A_i^\ell) \geq v_i(A_j^\ell)$. Therefore,
\begin{align*}
v_i(A_i^\ell) -  \frac{1}{2}v_i(A_i^\ell)<&\ \frac{1}{2} v_i(\{g^*\}) \\
v_i(A_i^\ell) <&\ v_i(\{g^*\}) \\
v_i(A_i^\ell) <&\ v_i(B_i^\ell)
\end{align*}

Consider an arbitrary player $k\neq i$. Since $A^\ell$ is $\frac{1}{2}$-EFX, we have $v_i(A_i^\ell) \geq \frac{1}{2} v_i(A_k^\ell \backslash \{g\})$ for all $g\in A_k^\ell$. Since $v_i(B_i^\ell) > v_i(A_i^\ell)$ and $B_k^\ell = A_k^\ell$ for all $k\neq i$, we have $v_i(B_i^\ell) \geq \frac{1}{2} v_i(B_k^\ell \backslash\{g\})$ for all $g\in B_k^\ell$ as well. Therefore the pair $(i,k)$ is $\frac{1}{2}$-EFX for all players $k$. 

Thus every pair of players is $\frac{1}{2}$-EFX in $B^\ell$, so $B^\ell$ is $\frac{1}{2}$-EFX. This shows that the partial allocation at the end of each round is $\frac{1}{2}$-EFX, and so any allocation returned by the algorithm is $\frac{1}{2}$-EFX.

It remains to show that Algorithm~\ref{alg:apx}
terminates. We use a potential function argument.
For round~$\ell$, define 
\[
\phi(\ell) = \sum_{k=1}^n v(A_k^\ell). 
\]
We noted above that if round $\ell$ falls under Case 2, only $i$'s bundle changes, and we have the strict inequality $v_i(B_i^\ell) > v_i(A_i^\ell)$. Therefore $v_i(A_i^{\ell+1}) > v_i(A_i^\ell)$. Thus if round $\ell$ falls under Case 2, we have $\phi(\ell + 1) - \phi(\ell) > 0$.

If round $\ell$ falls under Case 1, only $j$'s bundle changes, and we have $v_j(A_j^{\ell+1}) \geq v_i(A_i^\ell)$. Therefore if round $\ell$ falls under Case 1, we have $\phi(\ell + 1) - \phi(\ell) \geq 0$.

In any round which falls under Case 1, $|P|$ decreases by
one. Therefore if $m$ rounds pass without Case 2 occurring, $P$ becomes empty, and the algorithm terminates. Thus while the algorithm has not terminated, Case 2 must occur at
least once every $m$ rounds, and so $\phi(\ell + m) - \phi(\ell) > 0$ for
all $\ell$.

The number of possible partial allocations is at most $(n+1)^m$: each
good can be given to one of the $n$ players, or left in the
unallocated pool. 
Thus the number of distinct values $\phi$ can take on is at most
$(n+1)^m$, and so $\phi$ can increase at most that many times. Thus
after $m (n+1)^m$ rounds, the algorithm must have
terminated. 
\end{proof}
%%%%%%%%%%%%%%%%%%%%%%%%%%%%%%%%%%%%%%%%%%

Finally, we briefly show that $\frac{1}{2}$-EFX and EF1 are incomparable, meaning that neither property implies the other. Recall that an allocation $A$ is EF1 if for all $i,j$ where $A_j \neq \emptyset$, there exists $g \in A_j$ where $v_i(A_i) \geq v_i(A_j \backslash\{g\})$.

Consider the additive valuations on the left, and let $A = (\{a,b\}, \{c\})$. $A$ is EF1 because $v_2(A_2) \geq v_2(A_1 \backslash \{a\})$, but $A$ is not $\frac{1}{2}$-EFX because $v_2(A_2) < \frac{1}{2}v_2(A_1 \backslash \{b\})$.

Now consider the valuations on the right, and let $A = (\{a,b,c\}, \{d\})$. Then $A$ is not EF1, because $v_2(A_2) < v_2(A_1 \backslash \{g\})$ for all $g\in A_1$, but $A$ is $\frac{1}{2}$-EFX, because $v_2(A_2) \geq \frac{1}{2}v_2(A_1 \backslash \{g\})$ for all $g \in A_1$.

\begin{center}
\begin{tabular}{ c|ccc} 
& a & b & c\\
\hline
player 1 & 3 & 1 & 0 \\
%\hline
player 2 & 3 & 0 & 1\\
%\hline
\end{tabular}
\hspace{.12 in}
\begin{tabular}{ c|cccc} 
& a & b & c & d\\
\hline
player 1 & 1 & 1 & 1 & 1 \\
%\hline
player 2 & 1 & 1 & 1 & 1\\
%\hline
\end{tabular}
\end{center}

\section{Conclusion and future work}\label{sec:conclusion}

In this paper, we provided the first general results on the fairness
concept of envy-freeness up to any good. Our most technically involved result was an exponential lower bound on the number of queries required by any deterministic algorithm to find an EFX allocation, via a reduction from local search. To complete the lower bound, we proved an exponential lower bound on the number of queries required to find a local maximum on $K(2k+1, k)$. We used results from Dinh and Russell~\shortcite{Dinh10:Quantum} and Valencia-Pabon and Vera~\shortcite{Valencia-Pabon05:Diameter} to obtain an exponential lower
bound for randomized algorithms as well. Our EFX lower bounds hold even
for two players with identical submodular valuations.

Next, we showed that for $n$
players with general but identical valuations, a modification of the leximin solution is guaranteed to be EFX. We showed how this result can be adapted into a cut-and-choose
protocol for finding an EFX allocation between two players with
general and possibly distinct valuations.

We also considered satisfying EFX and Pareto optimality together. We showed that if players are allowed to have zero value for a good being added to their bundle, it is impossible to guarantee EFX and Pareto optimality simultaneously. However, if we assume that a player's value for her bundle is strictly increased by adding any good (even just by some tiny $\epsilon$), the leximin solution is EFX and PO two settings: for $n$ players with general but identical valuations, and for two players with possibly distinct additive valuations. We view the latter result as our result of most practical significance: assuming nonzero marginal utility, it provides stronger guarantees the currently deployed algorithm on Spliddit, even in simple examples. Our other significant positive result was an algorithm for finding a $\frac{1}{2}$-EFX allocation for any number of players with subadditive valuations.

The ideal next step would be to consider EFX with distinct valuations and more
than two players. This problem seems quite challenging, even for the
special case of additive valuations. Indeed, Caragiannis et
al.~\shortcite{Caragiannis16:Nash} were unable to settle the question of whether
EFX allocations in that context always exist, ``despite significant
effort.'' 
The problem seems highly non-trivial even for three players with
different additive valuations. We suspect that at least for general valuations, there exist instances where no EFX allocation exists, and it may be easier to find a counterexample in that setting. Similarly, finding a counterexample to EFX and Pareto optimality together for additive valuations and more than two players (assuming nonzero marginal utility) is another avenue that may be more tractable.

Another direction is to pursue stronger lower bounds for finding an EFX allocation. In particular, communication complexity allows players unlimited computation and queries, and only measures the number of bits transmitted. The cut-and-choose protocol from Section~\ref{sec:gen-id} constitutes a linear communication protocol for two players with general and possibly distinct valuations to compute an EFX allocation, so any communication complexity lower bound would need to consider more than two players. On the other hand, we know finding an EFX allocation to be hard in the query model even for two players, which suggests an interesting separation. 

More generally, communication complexity is one example of a topic that has been studied in algorithmic mechanism design and may be useful in the study of fair division. Another such topic is the hierarchy of complement-free valuations (additive, submodular, subadditive, etc.). Our work already implies separations between these valuation classes from a fair division perspective, and suggests that fair division with different classes of player valuations deserves further study.

\section*{Acknowledgements}
This research was supported in part by NSF grant CCF-1524062, a Google Faculty Research Award, and a Guggenheim Fellowship.

% Bibliography
\bibliographystyle{plain}
\bibliography{refs}

\appendix

\section{Additional proofs}\label{sec:add-proofs}

%%%%%%%%%%%%%%%%%%%%%%%%%%%%%%
\begin{proof}[Proof of Theorem~\ref{thm:total}]
To show that $\prec_{++}$, we need to show that for any allocations $A, B,$ and $C$, $A \prec_{++} A$ is false, and that ($A \prec_{++} B$ and $B \prec_{++} C$) implies $A \prec_{++} C$.

We first show that $A\prec_{++} A$ is false. The key fact is that for a given allocation $A$, there is only one possible ordering of the players $X^A$: were this not true, $\prec_{++}$ could fail to produce a total ordering.\footnote{Consider two players with identical valuations and one good $a$, where $v(\{a\}) = 0$. Let $A = (\emptyset, \{a\})$. Suppose both (1,2) and (2,1) are valid orderings of the players according to $A$, and suppose we run $A \prec_{++} A$ with the left hand side $A$ using the ordering (1,2) and the right hand side $A$ using (2,1). Then at $\ell = 1$, $\emptyset$ from the left hand side $A$ will be compared with $\{a\}$ from the right hand side $A$, and $A \prec_{++} A$ will return true.} Therefore on each iteration, the same player is considered from each copy of $A$. Thus on each iteration, the two bundles compared will be the same, so $A\prec_{++}A$ never terminates until it passes through all $\ell \in [n]$ and returns false at the very end.

It remains to show that ($A \prec_{++} B$ and $B \prec_{++} C$) implies $A \prec_{++} C$. Suppose $A \prec_{++} B$ and $B \prec_{++} C$. Let $\ell_1$, $\ell_2$, and $\ell_3$ be the iterations on which $A\prec_{++} B$, $B \prec_{++} C$ and $A\prec_{++} C$ terminate, respectively. For $x\in \{1,2,3\}$, let $i_x = X_{\ell_x}^A$, $j_x = X_{\ell_x}^B$, and $k_x = X_{\ell_x}^C$.

Since $A\prec_{++} B$ terminates on iteration $\ell_1$, we have $v(A_{i_1}) < v(B_{j_1})$ or $|A_{i_1}| < |B_{j_1}|$. Similarly, since $B\prec_{++} C$ terminates on iteration $\ell_2$, we have $v(B_{i_2}) < v(C_{j_2})$ or $|B_{i_2}| < |C_{j_2}|$.

First we argue that $\ell_3 \geq \min(\ell_1, \ell_2)$. Suppose $\ell < \min(\ell_1, \ell_2)$: then $A \prec_{++} B$ and $B\prec_{++} C$ do not terminate until after iteration $\ell_3$. Therefore $v(A_{i_3}) = v(B_{j_3})$, $|A_{i_3}| = |B_{j_3}|$, $v(B_{j_3}) = v(C_{k_3})$, and $|B_{j_3}| = |C_{k_3}|$. Therefore $v(A_{i_3}) = v(C_{k_3})$ and $|A_{i_3}| = |C_{k_3}|$, so $A \prec_{++} C$ could not have terminated on iteration $\ell_3$, which is a contradiction. Therefore $\ell_3 \geq \min(\ell_1, \ell_2)$. We proceed by case analysis.

Case 1: $\ell_1 < \ell_2$. Since $B \prec_{++} C$ did not terminate until after iteration $\ell_1$, we have $v(B_{j_1}) = v(C_{k_1})$ and $|B_{j_1}| = |C_{k_1}|$. Therefore $v(A_{i_1}) < v(C_{k_1})$ or $|A_{i_1}| < |C_{k_1}|$. We know that $A \prec_{++} C$ cannot have terminated prior to $\ell_1$, since $\ell_3 \geq \min(\ell_1, \ell_2) = \ell_1$. Therefore $A \prec_{++} C$ will terminate on iteration $\ell_1$ and return true, so $A \prec_{++} C$ holds in Case 1.

Case 2: $\ell_2 < \ell_1$. This case is similar. Since $A \prec_{++} B$ did not terminate until after iteration $\ell_2$, we have $v(A_{i_2}) = v(B_{j_2})$ and $|A_{i_2}| = |B_{j_2}|$. Therefore $v(A_{i_2}) < v(C_{k_2})$ or $|A_{i_2}| < |C_{k_2}|$. We know that $A \prec_{++} C$ cannot have terminated prior to $\ell_2$, since $\ell_3 \geq \min(\ell_1, \ell_2) = \ell_2$. Therefore $A \prec_{++} C$ will terminate on iteration $\ell_2$ and return true, so $A \prec_{++} C$ holds Case 2.

Case 3: $\ell_1 = \ell_2$. In this case we have $i_1 = i_2$, $j_1 = j_2$, and $k_1 = k_2$. Therefore
\begin{align*}
v(A_{i_1}) < v(B_{j_1})\ \textnormal{or}&\ \ \big(v(A_{i_1}) = v(B_{j_1})\ \textnormal{and}\ |A_{i_1}| < |B_{j_1}|\big),\ \textnormal{and}\\
v(B_{j_1}) < v(C_{k_1})\ \textnormal{or}&\ \ \big(v(B_{j_1}) = v(C_{k_1})\ \textnormal{and}\ |B_{j_1}| < |C_{k_1}|\big)
\end{align*}
Note that $v(A_{i_1}) \leq v(B_{j_1})$ and $v(B_{j_1}) \leq v(C_{k_1})$. Therefore if either $v(A_{i_1}) < v(B_{j_1})$ or $v(B_{j_1}) < v(C_{k_1})$, we have $v(A_{i_1}) < v(C_{k_1})$. We know $A\prec_{++} C$ cannot have terminated before $\ell_1=\ell_2$ since $\ell_3 \geq \min(\ell_1, \ell_2)$, so if $v(A_{i_1}) < v(C_{k_1})$, $A \prec_{++} C$ terminates on iteration $\ell_1$ and returns true.

%v(A_{i_1}) < v(B_{j_1})$ or $|A_{i_1}| < |B_{j_1}|$, and $v(B_{j_1}) < v(C_{k_1})$ or $|B_{j_1}| < |C_{k_1}|$. Specifically, we have

Thus assume $v(A_{i_1}) = v(B_{j_1})$ and $v(B_{j_1}) = v(C_{k_1})$: then $|A_{i_1}| < |B_{j_1}|$ and $|B_{j_1}| < |C_{k_1}|$. Therefore $v(A_{i_1}) = v(C_{k_1})$ and $|A_{i_1}| < |C_{k_1}|$, so $A\prec_{++} C$ terminates on iteration $\ell_1$ and returns true. Therefore $A \prec_{++} C$ in Case 3. This shows that ($A \prec_{++} B$ and $B \prec_{++} C$) implies $A \prec_{++} C$, and completes the proof.
\end{proof}

%%%%%%%%%%%%%%%%%%%%%%%%%%%%%%
\begin{proof}[Proof of Theorem~\ref{thm:yes-efx-po-add-id}]
Let $Z$ be the set of all goods $g$ where $v(\{g\}) = 0$. Therefore for all $g \in M \backslash \{Z\}$, we have $v(\{g\}) > 0$, so $v$ has nonzero marginal utility over the set of goods $M \backslash \{Z\}$.

Let $A = (A_1...A_n)$ be the leximin allocation over $M \backslash \{Z\}$. By Theorem~\ref{thm:gen-id-po}, $A$ is EFX and PO over $M \backslash \{Z\}$.

Let $i$ be the minimum utility player in $A$. Define a new allocation $B$ over all of $M$ where $B_i = A_i \cup \{Z\}$ and $B_{j} = A_j$ for all $j\neq i$. Since $v(Z) = 0$, we have $v(B_j) = v(A_j)$ for all $j$. Therefore since $i$ had minimum utility in $A$, $i$ also has minimum utility in $B$.

To see that $B$ is EFX, consider arbitrary players $j$ and $k$, and any $g \in B_k$. If $i \neq k$, we have $A_k = B_k$. Since $A$ is EFX, we have $v(B_j) = v(A_j) \geq v(A_k \backslash\{g\}) = v(B_k \backslash\{g\})$. If $i = k$, then $v(B_j) \geq v(B_k) \geq v(B_k \backslash\{g\})$, since $i$ has minimum utility in $B$. This shows that $B$ is EFX.

To see that $B$ is PO, observe that the way the goods in $Z$ are allocated has no effect on the values of the bundles. Therefore the goods in $Z$ have no effect on the Pareto optimality of the allocation, so the Pareto optimality of $B$ follows directly from the Pareto optimality of $A$.
\end{proof}

%%%%%%%%%%%%%%%%%%%%%%%%%%%%%

\section{A setting where an EFX allocation can be computed quickly}\label{sec:same-ranking}

Finally, we describe a setting in which an EFX allocation always
exists and can be computed in polynomial time (counting both the value
queries and all additional computation done by an algorithm). Our result will hold when players have {\em additive valuations with
identical rankings}, meaning that
all players agree on the relative ordering of individual goods.  This
is, for all players $i$ and $j$, and for all goods $g_1$ and $g_2$,
$v_i(g_1) \geq v_i(g_2)$ whenever $v_j(g_1) \geq v_j(g_2)$. This will also yield a polynomial time algorithm for computing an EFX allocation for two players with additive (possibly distinct) valuations.

Requiring identical rankings is not as strong as requiring identical valuations. For example, let
$v_1(g_1) = 1, v_1(g_2) = 2, v_1(g_3) = 4$ and
$v_2(g_1) = 2, v_2(g_2) = 3, v_2(g_3) = 4$. Then the rankings are
identical, but $v_1(\{g_1, g_2\}) < v_1(g_3)$, whereas
$v_2(\{g_1, g_2\}) > v_2(g_3)$.

While strong,
there are certainly real-world contexts where this assumption makes
sense.
For example, if the goods are apartments (with differing square
footage), airline tickets (with differing numbers of stops and classes
of service), or baseball pitchers (with differing statistics), it is
plausible that buyers
generally agree on which goods are more valuable than others, but
disagree on the exact values of these goods.

Our algorithm (Algorithm~\ref{alg:additive-same-ranking}) is reminiscent of our algorithm for finding a $\frac{1}{2}$-EFX allocation for any number of players with subadditive valuations from Section~\ref{sec:apx-algo}, in that we allocate the goods in rounds and ensure that the envy graph is acyclic at the beginning of each round. However, here we never return goods to the pool, and allocate the goods in descending order of value.

\begin{algorithm}[tb]
\centering
\begin{algorithmic}[1]
  \Function{GetEFXAllocationSameRanking}{$n, m, (v_1...v_n)$}
  \State $P \gets \text{Sorted}([m])$ \Comment{Sort in descending order: $P_1 = \max(P)$}
  \ForAll{$i \in [n]$}
  \State $A_i \gets \emptyset$
  \EndFor
  \ForAll{$i \in [m]$}
  \State $j \gets \text{FindUnenviedPlayer}(A_1, A_2...A_n)$
  \State $A_j \gets A_j \cup \{P_i\}$
  \State $(A_1, A_2...A_n) \gets \text{EliminateEnvyCycles}(A_1, A_2...A_n)$
  \EndFor
\Return $(A_1, A_2...A_n)$
\EndFunction
\end{algorithmic}
\caption{Find an EFX allocation for additive valuations with identical ranking}
\label{alg:additive-same-ranking}
\end{algorithm}

Recall that Lemma~\ref{lem:remove-cycles} gives a process that can be used to ensure the envy graph is acyclic: if an envy cycle exists, bundles can be permuted along this cycle such the number of edges in the envy graph decreases by at least one. The function EliminateEnvyCycles repeatedly performs this process until the envy graph is acyclic.

%\begin{lemma}\label{lem:remove-cycles-time}
%Let $A = (A_1, A_2...A_n)$ be a $c$-EFX allocation with envy graph $G = (V,E)$, where $G$ contains a cycle. Then there exists another allocation $B = (B_1, B_2...B_n)$  with envy graph $H$ where
%\begin{enumerate}
%\item $B$ is also $c$-EFX.
%\item $H$ has no cycles.
%\end{enumerate}
%and $B$ can be found in $O(n^3)$ time.
%\end{lemma}
%
%\begin{proof}
%We need only prove that process from Lemma~\ref{lem:remove-cycles} takes $O(n^3)$ time. The process finds a cycle $c$ in the envy graph and permutes the bundles along that cycle. Doing so can take $O(n^2)$ time, since we may have to compute a large part of the envy graph.
%
%This causes the number of edges in the envy graph to decrease by at least one. Therefore the number of times we can find a cycle and permute bundles 
%\end{proof}

\begin{theorem}\label{thm:additive-same-ranking}
For additive valuations with identical rankings, Algorithm~\ref{alg:additive-same-ranking} terminates with an EFX allocation in $O(mn^3)$ time.
\end{theorem}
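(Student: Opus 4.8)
The plan is to establish two things, from which the theorem follows (termination is immediate, since the algorithm runs for exactly $m$ rounds): (i) the algorithm maintains the invariant that the partial allocation at the start of every round is EFX, so the returned allocation is EFX; and (ii) the whole execution takes $O(mn^3)$ time.

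For (i), I would induct on the round $\ell$. The base case is the empty allocation, which is trivially EFX. For the inductive step, assume the partial allocation $A$ at the start of round $\ell$ is EFX and has an acyclic envy graph (true in round $1$ trivially, and otherwise because the previous round ended with \textsc{EliminateEnvyCycles}); acyclicity guarantees a source, so \textsc{FindUnenviedPlayer} returns an unenvied player $j$. The key structural fact — and the only place the identical-rankings hypothesis is used — is that goods are processed in descending order, so every good $h$ already lying in some bundle of $A$ was allocated in an earlier round, whence $v_i(\{h\}) \ge v_i(\{g\})$ for every player $i$, where $g = P_\ell$ is the good allocated this round. Let $B$ be the allocation right after $g$ is given to $j$. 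I would check $v_i(B_i) \ge v_i(B_k \setminus \{h\})$ for every ordered pair $(i,k)$ and every $h \in B_k$: if $k \neq j$ this is EFX of $A$ (using monotonicity if $i=j$); if $k = j$ and $h = g$ then $B_j \setminus \{g\} = A_j$ and $j$ was unenvied in $A$; and if $k = j$ and $h \in A_j$, additivity gives $v_i(B_j \setminus \{h\}) = v_i(A_j) - v_i(\{h\}) + v_i(\{g\}) \le v_i(A_j) \le v_i(A_i) = v_i(B_i)$, where the middle inequality is the descending-order fact and the last is that $j$ was unenvied (the case $i = j$ again following from monotonicity). Thus $B$ is EFX, and \textsc{EliminateEnvyCycles} preserves EFX by Lemma~\ref{lem:remove-cycles} with $c = 1$ while restoring acyclicity, completing the induction; after round $m$ the allocation is complete and EFX.

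For (ii), I would maintain a table $\mathrm{val}[i][k] = v_i(A_k)$, so each envy comparison is $O(1)$ and a DFS can explore the (implicitly represented) envy graph to find a source or a cycle in $O(n^2)$ time. Adding a good to player $j$ updates one column of the table in $O(n)$ time (using $n$ value queries on the new good), and an envy-cycle swap leaves all bundle contents unchanged, merely permuting the table's columns along the cycle in $O(n^2)$ time and requiring no new queries. The one nonroutine estimate is the total number of swaps over the entire run: letting $E_\ell$ be the number of envy edges at the start of round $\ell$, adding $g$ to the unenvied player $j$ creates at most $n-1$ new edges (all directed into $j$, since no other utility changes), each swap deletes at least one edge by Lemma~\ref{lem:remove-cycles}, and edge counts are nonnegative, so the per-round swap count $s_\ell$ satisfies $s_\ell \le E_\ell - E_{\ell+1} + (n-1)$, which telescopes to at most $m(n-1)$ swaps in total. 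Hence swaps cost $O(mn)\cdot O(n^2) = O(mn^3)$, per-round source-finding costs $O(mn^2)$, and table maintenance costs $O(mn)$, for $O(mn^3)$ overall.

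The step I expect to be the main obstacle is the EFX verification in the inductive step, and within it the sub-case where the removed good $h$ lies in the old bundle $A_j$: the usual EF1-style argument only controls $v_i(A_j \setminus \{h\})$, and one genuinely needs both additivity and the "$g$ is the least valuable good allocated so far" property to absorb the extra term $v_i(\{g\})$. A secondary point demanding care is that \textsc{EliminateEnvyCycles} only permutes bundles whose contents were fixed in earlier rounds, so the descending-order property is unaffected by swaps, and that the running-time bookkeeping is arranged so that swaps incur no additional value queries.
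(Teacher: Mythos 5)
Your proposal is correct and follows essentially the same argument as the paper: both exploit the fact that an unenvied player receives only the smallest remaining good (so any future EFX violation toward her is already controlled by the smallest good in her bundle), combine this with additivity to obtain the EFX invariant per round, invoke Lemma~\ref{lem:remove-cycles} to keep the envy graph acyclic, and charge at most $n$ new envy edges per round against at least one edge deleted per cycle swap to bound the number of swaps by $O(mn)$, each of cost $O(n^2)$. Your explicit per-pair inductive verification is a cleaner packaging of the paper's ``at all times, $v_i(A_j)-v_i(A_i)\le v_i(g^*)$'' invariant, and your bookkeeping of the value table is a slightly more careful accounting of the same $O(mn^3)$ bound.
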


\begin{proof}
We first argue that at all times, $v_i(A_j) - v_i(A_i) \leq v_i(g^*)$ where $g^*$ is the good most recently added to what is currently $A_j$. Since bundles may have been permuted by EliminateEnvyCycles, $j$ may not have been in possession of what is currently $A_j$ at the time $g^*$ was added. This does not affect the proof, however: it is sufficient to interpret $A_j$ as ``the bundle that currently belongs to $j$". Thus instead of saying  ``$i$ did not envy $j$ at the time", we will say ``$i$ did not envy $A_j$ at the time".

Observe that a good is only allocated to a player whom no one envies. Thus directly before $g^*$ was added to $A_j$, $i$ did not envy $A_j$: at that point $v_i(A_j) - v_i(A_i) \leq 0$. Therefore directly after $g^*$ was given to $j$, $v_i(A_j) - v_i(A_i) \leq v_i(g^*)$. Since $v_i(A_i)$ can only have grown since then, we have $v_i(A_j) - v_i(A_i) \leq v_i(g^*)$ until a new good is added to $A_j$.

Since the goods are allocated in decreasing order of value, the good most recently added to $A_j$ must also be the least valuable good in $A_j$. Therefore at all times, $v_i(A_j) - v_i(A_i) \leq \min\limits_{g\in A_j} v_i(g)$, and so $v_i(A_i) \geq v_i(A_j) - \min\limits_{g\in A_j} v_i(g)$. For additive valuations, this is equivalent to $v_i(A_i) \geq v_i(A_j \backslash \{g\})$ for all $g\in A_j$. Therefore the allocation at all times is EFX, so the final allocation is EFX.

Finally, we show that Algorithm~\ref{alg:additive-same-ranking}
terminates in $O(mn^3)$ time. Each time a good is allocated, any edges
added to the envy graph must point to the recipient. Thus at most $n$
edges are added to the envy graph on each round, and so at most $mn$
edges are added to the graph over the course of the algorithm. Each
time a cycle is detected and bundles are permuted along that cycle using Lemma~\ref{lem:remove-cycles}, at least one edge is removed from the graph. Therefore this process is performed
at most $mn$ times. Each time this process is performed, we may have to compute a
large part of the envy graph, which can take $O(n^2)$ time. 
Thus the overall running time bound is $O(mn^3)$.
\end{proof}

This algorithm is easily generalizable to
general valuations under the condition that all players agree on a
single ordering of the marginal values of the goods. Specifically,
there must be an ordering of the goods $(g_1, g_2,\ldots,g_m)$ where for
any set $S$, any player $i$, and all $j$, we have
$v_i(S\cup \{g_j\}) \geq v_i(S\cup\{g_{j+1}\})$. This ordering must be
fixed across all sets $S$. Then instead of allocating goods in
descending order of value, we allocate goods in descending order of
marginal value, and the analog of Theorem~\ref{thm:additive-same-ranking} holds, with
essentially the same proof.

Finally, we note that Algorithm~\ref{alg:additive-same-ranking} can be used to compute an EFX allocation for two players with additive (possibly distinct) valuations in polynomial time. We use a cut-and-choose argument similar to that of Theorem~\ref{thm:two-players}: player 1 runs Algorithm~\ref{alg:additive-same-ranking} with two copies of herself to find an allocation which will be EFX from her viewpoint, regardless of which bundle she receives. Then player 2 chooses her favorite bundle in the resulting allocation, so the allocation will be fully envy-free from her viewpoint.

\end{document}